\tikzstyle{box} = [rectangle, rounded corners, 
\tikzstyle{arrow}=[->]
\newtheorem{thm}{Theorem}[section]\crefname{thm}{Theorem}{Theorems}
\newtheorem*{thm*}{Theorem}
\newtheorem{lem}[thm]{Lemma}\crefname{lem}{Lemma}{Lemmas}
\crefname{prop}{Proposition}{Propositions}
\crefname{cor}{Corollary}{Corollaries}
\theoremstyle{definition}
\crefname{def}{Definition}{Definitions}
\theoremstyle{remark}
\newcommand{\RR}{\mathbbm R}
\newcommand{\CC}{\mathbbm C}
\newcommand{\ZZ}{\mathbbm Z}
\newcommand{\HH}{\mathcal H}
\newcommand{\EE}{\mathbb E}
\newcommand{\prob}{\mathbb P}
\newcommand{\rv}[1]{\bm{#1}}
\newcommand{\ot}{\otimes}
\newcommand{\eps}{\varepsilon}
\newcommand{\id}{I}
\newcommand{\bigO}{\mathcal O}
\newcommand{\der}{\textrm{d}}
\newcommand{\cnot}{\textsc{cnot }}
\newcommand{\trot}{S}
\newcommand{\sloperpe}{D}
\DeclareMathOperator{\real}{Re}
\DeclareMathOperator{\tot}{tot}
\DeclareMathOperator{\even}{even}
\DeclareMathOperator*{\argmin}{arg\,min}
\DeclareMathOperator{\stage}{stage}
\DeclareMathOperator{\qpe}{qpe}
\DeclareMathOperator{\trotter}{trot}
\DeclareMathOperator{\rand}{rand}
\DeclareMathOperator{\synth}{synth}
\DeclareMathOperator{\rot}{rot}
\DeclareMathOperator{\op}{op}
\DeclareMathOperator{\gs}{gs}
\DeclarePairedDelimiter\abs{\lvert}{\rvert}
\DeclarePairedDelimiter\norm{\lVert}{\rVert}
\DeclarePairedDelimiter\ceil{\lceil}{\rceil}
\begin{document}

\title{Phase estimation with partially randomized time evolution}

\date{\today}
\author{Jakob G\"{u}nther}
\affiliation{Department of Mathematical Sciences, University of Copenhagen, Universitetsparken 5, 2100 Copenhagen, Denmark}
\author{Freek Witteveen}
\affiliation{Department of Mathematical Sciences, University of Copenhagen, Universitetsparken 5, 2100 Copenhagen, Denmark}
\affiliation{QuSoft and CWI, Science Park 123, 1098 XG Amsterdam}
\author{Alexander Schmidhuber}
\affiliation{Center for Theoretical Physics, MIT, Cambridge, MA, 02139, USA}
\author{Marek Miller}
\affiliation{Department of Mathematical Sciences, University of Copenhagen, Universitetsparken 5, 2100 Copenhagen, Denmark}
\author{Matthias Christandl}
\affiliation{Department of Mathematical Sciences, University of Copenhagen, Universitetsparken 5, 2100 Copenhagen, Denmark}
\author{Aram W. Harrow}
\affiliation{Center for Theoretical Physics, MIT, Cambridge, MA, 02139, USA}

\begin{abstract}
    Quantum phase estimation combined with Hamiltonian simulation is the most promising algorithmic framework to computing ground state energies on quantum computers.
    Its main computational overhead derives from the Hamiltonian simulation subroutine.
    In this paper we use randomization to speed up product formulas, one of the standard approaches to Hamiltonian simulation.
    We propose new partially randomized Hamiltonian simulation methods in which some terms are kept deterministically and others are randomly sampled.
    We perform a detailed resource estimate for single-ancilla phase estimation using partially randomized product formulas for benchmark systems in quantum chemistry and obtain orders-of-magnitude improvements compared to other simulations based on product formulas.
    When applied to the hydrogen chain, we have numerical evidence that our methods exhibit asymptotic scaling with the system size that is competitive with the best known qubitization approaches.
\end{abstract}

\maketitle

\section{Introduction}

Among the most promising applications of quantum computing is the calculation of ground state energies in quantum many-body physics and quantum chemistry.
While in general it is hard for a quantum computer to estimate the ground state energy, quantum phase estimation (QPE) provides an efficient algorithm when a state with significant overlap to the ground state is given.
The complexity of the resulting algorithms for chemistry depends largely on the complexity of simulating the Hamiltonian time evolution as a quantum circuit, as QPE requires time simulation of time $\bigO(\eps^{-1})$ to compute the energy to accuracy $\eps$.
The dominant cost of performing QPE is then given by the cost of Hamiltonian evolution and its scaling with time and system size.
State-of-the-art resource estimates for realistic examples from chemistry show that the computational cost of performing QPE is still formidable \cite{childs2018toward,von2021quantum,su2021fault,lee2021even}. This means that for realizing useful simulations of quantum chemistry, further reductions in the circuit size required to perform Hamiltonian simulation are necessary.
This is particularly relevant for devices with a limited number of logical qubits and limited circuit depth.
See \cite{bauer2020quantum} for a general discussion of QPE for quantum chemistry.

Deterministic product formulas, also known as Trotterization, have the advantage of being conceptually simple and to require little or no ancilla qubits for their implementation as a quantum circuit.
In many situations, their performance can be better than what can be expected from rigorous bounds \cite{babbush2015chemical,childs2018toward} or can even be proven to have good scaling, e.g. in systems with spatially local interactions \cite{childsTheoryTrotterError2021,low2023complexity,su2021nearly}.
There also exist powerful post-Trotter Hamiltonian simulation methods \cite{low2017optimal,low2019hamiltonian}; while they have better scaling, methods based on product formulas are still of great interest due to the small number of ancilla qubits required and their good performance for Hamiltonians of interest \cite{childs2018toward}.

This work is motivated by quantum chemical Hamiltonians.
For such a Hamiltonian, defined on an active space of $N$ spatial orbitals (corresponding to $2N$ qubits in second quantization), the number of terms in the Hamiltonian scales as $L = \bigO(N^4)$.
Trotter product formulas have a computational cost which scales with $L$.
While this leads to polynomial scaling in $N$, the resulting circuits rapidly grow impractically large.

An alternative is provided by \emph{randomized product formulas}, in particular qDRIFT \cite{campbell2019random}.
Here, the number of gates no longer depends on $L$, but rather on $\lambda$, the sum of the interaction strengths in the Hamiltonian.
If one wants to simulation Hamiltonian evolution for time $t$, this requires $\bigO(\lambda^2 t^2)$ gates.
So, while this gets rid of the dependence on $L$, the downside is that the scaling is quadratic in the time (and $\lambda$).
In particular, since QPE with target accuracy $\eps$ requires Hamiltonian simulation for times up to $\bigO(\eps^{-1})$, using randomized product formulas leads to phase estimation costs scaling with $\eps^{-2}$ rather than the optimal Heisenberg-limited scaling of $\eps^{-1}$.
Prior work uses an analysis which suggests that combining QPE with qDRIFT leads to an unfavorable scaling of the total cost with the accuracy of the computation \cite{campbell2019random}.
In particular, a naive analysis leads to a scaling of order $\eps^{-4}$ or an undesirable dependence on the probability of error \cite{lee2021even}.
Using a different randomization scheme, however, as well as a different phase estimation scheme, one can restore the scaling to $\eps^{-2}$ \cite{wan2022randomized}.

A natural generalization of randomized product formulas is to consider \emph{partially randomized} product formulas \cite{ouyang2020compilation,jin2023partially,hagan2023composite}, which are effective if there is a relatively small number of dominant terms in the Hamiltonian, and a long tail of small terms (which are nevertheless too large to be ignored altogether). Such behavior is typical for quantum chemistry Hamiltonians; see \cref{sec:electronic structure main text} and \cref{sec:electronic structure problem} for a detailed discussion.

\subsection{Results and organization}

\begin{figure}
    \centering
    \includegraphics[width=\linewidth]{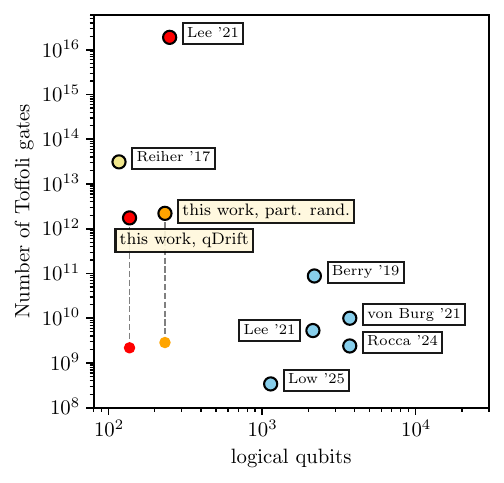}
    \caption{Total Toffoli gate count and logical qubit estimates for ground state energy estimation for FeMoco using the active space of \cite{reiher2017elucidating}, with target error $\eps = 0.0016$ and assuming an initial state equal to the ground state. The blue-colored data points use qubitization \cite{berry2019qubitization,von2021quantum,lee2021even,rocca2024reducing,low2025fast}, the yellow data point is the most optimistic estimate from \cite{reiher2017elucidating} using a deterministic product formula, the red data points use randomized product formulas from \cite{lee2021even} and this work, the orange data point uses partial randomization.
        We also show the Toffoli depth (i.e.~the maximum number of Toffoli gates per circuit), assuming a high ground state overlap state and choosing $\xi = 0.1$, using (partially) randomized product formulas. The vertical dashed lines connect the Toffoli depth to the total Toffoli count, to illustrate how multiple circuit runs are used.
    }
    \label{fig:femoco intro}
\end{figure}

We start by setting up notation for the relevant class of Hamiltonians and single-ancilla phase estimation in \cref{sec:qpe} and introduce background material on the Trotter formalism in \cref{sec:trotter}.
We discuss randomized product formulas in \cref{sec:random product}.
Our first contribution is an improved analysis for the cost of randomized product formulas for single-ancilla QPE, achieving reduced cost compared to previous estimates \cite{lee2021even,wan2022randomized,campbell2019random} and reduced maximal circuit depth. This result is stated as \cref{thm:rpe qDRIFT main text}.
In \cref{fig:femoco intro} we show the resulting resource estimates for the challenge benchmark problem FeMoco \cite{reiher2017elucidating} and compare with previous work. We see multiple order of magnitude improvements in the performance of the randomized product formulas.
Our improvement compared to \cite{lee2021even} by more than two orders of magnitude breaks down as follows.
We find a value of $\lambda$ which is a factor 5 smaller, reducing the cost by a factor 25. The overhead of the phase estimation routine is reduced by a factor of 38 (while improving accuracy guarantees). Indeed, \cite{lee2021even} finds a requirement of approximately $305 \lambda^2 \eps^{-2}$ Pauli rotations, while we require approximately $8\lambda^2 \eps^{-2}$ Pauli rotations. Finally, as explained in \cref{sec:gate count}, we give a compilation from Pauli rotations into Toffoli gates using a factor 10 fewer gates by combining techniques from \cite{gidney2018halving,sanders2020compilation,lee2021even,wan2022randomized}.

\cref{fig:femoco intro} assumes an exact eigenstate, and aims to minimize total resources. More realistically, we should assume we start with a state with (high) ground state overlap.  Our methods can accommodate such states, but using an exact eigenstate facilitates comparison with the previous methods.
If one can prepare a state with high ground state overlap, using robust phase estimation in combination with randomized product formulas has the effect of using multiple circuits of smaller size. This can be useful for early fault-tolerant devices and paralellization.
To make this concrete, let us again consider the example of FeMoco.
If one indeed can prepare exact eigenstates, or states with sufficiently high ground state overlap, one can trade the gate count per circuit further against using more circuits. If we choose the trade-off parameter $\xi = 0.1$ in \cref{thm:rpe qDRIFT main text}, we need 4741 circuits, with a maximum of $6.4 \times 10^8$ Pauli rotations per circuit (as also indicated in \cref{fig:femoco intro}).
This choice of $\xi$ is also compatible with a state with ground state overlap $\eta > 0.9$, and it has been argued that for a larger active space of FeMoco achieving a squared overlap of approximately $0.95$ is realistic (but the cost of preparing this state is significant) \cite{berry2024rapid}.
As is also clear from \cref{fig:femoco intro}, the partially randomized method (which scales better with the precision $\eps$) benefits less from this reduced circuit depth.

\begin{figure}
    \centering
    \includegraphics[width=\linewidth]{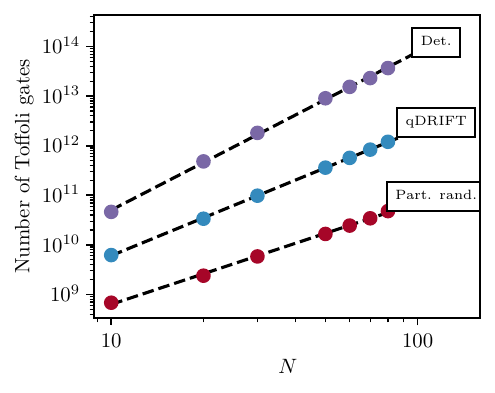}
    \caption{Gate counts for ground state energy phase estimation for precision $\eps = 0.0016$ for an $N$ atom hydrogen chain with a minimal basis set, using deterministic, randomized and partially randomized product formulas.}
    \label{fig:hchain cost}
\end{figure}

The main contribution of this work is the study of partially randomized product formulas.
In \cref{sec:composite} we propose a modification of the scheme of Hagan and Wiebe~\cite{hagan2023composite}. We give a simple error analysis for our scheme when used in single-ancilla QPE, improving on the bounds that would be obtained by analyzing the product formula and the QPE separately.
As an example of a significant cost improvement from partially randomized product formulas, we show in \cref{fig:hchain cost} cost estimates for the hydrogen chain with a minimal basis (a standard benchmark system to explore the scaling of the thermodynamic limit). We observe an improvement of more than an order of magnitude for partial randomization compared to fully randomized product formulas, which in turn outperform deterministic product formulas.
Additionally, when varying both the required accuracy $\eps$ and the number of hydrogen atoms $N$, for the fully randomized method we have a scaling of $\bigO(N^{2.5} \eps^{-2})$, while for the partially randomized product formula we get $\bigO(N^{2} \eps^{-1.7})$.
While the scaling with $\eps$ is suboptimal, the scaling with $N$ is competitive with qubitization using hypertensor contraction \cite{lee2021even}, which has a scaling of $\bigO(N^{2.1} \eps^{-1})$.

For electronic structure problems, we show that partial randomization can be used to improve the performance of product formulas which decompose the Hamiltonian based on a \emph{factorization} of the Coulomb tensor \cite{motta2021low,von2021quantum}.
For our quantum chemistry resource estimates, we perform a numerical estimate of the Trotter error for the hydrogen chain and a benchmark set of small molecules, and optimize of the representation of the Hamiltonian for improved performance.

Our work performs an end-to-end analysis, studying aspects of the modeling of electronic structure Hamiltonians, Hamiltonian simulation and its use as a subroutine in phase estimation.
In \cref{fig:overview} we give an overview of our main technical contributions and their relation.

\begin{figure}
\begin{tikzpicture}
\def\w{2.2}  
\def\h{1.5}  

\filldraw[fill=green!20!white,draw=black,rounded corners] (0 -\w/2, 1-\h/2) rectangle (0+\w/2, 1+\h/2);
\node at (0,1) [text width=2cm, text centered] {\textbf{Hamiltonian simulation}};
\filldraw[fill=blue!20!white,draw=black,rounded corners] (6-\w/2, 1-\h/2) rectangle (6+\w/2, 1+\h/2);
\node at (6, 1) [text width=2cm, text centered] {\textbf{quantum phase estimation}};
\filldraw[fill=red!20!white,draw=black,rounded corners] (3-\w/2, -3-\h/2) rectangle (3+\w/2, -3+\h/2);
\node at (3, -3) [text width=2cm, text centered] {\textbf{chemistry model}};

\node at (0,2) [anchor=south,text width=2.8cm, text centered] {improved compilation (\cref{sec:gate count})};
\node at (5.8,2) [anchor=south,text width=3cm,text centered] {cost of single-ancilla QPE (\cref{sec:qpe,sec:rpe})};
\node at (3,-4) [anchor=north,text width=6cm, text centered] {optimized Hamiltonian representation \& random/deterministic partition (\cref{sec:electronic structure main text,sec:optimize hamiltonian})};

\draw (0+\w/2, 1) -- (6 - \w/2, 1) node[midway,text width=2.7cm, text centered,fill=white] {analysis of part. random product formulas via LCUs (\cref{sec:composite})};
\draw (0, 1-\h/2) -- (3-\w/2, -3) node[midway,pos=0.4,text width=3.3cm, text centered,fill=white] {empirical Trotter error estimates (\cref{sec:electronic structure main text,sec:trotter appendix})};
\draw (3 + \w/2, -3) -- (6, 1-\h/2) node[midway,pos=0.6,text width=3cm,text centered,fill=white] {high-overlap enabled depth reduction (\cref{sec:random product})};

\end{tikzpicture}
\caption{An overview of the topics and contributions in this work. The combined improvements lead to the resource estimates in \cref{fig:femoco intro} and \cref{fig:hchain cost}.}
\label{fig:overview}
\end{figure}

\subsection{Comparison to previous work}
We first discuss previous work on phase estimation using randomized product formulas.
Previous state-of-the-art resource estimates for QPE using qDRIFT were given in \cite{lee2021even}.
That work presents three resource analyses of phase estimation using qDRIFT, based on diamond norm bounds for the qDRIFT error. Directly requiring mean squared error $\eps^2$ lead to a scaling of the number of gates of $\bigO(\lambda^4 \eps^{-4})$, which is excessively expensive.
The data point on \cref{fig:femoco intro} is their improvement which uses a less stringent error measure, namely a $95\%$ confidence interval of width $\eps$.
Finally, \cite{lee2021even} gives a third approach, where the goal is to use a Hodges-Lehmann estimator on multiple samples of the QPE outcome. This reduces the phase estimation overhead by an order of magnitude, but relies on the non-rigorous assumption that the qDRIFT error is symmetric.
For reference, \cref{fig:femoco intro} also compares with state-of-the-art estimates based on qubitization. These have better scaling in the precision and better overall performance, but do require more ancilla qubits in order to implement linear combinations of unitaries.

For single-ancilla phase estimation diamond norm bounds are not needed, and this was used in \cite{wan2022randomized}. That work additionally proposed a modified randomized product formula, which we call Randomized Taylor Expansion (RTE). This has the property that it can be used to give an unbiased estimator of the time evolution unitary in single-ancilla QPE.
Our analysis extends this idea to qDRIFT, which can be seen as only keeping the first order in the Taylor expansion.
We also note that if one can prepare a state with high ground state overlap, one can break up the computation in multiple smaller circuits.
We can compare our estimates for FeMoco to the ones in \cite{wan2022randomized}.
The algorithm in \cite{wan2022randomized} is based on computing the cumulative density function of the spectrum, $C(x)$, and finding the ground state energy as the first jump of this function.
Here, we aim for error $\eps$ with failure probability at most $0.1$.
Assuming an exact eigenstate and finding a crossing point for $C(x) < 0.3$ or $C(x) > 0.7$, leads to an estimate of approximately $4500$ 
circuits with an \emph{average} number of $7.9 \times 10^{9}$ 
Pauli rotations.
Here, we have used the results from Figure 2 in \cite{wan2022randomized}, and rescaled to use our optimized value of $\lambda$. 
We see that robust phase estimation requires shorter depth circuits (without increasing the total required number).

Partial randomization schemes have also been studied previously.
There are various different proposals \cite{kivlichan2019phase,ouyang2020compilation,jin2023partially,hagan2023composite,rajput2022hybridized,kiss2023importance}.
Our scheme is a modification of that proposed in \cite{hagan2023composite}.
The difference is that we replace qDRIFT by the RTE technique from \cite{wan2022randomized} to `unbias' the product formula, which is helpful when using single-ancilla QPE.
Indeed, this means that when analyzing QPE, we do not need to use diamond norm bounds (as derived in \cite{hagan2023composite}), which would lead to worse constant factors. Since we find that in practice partial randomization tends to lead to (modest) constant improvements, it is important to have a sharp cost analysis.

\section{Single-ancilla quantum phase estimation}\label{sec:qpe}
We consider Hamiltonians which are a sum of local terms $H = \sum_l H_l$.
A special case of interest is when
\begin{align}\label{eq:pauli hamiltonian}
    H = \sum_{l=1}^L H_l = \sum_{l=1}^L h_l P_l, \qquad h_l \in \RR
\end{align}
where $P_l \in \pm \{\id, X, Y, Z\}^{\ot N_q}$ are Pauli operators on $N_q$ qubits.
Our main motivation derives from molecular electronic structure Hamiltonians in second quantization, as we will discuss in \cref{sec:electronic structure main text} onward.
In that case, we have $N_q = 2N$, where $N$ is the number of molecular orbitals, and $L = \bigO(N^4)$.
We define the \emph{weight} of the Hamiltonian in \cref{eq:pauli hamiltonian} to be
\begin{align}\label{eq:lambda}
    \lambda = \sum_l \abs{h_l}.
\end{align}
The eigenvalues of $H$ and the associated eigenvectors are denoted by $E_k$ and $\ket{\psi_k}$, respectively, where $E_0 \leq E_1 \leq E_2 \leq \dots$.
Our task is to compute the ground state energy $E_0$ up to precision $\eps > 0$.
If we know a \emph{guiding state} $\ket{\psi}$ with significant overlap to the ground state $\ket{\psi_0}$, we can use a quantum computer to approximate $E_0$ by applying QPE to the time evolution operator and the initial state: $e^{-it H} \ket{\psi}$.
For many problems involving the electronic structure of molecules, it is feasible to find good guiding states using established methods see \cite{tubman2018postponing,lee2023evaluating,fomichev2023initial,erakovic2024high,berry2024rapid,morchen2024classification} for elaborate discussion and further references.

We consider versions of QPE based using only a single ancilla qubit, which use the \emph{Hadamard test}, illustrated in \cref{fig:hadamard}.
Performing QPE using only the Hadamard test (without a quantum Fourier transform circuit) has an extensive history \cite{kitaev1995quantum,dobvsivcek2007arbitrary,kimmel2015robust,wiebe2016efficient,somma2019quantum,lin2022heisenberg}.

We start with the guiding state $\ket{\psi}$, and a single ancilla qubit is set to $\ket{+} = H\ket{0}$ .
Having applied $U$, controlled on the ancilla qubit, we measure the ancilla with respect to the basis $\ket{0} \pm e^{i\theta} \ket{1}$.
This gives a random variable $\rv{Z}_{\theta}$ which has outcomes $\pm 1$.
The Hadamard test has expectation values
\begin{align}
    \EE \rv{Z}_{\theta} = \real e^{i\theta} \bra{\psi} U \ket{\psi}.
\end{align}
In particular, letting $\rv{X}$ and $\rv{Y}$ denote the random variables obtained from $\theta = 0, \pi/2$ and setting $\rv{Z} = \rv{X} + i\rv{Y}$, we have
\begin{align}
    \EE \rv{Z} = \bra{\psi} U \ket{\psi}.
\end{align}
We will focus on the case when $U = U(t) := e^{-it H}$.

If we denote by $\rv{Z}_{\theta}(t)$ the outcomes of the Hadamard test using $U = U(t)$, then
\begin{align}\label{eq:signal hadamard test}
    \begin{split}
        g(t) := \EE \rv{Z}(t) & = \bra{\psi} \exp(-iHt) \ket{\psi} \\
                              & = \sum_k c_k \exp(-i E_k t),
    \end{split}
\end{align}
where $c_k = \abs{\braket{\psi | \psi_k}}^2$. We can approximate $g(t)$ by taking multiple independent samples $\rv{Z}^{(n)}(t)$ and take the mean $\overline{\rv{Z}}(t) = \frac{1}{N} \sum_{n=1}^N \rv{Z}^{(n)}(t)$.

If we think of $g(t)$ as a time signal, then the phase estimation routine will constitute a signal processing transformation to compute the lowest frequency of $g(t)$ (corresponding to the energy $E_0$) \cite{somma2019quantum}, provided that we have some guarantee on the overlap of $\ket{\psi}$ with the ground state; we assume a lower bound $c_0 \geq \eta$.
With appropriate signal processing methods, one can find the value of $E_0$ with accuracy $\eps$ using $m$ circuits with time evolution for times $t_1, \dots, t_m$. This can be done such that the maximal time evolution $t_{\max} = \max \{t_1, \dots, t_m\}$ and the total time over all circuit runs $t_{\tot} = t_1 + t_2 + \dots + t_m$ both scale as $\eps^{-1}$.
This \emph{Heisenberg scaling} is known to be optimal \cite{giovannettiQuantumMetrology2006}.
When using a Hamiltonian simulation method to simulate the time evolution, $t_{\max}$ is what determines the maximal number of gates per circuit and $t_{\tot}$ determines the total gate count.
We measure the error of the QPE by its (root) mean square error.
In our resource estimates we consider two different scenarios. In the first scenario, we are given access to the exact eigenstate $\ket{\psi_0}$ (so $c_0 = 1$). In the more realistic second scenario, we consider a state $\ket{\psi}$ with an overlap bound $\eta < 1$.

\begin{figure}
    \centering
    \begin{tikzcd}
        \lstick{$\ket{0}$} & \gate{H} & \ctrl{1} & \gate{R(\theta)} & \gate{H} & \meter{}\\
        \lstick{$\ket{\psi}$} & \qw & \gate{U} & \qw & \qw & \qw
    \end{tikzcd}
    \caption{The Hadamard test for estimating $\rv{Z}(\theta)$. Here $R(\theta) = \proj{0} + e^{i\theta} \proj{1}$ is a phase gate.}
    \label{fig:hadamard}
\end{figure}

There are various ways to extract the ground state energy by sampling from $\rv{Z}(t)$ which achieve Heisenberg scaling \cite{wiebe2016efficient,lin2022heisenberg,ding2023even}.
A particularly elegant method is \emph{robust phase estimation} \cite{higgins2009demonstrating,kimmel2015robust,belliardo2020achieving,ni2023low}.
For now, we assume that the Hadamard test is implemented using \emph{exact} time evolution, and we measure the cost in terms of the time evolution required.
This method assumes that $\eta$ is sufficiently large.
The idea of robust phase estimation is to estimate the time evolution signal at times $t = 2^m$ for $m = 1, \dots, M$ with $M = \ceil{\log \eps^{-1}}$.
For each $m$, the expectation value of the Hadamard test outcome $\rv{Z}(2^m)$ is
\begin{align}
    g(2^m) = \sum_k c_k \exp(-i 2^m E_k).
\end{align}
We repeat the test $N_m$ times and obtain the average $\overline{\rv{Z}}(2^m)$.
The integer $M$ corresponds to the number of precision bits in the estimate of $E_0$.
With each round we update a guess $\theta_m$ for $E_0$.
Given an outcome $\overline{z}(2^m) = r_m\exp(- i \phi_m)$ of the random variable $\overline{\rv{Z}}(2^m)$ and an estimate $\theta_{m-1}$ for $E_0$ from the previous round, the new estimate $\theta_m$ is the number which is compatible with the value $\phi_m$ in the sense that $2^m \theta_m$ equals $\phi_m$ modulo $2\pi$, and is close to the estimate for $\theta_{m-1}$ from the previous round.
That is, $\theta_m = 2^{-m}(\phi_m + 2\pi k)$ for the integer $k$ such that this is closest to $\theta_{m-1}$.
It can be proven \cite{higgins2009demonstrating,kimmel2015robust,belliardo2020achieving,ni2023low} that for $\eta \geq 4 - 2\sqrt{3} \approx 0.54$ this gives a correct algorithm with maximal evolution time $t_{\max} = \bigO(\eps^{-1})$ and total evolution time $t_{\tot} = \bigO(\eps^{-1})$.
In molecular electronic structure problems, in most cases it is possible to reach this ground state overlap regime \cite{tubman2018postponing,fomichev2023initial,erakovic2024high,berry2024rapid}.
In situations with smaller values of $\eta$, there exist various other single-ancilla phase estimation schemes \cite{lin2022heisenberg,obrien2019quantum,ding2023even}. The methods in this work apply directly to these phase estimation schemes as well.
The robust phase estimation scheme has the advantage that if the overlap is close to 1, one can reduce the maximal required time evolution, at the cost of increasing the total time evolution \cite{ni2023low}.
Indeed, if $\eta = 1 - \alpha$ for small $\alpha$, we may take a trade-off parameter $\Omega(\alpha) = \xi \leq 1$ (i.e. we can take $\xi$ small if the ground state overlap is large), and obtain
\begin{align}\label{eq:rpe max time reduction}
    t_{\max} = \bigO\left(\frac{\xi}{\eps}\right), t_{\tot} = \bigO\left(\frac{1}{\xi \eps} \right).
\end{align}
This result is due to \cite{ni2023low}; for completeness we provide a precise statement and proof in \cref{thm:rpe}.

\section{Deterministic product formulas}\label{sec:trotter}
The previous section involved approximating $e^{-iHt}$ as a subroutine.  In this section, we explain deterministic product formulas, also known as the Trotter method, one of the leading approaches to Hamiltonian simulation, on which our work will build.
We refer to \cite{childsTheoryTrotterError2021} for an extensive discussion and references.
Given a Hamiltonian $H = \sum_{l=1}^L H_l$ consisting of $L$ terms, we may approximate for a short time evolution for time $\delta$ as
\begin{align}
    U(\delta) \approx \trot_1(\delta) := e^{-i \delta H_L}e^{-i \delta H_{L-1}} \cdots e^{-i \delta H_2} e^{-i \delta H_1}.
\end{align}
This is a first order product formula. A second order product formula is
\begin{align}
    \trot_2(\delta) := \left(e^{-\frac{i}{2}\delta H_1} \cdots e^{-\frac{i}{2}\delta H_L} \right) \left(e^{-\frac{i}{2}\delta H_L} \cdots e^{-\frac{i}{2}\delta H_1}\right)
\end{align}
and higher order Suzuki-Trotter product formulas can be recursively defined by
\begin{align}
    \trot_{2k}(\delta) = (\trot_{2k-2}(u_k \delta))^2 \trot_{2k-2}((1 - 4u_k)\delta) (\trot_{2k-2}(u_k \delta))^2
\end{align}
for $u_k = (4 - 4^{1 / (2k - 1)})^{-1}$.
For the Trotter-Suzuki product formula of order $2k$, $\trot_{2k}(\delta)$ requires $2L5^{k-1}$ unitaries of the form $e^{-i \varphi H_l}$.
Compiling $e^{-i \varphi H_l}$ into elementary gates gives a quantum circuit. For example, if $H_l = h_l P_l$ are Pauli operators, $e^{-i \varphi H_l}$ is a Pauli rotation, which can straightforwardly be compiled into elementary gates.

In general, we write $\trot_p(\delta)$ for a $p$-order product formula; meaning that $\trot_p(\delta)$ is a product of time evolutions along the $H_{l}$.
The order refers to the order of approximation in the sense that the error between $U(\delta)$ and $\trot_p(\delta)$ is $\bigO(\delta^{p+1})$.
Time evolution for arbitrary time $t$ can be approximated by dividing $t$ into a sufficiently large number of intervals $r$ and discretize $U(t)$ into $r$ \emph{Trotter steps}
\begin{align}
    U(t) = U(t/r)^r \approx \trot_p(t/r)^r
\end{align}
with time step $\delta = t / r$.

We will now discuss the error due to the approximation of $U(\delta)$ by $\trot_p(\delta)$, which we will refer to as the \emph{Trotter error}, in more detail.
If one wants to simulate $U(\delta)$, a good error measure is the operator norm difference, which scales as
\begin{align}
    \norm{U(\delta) - \trot_p(\delta)}_{\op} \leq C_{\op} \delta^{p+1}.
\end{align}
Here we have written the prefactor
\begin{align}
    C_{\op} = C_{\op}(p,\{H_l\})
\end{align}
explicitly. We refer to this constant (and similar parameters, with respect to different error measures, defined below) as the \emph{Trotter constant}. While constant with respect to the choice of step size $\delta$, it does depend on the order $p$ and the Hamiltonian and its decomposition into terms $H_l$ (and in that way on the system size). This dependence is important.

For the purpose of phase estimation for ground state energies, we can consider a different error metric.
For fixed $\delta$, we may write $\trot_{p}(\delta) = \exp(-i \delta \tilde H)$ for some effective Hamiltonian $\tilde H$ which is close to $H$ for small $\delta$.
If we apply QPE to the unitary $S_p(\delta)$, this is equivalent to performing QPE to the effective Hamiltonian $\tilde H$, and we can find an estimate for $\tilde E_0$, the ground state energy of $\tilde H$ (note that $\tilde H$ depends on $\delta$ as well).
The outcome of the QPE now has a \emph{bias} due to the Trotter error, and in order to obtain an estimate of $E_0$ to precision $\eps$, we need that $\abs{E_0 - \tilde E_0} = \bigO(\eps)$.
The scaling of this bias is
\begin{align}\label{eq:ground state error trotter}
    \abs{E_0 - \tilde{E}_0} \leq C_{\gs} \delta^{p}
\end{align}
for a Trotter constant $C_{\gs} = C_{\gs}(p, \{H_l\})$.
This means that $\delta = \bigO((C_{\gs}^{-1}\eps)^{1/p})$ suffices for precision $\eps$.
Phase estimation needs $\bigO(\delta^{-1}\eps^{-1})$ applications of $\trot_{p}(\delta)$ to estimate $\tilde E_0$ to precision $\bigO(\eps)$.
Together, this yields a scaling of $\bigO(C_{\gs}^{\frac{1}{p}}\eps^{-1 - \frac{1}{p}})$ Trotter steps.

\begin{figure*}[t]
    \centering
    \begin{quantikz}
        \lstick{$\ket{0}$} & \gate{H} & \ctrl{1}\gategroup[2,steps=3,style={dashed, color=gray, inner sep=6pt}]{Controlled $\trot_p(\delta)$} & \qw \ \ldots\ & \ctrl{1} & \qw \ \ldots \ & \gate{R(\theta)} & \gate{H} & \meter{}\\
        \lstick{$\ket{\psi}$} & \qw & \gate{e^{-i \delta_1 H_{l_1}}} & \qw \ \dots\ & \gate{e^{-i \delta_I H_{l_I}}} & \qw \ \ldots \ & \qw & \qw & \qw
    \end{quantikz}
    \vspace{0.1cm}

    \caption{Performing the  Hadamard test on $\trot_p(\delta)^s$ allows one to estimate $\bra{\psi} \trot_p(\delta)^{s} \ket{\psi} = \bra{\psi} \exp(-i\delta s \tilde{H}) \ket{\psi}$.}
    \label{fig:trotter circuit}

\end{figure*}

The operator norm error can be bounded in the following way \cite{childsTheoryTrotterError2021}: 
\begin{align}\label{eq:operator norm error trotter}
    \norm{U(\delta) - \trot_{p}(\delta)}_{\op} \leq \alpha_{p}(\{H_l\})\delta^{p+1} = \bigO((\lambda \delta)^{p+1}).
\end{align}
Here $\alpha_{p}(\{H_l\})$ is an expression in terms of operator norms of nested commutators of the terms $H_l$ in the Hamiltonian, and $\lambda$ is the weight defined in \cref{eq:lambda}.
One can then use these to bound
\begin{align}
    \abs{E_0 - \tilde{E}_0} = \bigO\mleft( \delta^{-1} \norm{U(\delta) - \trot_{p}(\delta)}_{\op}\mright) = \bigO(\lambda^{p+1} \delta^{p}).
\end{align}
The commutator bounds are often quite sharp \cite{childs2018toward,childsTheoryTrotterError2021}; a disadvantage of such bounds is that the number of commutator terms scales as $\bigO(L^{p+1})$ and may be practically infeasible to compute even for modest system sizes \cite{poulin2014trotter}.

In \cref{sec:electronic structure main text}, we numerically estimate the Trotter error constant
\begin{align}
C_{\text{gs}} := \Delta E \delta^{-p-1},
\label{eq:Cgs-def}
\end{align}
from the ground-state energy error $\Delta E$, and see that it is typically relatively small.
This is an important aspect of product formula algorithms for phase estimation, since it means that Trotter methods may in practice use larger step sizes $\delta$ than one can rigorously justify and can be implemented using fewer resources than rigorous upper bounds suggest.
Because of the relevance of the Trotter error for Hamiltonian simulation, this has been studied extensively in previous work; see \cite{poulin2014trotter,wecker2014gate,babbush2015chemical,reiher2017elucidating,kivlichan2020improved} for empirical and rigorous Trotter error bounds for phase estimation for quantum chemistry.
Additionally, there is theoretical evidence that for most states the Trotter error is significantly lower than the worst-case bounds \cite{sahinouglu2021hamiltonian,chen2024average,zhao2022hamiltonian}.

\section{Phase estimation with random compilation for time evolution}\label{sec:random product}
Two major downsides to Trotter product formulas are their scaling with $L$, and a difficult to rigorously control Trotter error.
One can avoid both these features using \emph{randomized product formulas} \cite{campbell2019random,wan2022randomized,chen2021concentration,kiss2023importance,granet2024hamiltonian,kiumi2024te}.
Here we discuss two randomized methods of implementing an approximation to the time evolution $U(t)$:
qDRIFT \cite{campbell2019random} and a randomized Taylor series expansion (RTE) \cite{wan2022randomized}.

We now assume that the Hamiltonian consists of a linear combination of Pauli operators $P_l$, since for these randomized methods we will use that $P_l^2 = \id$. We will denote by
\begin{align}
    V_l(\phi) = e^{-i \phi P_l}
\end{align}
a rotation along the Pauli operator $P_l$, and refer to these as Pauli rotations.
Absorbing signs into the Pauli operators (i.e. replacing $P_l \mapsto \mathrm{sgn}(h_l) P_l$), we may write the Hamiltonian as
\begin{align}
    H = \lambda \sum_{l=1}^L p_l P_l, \quad p_l = \frac{\abs{h_l}}{\lambda}
\end{align}
where the numbers $p_l$ form a probability distribution $p$.
For convenience of notation we will now assume that $\lambda = 1$; it can be reinstated by rescaling the time variable by $\lambda$.
To approximate time evolution for time $t$, we again divide $t$ into $r$ steps, with a step size $\tau$ so that $\tau r = t$.
The qDRIFT approximation to time evolution is given by sampling random variables $\rv{l_1}, \dots, \rv{l_r}$ according to $p$, and given outcomes $l_1, \dots, l_r$, apply the unitary
\begin{align}
    V_{l_r}(\varphi) \cdots V_{l_1}(\varphi), \quad \text{where} \, \, \varphi = \arctan(\tau).
\end{align}
That is, we apply a sequence of equiangular Pauli rotations, and which ones we apply is sampled proportionally to the weight of that Pauli operator in the Hamiltonian.
If $\mathcal S_{\tau}$ denotes the quantum channel which with probability $p_l$ applies $V_l(\varphi)$, the above procedure leads to the composition $\mathcal S_{\tau}^r$.

This method was proposed in \cite{campbell2019random}.
In order to measure the accuracy of the approximation of the exact time evolution, we have to compare the exact time evolution unitary with the quantum channel $\mathcal S_{\tau}$.
The natural measure to compare two quantum channels is the diamond norm.
We let $\mathcal U_t$ denotes the quantum channel corresponding to applying the time evolution unitary $U(t)$ mapping $\rho \mapsto U(t)^\dagger \rho U(t)$. It was shown in \cite{campbell2019random} that if we break the total time $t$ into $r$ shorter steps of time $\tau$, so $\tau r = t$, then we have the diamond norm bound
\begin{align}
    \norm{\mathcal U_t - \mathcal S_{\tau}^r}_{\diamond} \leq 2r\tau^2 \exp(2\tau) \approx 2 t^2 / r.
\end{align}
This means that choosing a number of steps $r \geq 2t^2 \eps^{-1}$, or reinstating the dependence on $\lambda$, $r \geq 2\lambda^2 t^2 \eps^{-1}$, suffices to approximate time evolution for time $t$ to error $\eps$ in diamond norm.
In particular, there is no dependence on $L$, but only on $\lambda$. This is potentially beneficial if $L$ is large and there is a large number of terms with small coefficients in the Hamiltonian.

This can then be used to simulate time evolution in quantum phase estimation.
Since phase estimation requires time evolution for time $t = \Theta(\eps^{-1})$, this would lead to a scaling of $r = \bigO(\eps^{-2})$ Pauli rotations.
However, directly applying the diamond norm bounds with qDRIFT in the standard QPE algorithm leads to suboptimal results \cite{campbell2019random,lee2021even}.
In general, one can ask whether using qDRIFT leads to more efficient phase estimation procedures than using Trotterization.
There is no unambiguous answer: the randomized method scales worse with $\eps$, and the relation between $L$ and $\lambda$ depends on the Hamiltonian.
An advantage for qDRIFT is most likely if $L$ is large, which is the case for molecular systems using atomic orbitals.
Here, \cite{campbell2019random} found an advantage, but using very loose upper bounds for the Trotter error. In \cite{lee2021even}, for FeMoco it was found that qDRIFT was much less efficient than the Trotter estimate with optimistic Trotter error estimates from \cite{reiher2017elucidating} (see \cref{fig:femoco intro}).

\begin{figure*}[t]
    \begin{quantikz}
        \lstick{$\ket{0}$} & \gate{H}\slice[style={color=gray}]{Sample $l_1 \sim p$} & \ctrl{1} & \qw \ \ldots\ & \qw\slice[style={color=gray}]{Sample $l_r \sim p$}& \ctrl{1} & \qw & \gate{R(\theta)} & \gate{H} & \meter{}\\
        \lstick{$\ket{\psi}$} & \qw & \gate{V_{l_1}(\varphi)} & \qw \ \dots\ & \qw & \gate{V_{l_r}(\varphi)} & \qw & \qw & \qw & \qw
    \end{quantikz}\\
    \vspace{0.1cm}
    \begin{quantikz}
        \lstick{$\ket{0}$} & \gate{H}\slice[style={color=gray}]{sample $n=0$\\ $l_1 \sim p$} & \ctrl{1} & \qw\slice[style={color=gray}]{sample $n=2$\\ $l_k, l_k' \sim p$} \ \ldots\ & \qw & \ctrl{1} & \ctrl{1} & \qw \ \ldots\ & \qw\slice[style={color=gray}]{sample $n=0$\\ $l_r \sim p$} & \ctrl{1} & \qw & \gate{R(\theta)} & \gate{H} & \meter{}\\
        \lstick{$\ket{\psi}$} & \qw & \gate{V_{l_1}(\varphi)} & \qw \ \dots\ & \qw & \gate{V_{l_k}(\varphi_2)} & \gate{P_{l_{k}'}} & \qw \ \dots\ & \qw & \gate{V_{l_r}(\varphi)} & \qw & \qw & \qw & \qw
    \end{quantikz}

    \caption{Hadamard test for qDRIFT and for RTE. For qDRIFT, $\tau = t/r$ and $\varphi = \arctan(\tau)$. RTE is similar, but now for each gate one samples an (even) order $n$ of the Taylor expansion. If $n=0$, one samples $l$ according to $p$ as for qDRIFT. If $n$ is larger (which happens with probability $\bigO(\tau^2)$), the rotation is over a different angle $\varphi_n$ and there is an additional controlled Pauli operator in the circuit.}
    \label{fig:qdrift}
\end{figure*}

We will now argue that previous work has overestimated the cost of doing phase estimation using qDRIFT.
A key reason is that using an analysis based on diamond norm bounds is suboptimal, as also observed in \cite{wan2022randomized}.
Similar to the discussion of Trotter error, what we care about eventually is not an accurate simulation of the full time evolution (corresponding to small diamond norm), but rather the accuracy of the ground state energy as extracted from the phase estimation procedure.
In particular, for phase estimation methods based on the Hadamard test, what is relevant is the expectation value of the outcome of the Hadamard test.
If we apply the Hadamard test to a quantum channel which applies unitary $U_m$ with probability $p_m$ to initial state $\ket{\psi}$, the resulting random variable $\rv{Z}$ has expectation value
\begin{align}
    \EE \rv{Z} = \sum_m p_m \bra{\psi} U_m \ket{\psi}.
\end{align}
The expectation value is over two sources of randomness: the sampling of the unitary $U_m$ and the randomness from the measurement at the end of the circuit.
This means that what matters for the Hadamard test is the linear combination of unitaries $\sum_k p_m U_m$ rather than the quantum channel $\rho \mapsto \sum_m p_m U_m \rho U_m^\dagger$.
Let $\rv{Z}(\tau, r)$ be the random variable obtained from applying a Hadamard test to $r$ rounds of qDRIFT with interval $\tau$.
Consider the Taylor expansion of a short time evolution:
\begin{align}\label{eq:qdrift lcu}
    \begin{split}
        e^{-i \tau H} & = \id - i\tau \sum_l p_l P_l + \bigO(\tau^2)             \\
                      & = \sum_l p_l(\id - i\tau P_l) + \bigO(\tau^2)            \\
                      & = \sqrt{1 + \tau^2} \sum_l p_l V_l(\phi) + \bigO(\tau^2)
    \end{split}
\end{align}
where we have used that $\id - i \tau P_l = \sqrt{1 + \tau^2} V_l(\phi)$ for $\phi = \arctan(\tau)$.
Alternatively, we can use
\begin{align}
    \sum_l p_l V_l(\varphi) = \frac{\id - i\tau H}{\sqrt{1 + \tau^2}}.
\end{align}
Using this, one can show (see \cref{sec:randomized hadamard} for details) that, when using an initial state $\ket{\psi}$ with squared overlap $c_k$ with the energy $E_k$ eigenstate, the expectation value of $\rv{Z}(\tau, r)$ has the exact expression
\begin{align}\label{eq:time signal qDRIFT}
    g(\tau,r) := \EE \rv{Z}(\tau, r) = \sum_k \frac{c_k}{(1 + \tau^2)^{r/2}} (\id - i\tau E_k)^r.
\end{align}
If we choose a sufficiently small step size $\tau$, we get
\begin{align}
    g(\tau,r) \approx \sum_k c_k e^{-\tau^2 r (1 - E_k^2)/2} e^{-i \tau r E_k}.
\end{align}
This expression is similar to $g(t)$ (the exact time signal of \cref{eq:signal hadamard test}) if we take $t = \tau r$, with the difference that there is a \emph{damping factor} $\exp(-\tau^2r(1 - E_k^2)/2)$.
So, if we want to deduce information about the phases $E_k$, and we do not want to increase the number of required samples, we need to choose $\tau$ and $r$ such that $\tau^2 r$ is $\bigO(1)$.
This leads to a scaling of $r = \Omega(t^2)$.
One can reduce this value of $r$, but this comes at an (exponential) sampling overhead.
The expectation value in \cref{eq:time signal qDRIFT} is an \emph{exact} expression, so it is more amenable to analysis than the result of deterministic product formulas, where one has to bound the Trotter error.
The original proposal in \cite{campbell2019random} uses an angle $\varphi = \tau$; this only changes the expectation values to order $\bigO(\tau^2)$ and does not change the error analysis; here we choose $\varphi = \arctan(\tau)$ for consistency with the Taylor expansion.

An alternative random product formula for Hamiltonian simulation was introduced in \cite{wan2022randomized}, which we call a \emph{randomized Taylor expansion} (RTE).
Whereas qDRIFT, as per \cref{eq:qdrift lcu}, can be seen as a first-order expansion of $\exp(-iH\tau)$ as an LCU, one can also take the full Taylor expansion and normalize the coefficients to a probability distribution.
This gives an LCU
\begin{align}\label{eq:rte lcu}
    e^{-i\tau H} & = B(\tau) \sum_m b_m U_m
\end{align}
where the $b_m$ form a probability distribution, and the $U_m$ are unitaries which are a composition of one Pauli rotation and one Pauli operator and $B(\tau)$ is a normalization factor; see \cref{sec:randomized hadamard} for details and a derivation.
Up to normalization, qDRIFT precisely corresponds to only keeping the lowest order term. The probability of sampling a higher order term is $\bigO(\tau^2)$.
The normalization factor $B(\tau)$ is approximately $\exp(\tau^2)$ for small $\tau$ and can be efficiently computed.
Taking $r$ rounds thus gives a LCU
\begin{align}\label{eq:rte lcu repeated}
    e^{-i\tau r} = B(\tau)^r \sum_{m_1,\dots,m_r} \underbrace{b_{m_1}\dots b_{m_r}}_{q_{\vec k}} \underbrace{U_{m_1} \cdots U_{m_r}}_{W_{\vec k}}
\end{align}
where the $q_{\vec k}$ are a probability distribution over $\vec k = (m_1,\dots,m_r)$, and the $W_{\vec k}$ are unitaries consisting of $r$ Pauli rotations interspersed with Pauli operators.
Performing a Hadamard test leads to a signal
\begin{align}
    \EE \rv{Z}(\tau,r) = \frac{1}{B(\tau)^r} \sum_k c_k e^{-i E_k \tau r}.
\end{align}
In particular, if we want to estimate $g(t)$, one can do so using this random variable for $\tau r = t$.
The factor $B(\tau)^r \leq \exp(\tau^2 r)$ determines the required number of samples. To keep this constant, one should have $\tau^2 r = \bigO(1)$ and hence $r = \Omega(t^2)$, as for qDRIFT.
The advantage of this approach is that, opposed to qDRIFT, it gives an \emph{unbiased} estimator for the signal $g(t)$.


Since we have exact expressions for the expectation value of Hadamard tests for both qDRIFT and RTE, we can now use these to analyze single-ancilla QPE schemes.
We do so for robust phase estimation.
The following result is stated and proven as \cref{thm:rpe} in \cref{sec:rpe} and follows straightforwardly from the analysis of robust phase estimation in \cite{ni2023low,kimmel2015robust,belliardo2020achieving}. In contrast to the approach in \cite{campbell2019random,lee2021even}, there is no dependence on a failure probability of the algorithm.
We consider the case with large ground state overlap. If $c_0 \geq \eta = 1 - \alpha$, we can choose a parameter $\xi = \Omega(\alpha)$ to reduce the maximal circuit size.

\begin{thm}\label{thm:rpe qDRIFT main text}
    Given a guiding state $\ket{\psi}$ with ground state overlap lower bounded by $\eta \geq 4 - 2\sqrt{3} \approx 0.54$ and $\xi = \Omega(\alpha)$ for $\eta = 1 - \alpha$, robust phase estimation using random product formulas gives an estimate of $E_0$ with root mean square error $\eps$ using circuits with at most
    \begin{align}
        r_{\max} = \bigO\mleft(\frac{\xi^2 \lambda^2}{\eps^2}\mright)
    \end{align}
    Pauli rotations per circuit and a total of
    \begin{align}
        r_{\tot} = \bigO\mleft(\frac{\lambda^2}{\eps^2}\mright)
    \end{align}
    Pauli rotations.
\end{thm}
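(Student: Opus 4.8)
The plan is to reduce the statement to the already-quoted robust phase estimation guarantee (\cref{thm:rpe}, or equivalently the bounds in \cref{eq:rpe max time reduction}) by showing that the Hadamard-test signal produced by the random product formula is, after an explicit known rescaling, an unbiased or nearly-unbiased estimator of the true signal $g(t) = \sum_k c_k e^{-iE_k t}$. For RTE this is immediate from the derivation leading to $\EE \rv{Z}(\tau,r) = B(\tau)^{-r} \sum_k c_k e^{-iE_k\tau r}$: multiplying the measured average by the known factor $B(\tau)^r$ gives an unbiased estimate of $g(t)$ with $t = \tau r$. For qDRIFT one uses \cref{eq:time signal qDRIFT}, which is also an exact expression; one can either work with the biased-but-explicit signal $g(\tau,r)$ directly, or (as for RTE) divide out a known normalization. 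Either way, the key point — and the reason diamond-norm bounds are unnecessary — is that robust phase estimation only needs accurate phase information at the geometrically spaced times $t = 2^m$, $m \le M := \ceil{\log \eps^{-1}}$, and the random product formula reproduces exactly the phases $E_k$ (not shifted or distorted), only attenuated by a factor depending on $\tau^2 r$.

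The first concrete step is to fix, for each round $m$, a step count $r_m$ and step size $\tau_m = 2^m/r_m$. Choosing $r_m = \Theta(\lambda^2 4^m)$ (reinstating $\lambda$ by rescaling time) makes $\tau_m^2 r_m = \Theta(\lambda^{-2})\cdot$const, i.e. $\bigO(1)$, so the normalization factor $B(\tau_m)^{r_m} \le e^{\tau_m^2 r_m}$ — and likewise the qDRIFT damping factor $(1+\tau_m^2)^{-r_m/2}$ — is a bounded constant. This is exactly the condition identified in the main text for keeping the sample count per round from blowing up. The second step is to bound the variance of the rescaled estimator: each $\rv{Z}^{(n)}(\tau_m, r_m)$ has modulus $\le 1$, so its variance is $\bigO(1)$, and after multiplying by the $\bigO(1)$ normalization the variance is still $\bigO(1)$; thus $N_m = \bigO(1)$ repetitions per round suffice to estimate $g(2^m)$ to the constant additive accuracy that the robust phase estimation analysis of \cite{ni2023low,kimmel2015robust,belliardo2020achieving} requires (this is where the overlap hypothesis $\eta \ge 4 - 2\sqrt 3$ enters, guaranteeing the dominant-frequency signal is well-separated). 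The third step is simply to invoke \cref{thm:rpe}: robust phase estimation with these signals returns $E_0$ to RMS error $\eps$, with maximal evolution time $t_{\max} = \bigO(\xi/\eps)$ and total evolution time $t_{\tot} = \bigO(1/(\xi\eps))$ as in \cref{eq:rpe max time reduction}.

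The final step translates evolution time into Pauli-rotation count. Since simulating time $t$ requires $r = \Theta(\lambda^2 t^2)$ Pauli rotations, the longest circuit uses
\begin{align*}
    r_{\max} = \Theta(\lambda^2 t_{\max}^2) = \bigO\mleft(\frac{\xi^2 \lambda^2}{\eps^2}\mright),
\end{align*}
and the total rotation count over all circuits is dominated by the geometric sum over rounds; because $r_m \propto 4^m$, this sum is dominated by its largest term, giving
\begin{align*}
    r_{\tot} = \sum_m N_m\, r_m = \bigO(\lambda^2 t_{\max}^2) = \bigO\mleft(\frac{\lambda^2}{\eps^2}\mright),
\end{align*}
where in the last step I used $t_{\max}\eps = \bigO(1)$ when $\xi = \bigO(1)$; one must check that the $\xi$-dependence cancels in the total (the extra circuits at smaller times contribute only lower-order geometric corrections). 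I expect the main obstacle to be the bookkeeping in this last reduction: carefully propagating the trade-off parameter $\xi$ and the per-round repetition counts $N_m$ through the geometric sum so that $r_{\tot}$ genuinely comes out $\xi$-independent, and confirming that the $\bigO(1)$ normalization factors $B(\tau_m)^{r_m}$ do not secretly accumulate across the $M = \bigO(\log \eps^{-1})$ rounds in a way that would reintroduce a logarithmic or worse factor into the sample complexity. Everything else is either an exact identity already established in the excerpt or a direct citation of the robust phase estimation literature.
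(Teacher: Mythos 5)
Your high-level route is the same as the paper's: use the exact closed-form expectation of the Hadamard-test signal for the randomized formula (no diamond norm), feed the per-round phase estimates into the robust phase estimation machinery of \cref{lem:heisenberg scaling}/\cref{thm:rpe}, pick $r_m$ so that $\tau_m^2 r_m = \bigO(1)$, and sum the geometric series of rotations. However, there is a genuine gap exactly where you flag uncertainty, and it is not mere bookkeeping. With $N_m = \bigO(1)$ samples per round and maximal time $t_{\max} = 2^M$, robust phase estimation only achieves RMS error $\Theta(2^{-M}) = \Theta(\eps/\xi)$, not $\eps$; so your scheme as written does not meet the stated accuracy once $\xi < 1$. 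The depth reduction works differently: one keeps $M$ such that $2^M = \Theta(\xi/\eps)$ but takes $N_M = \Theta(\xi^{-2})$ samples in the \emph{last} round, and shows (using \cref{eq:angle rpe}) that the last-round angle then has mean square error $\bigO(\xi^2)$, which by part 2 of \cref{lem:heisenberg scaling} upgrades the final error to $\bigO(\xi 2^{-M}) = \bigO(\eps)$. This is also precisely where the hypothesis $\xi = \Omega(\alpha)$ (i.e.\ $\eta$ close to $1$) enters, since the excited-state weight $1-\eta$ (together with the damping) produces a systematic angular deviation $\approx \arcsin((\,\cdot\, + (1-\eta))/\eta)$ that must be below the targeted $\bigO(\xi)$ last-round accuracy; your proposal never uses the large-overlap assumption beyond the generic $\eta \geq 4-2\sqrt 3$ threshold. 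The $\xi$-independence of $r_{\tot}$ then comes out of $N_M r_M = \Theta(\xi^{-2}) \cdot \Theta(\xi^2 \lambda^2 \eps^{-2}) = \Theta(\lambda^2\eps^{-2})$, which is incompatible with your $N_m = \bigO(1)$ assumption rather than a cancellation to be verified within it.

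A second, smaller inaccuracy: your claim that the random product formula ``reproduces exactly the phases $E_k$, only attenuated'' is true for RTE but false for qDRIFT, whose signal by \cref{eq:time signal qDRIFT} carries the distorted frequencies $\tau^{-1}\arctan(\tau E_k) \neq E_k$; dividing out a normalization removes only the amplitude damping, not this frequency bias. The paper handles it by fixing $\tau_m = 2^{-M}$ (i.e.\ $r_m = 2^{m+M}$) so the distorted frequency $F_0$ is consistent across all rounds, estimating $F_0$, and inverting with $E_0 \approx 2^M\tan(2^{-M}\theta_M)$ at the end (or, with $r_m = 2^{2m}$, arguing the per-round shift $\bigO(2^{-m})$ is below the round's precision). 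If you only intend RTE your statement survives, but as a proof of the theorem covering qDRIFT it needs this extra step. The remaining ingredients of your argument (unbiasedness of RTE after rescaling by $B(\tau)^r$, $\bigO(1)$ damping from $\tau^2 r = \bigO(1)$, $r_{\max} = \Theta(\lambda^2 t_{\max}^2)$, geometric-sum domination) match the paper.
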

We also numerically estimate the constant prefactors, and for $\eta = 1, \xi = 1$ we find $r_{\tot} \leq 8 \lambda^2 \eps^{-2}$ using qDRIFT.
For $\eta = 1$, since we lose Heisenberg scaling, if one does not count the cost of preparing the exact ground state one might as well just measure the ground state energy by using $\bigO(\lambda^2 \eps^{-2})$ copies of the ground state.
However, the robust phase estimation procedure uses only $\bigO(\log(\lambda \eps^{-1})^2)$ copies of the ground state (when choosing $\xi = 1$), and crucially it still works for ground state overlap $\eta < 1$ sufficiently large.
Note additionally that the \emph{total} cost does not depend on the trade-off parameter $\xi$, in contrast to the trade-off in \cref{eq:rpe max time reduction}.
However, choosing small $\xi$ does come at a cost of a higher number of circuits, and in particular of the number of initial state preparation circuits.
For randomized methods it is advantageous to choose $\xi \sim 1 - \eta $ as small as possible (if the state preparation cost is not too high), since the total cost of the Hamiltonian simulation does not increase under the depth reduction.
This is in contrast to the exact time evolution model, where a reduction in maximal evolution time comes at the cost of an increasing total evolution time, as in \cref{eq:rpe max time reduction}, due to the larger number of circuit repetitions.
The suboptimal scaling of $\bigO(\eps^{-2})$ of randomized methods is a significant disadvantage.
However, as we see in molecular electronic structure examples, it can still outperform deterministic product formulas.
This advantage increases further when one uses depth reductions for states with high ground state overlap.

\section{Partially Randomized Product Formulas}\label{sec:composite}
We have seen that randomization can be useful for Hamiltonian simulation in the context of phase estimation.
However, it has the clear disadvantage of scaling quadratically with $\lambda$ and $\eps^{-1}$, whereas deterministic product formulas may have a better dependence on $\eps^{-1}$.
Deterministic product formulas are disadvantageous if $L$, the number of terms in the Hamiltonian, is large.
Random product formulas are helpful if there is a tail consisting of many terms in the Hamiltonian of small weight.

A natural proposal is to treat a subset of terms in a random fashion, and a subset of terms deterministically.
There exist various such proposals, with different ways of implementing the partial randomization \cite{ouyang2020compilation,rajput2022hybridized,jin2023partially,hagan2023composite}.
Here, we modify the scheme of \cite{hagan2023composite} to a version which has the advantage of being particularly easy to analyze for the purpose of single-ancilla phase estimation.
Again, the difference in the analysis is that we do not require a diamond norm bound on the resulting quantum channel, but only need to evaluate the expectation value of the resulting Hadamard test.

The partial randomization is based on a decomposition
\begin{align}\label{eq:partial random decomposition}
    H = \underbrace{\sum_{l = 1}^{L_D} H_l}_{= H_D} + \underbrace{\sum_{m=1}^{M} h_{m} P_{m}}_{= H_R}
\end{align}
where we assume that the $P_m$ square to identity and where $H_D$ contains the terms we treat deterministically, and $H_R$ the terms which are treated by a random product formula.
The idea is to use a deterministic Trotter formula to the decomposition of $H$ into the terms $H_l$ and $H_R$, and to apply a random product formula to the decomposition of $H_R$ into terms $h_m P_m$.
In order for this to yield an improvement, one aims for a decomposition such that
\begin{align}\label{eq:decompose partial random}
    \lambda_R = \sum_m \abs{h_{m}} \ll \lambda
    \qquad \text{and}\qquad
    L_D \ll M.
\end{align}

\begin{figure*}[t]
    \begin{quantikz}
        \lstick{$\ket{0}$} & \gate{H} & \ctrl{1}\gategroup[2,steps=3,style={dashed, color=gray, inner sep=2pt}]{Equal to $\trot_2(\delta)$ in expectation value} & \ctrl{1} & \ctrl{1} & \ctrl{1}\gategroup[2,steps=3,style={dashed, color=gray, inner sep=2pt}]{\small{(Repeat)}} & \ctrl{1} & \ctrl{1} & \qw \ \ldots\ \ & \gate{H} & \meter{}\\
        \lstick{$\ket{\psi}$} & \qw & \gate{U_{1 \rightarrow L_D}} & \gate{W_{k_1}} & \gate{U_{L_D \rightarrow 1}} & \gate{U_{1 \rightarrow L_D}} & \gate{W_{k_2}} & \gate{U_{L_D \rightarrow 1}} & \qw \ \ldots\ \ & \qw & \qw
    \end{quantikz}
    \vspace{0.1cm}
    \begin{flushleft}
        where \\
        \begin{quantikz}  & \gate{\;\;W_k\,\;} & \qw & \end{quantikz} = \begin{quantikz} \qw & \gate{V_{m_1}(\varphi)\,} & \qw \ \dots\ & \gate{V_{m_r}(\varphi)} & \qw \qw & \end{quantikz} \qquad $\vec k = (m_1, \dots, m_r)$, $m_i$ sampled according to \cref{eq:rte lcu repeated}
        \begin{quantikz}  & \gate{U_{1 \rightarrow L_D}} & \qw & \\ & \gate{U_{L_D \rightarrow 1}} & \qw & \end{quantikz} = \begin{quantikz}  & \gate{e^{-\frac{i}{2} \delta H_{1}}} & \qw \ \dots\ & \gate{e^{-\frac{i}{2} \delta H_{L_D}}} &  \qw & \rstick[2,brackets=none]{Evolution of terms in $H_D$.}
            \\  & \gate{e^{-\frac{i}{2} \delta H_{L_D}}} & \qw \ \dots\ & \gate{e^{-\frac{i}{2} \delta H_{1}}} & \qw & \end{quantikz} \\
    \end{flushleft}
    \vspace{0.1cm}
    \caption{Hadamard test using partially randomized product formula. Here, we illustrate the second order deterministic method, based on \cref{eq:second order partial random}.}
    \label{fig:partially randomized circuit}
\end{figure*}

To analyze how such methods will perform, we now briefly summarize two lessons from the previous two sections. Deterministic product formulas lead to a \emph{bias} in the ground state energy $E_0$ due to the Trotter error. A small step size is required in order to keep this bias below the desired precision. For randomized product formulas, one can make the resulting signal such that the ground state energy is unbiased, but the \emph{amplitude} of the signal is damped and one incurs a sampling overhead. Here, a sufficiently small step size is required in order to keep this sampling overhead of constant size.
Partially randomized product formulas will have both a bias (from the deterministic decomposition) and a damping (from implementing the randomized part of the Hamiltonian) in the resulting signal. The error analysis straightforwardly combines these elements.

Let $\trot_{p}(\delta)$ be a deterministic product formula with respect to the decomposition of $H$ into the terms $H_l$ and $H_R$.
This means that $\trot_{p}(\delta)$ is a product of unitaries of the form $e^{-i \gamma H_l}$ (which we assume to have some decomposition into elementary gates) and terms of the form $e^{-i \gamma H_R}$ for some $\gamma$.
As in \cref{eq:ground state error trotter}, this gives an effective Hamiltonian with ground state $\tilde{E}_0$ with error $\abs{E_0 - \tilde E_0} \leq C_{\gs} \delta^{p}$. Here, the Trotter constant $C_{\gs} = C_{\gs}(p, \{H_l\}_{l=1}^{L_D}, H_R)$ depends on the order $p$ as well as on the Hamiltonian, decomposed into terms $H_l$ for $l = 1, \dots, L_D$ and treating $H_R$ as a single term.
Each of the $e^{-i \gamma H_R}$ appearing in the formula we will then simulate using RTE, using $r$ steps per Trotter step.

To make this concrete, we take the second order Suzuki-Trotter formula, in which case we get $\trot_{2}(\delta)$ given by
\begin{align}\label{eq:second order partial random}
    e^{- \frac12 i \delta H_1} \cdots e^{- \frac12 i \delta H_{L_D}} e^{-\delta i H_R} e^{- \frac12 i\delta H_{L_D}} \cdots e^{- \frac12 i \delta H_1}.
\end{align}
If we normalize $H_R / \lambda_R$, we need to do time evolution for time $\delta \lambda_R$, which we break up into $r$ steps.
Using RTE as in \cref{eq:rte lcu repeated} we expand
\begin{align}
    e^{-i\delta H_R} & = (e^{-i\tau H_R/\lambda_R})^{r} \quad \text{for } \tau r = \delta \lambda_R \\
                     & = B \sum_{k} q_k W_k
\end{align}
as a LCU, which gives a LCU for $\trot_{2}(\delta)$.
Here, each $W_k$ is a composition of $r$ Pauli rotations and Pauli operators.



Since this gives an unbiased LCU for the exact time evolution, it is straightforward to show (see \cref{lem:composite simulation}) that using RTE and repeating $s$ times in a Hadamard test has expectation value
\begin{align}
    \EE \rv{Z}(\delta, r, s) = \frac{1}{B(r, \delta)^s} \bra{\psi} \trot_{p}(\delta)^s \ket{\psi}
\end{align}
where
\begin{align}
    B(r, \delta) \leq \exp\mleft(\bigO\mleft(\frac{\lambda_R^2 \delta^2 s}{r}\mright)\mright).
\end{align}

In order to perform phase estimation, we need to choose $\delta$ such that the Trotter error of $\trot_{p}(\delta)$ is smaller than $\eps$.
That means we should choose a Trotter step size $\delta = \bigO((C_{\gs}\eps)^{\frac{1}{p}})$ and maximal depth $s = \bigO(C_{\gs}^{\frac{1}{p}}\eps^{-1 - \frac{1}{p}})$ for phase estimation.
In order to keep the number of samples constantly that are required to estimate $\bra{\psi} \trot_{p}(\delta)^s \ket{\psi}$ to constant precision, we need
\begin{align}
    B(r,\delta)^s \leq \exp\left(\bigO\left(\frac{\lambda_R^2 \delta^2 s}{r}\right)\right)
\end{align}
to be constant.
Since $\delta s = \bigO(\eps^{-1})$, this can be achieved by taking the total number of rotations in the randomized part as $rs = \bigO(\lambda_R^2 \eps^{-2})$.
Together, this leads to the following result.

\begin{thm}\label{thm:partially randomized main text}
    Let $H$ as in \cref{eq:decompose partial random}, and suppose that the Trotter error constant for $H$ (written as a sum of the $H_l$ and treating $H_R$ as a single term) is $C_{\gs}$, then phase estimation to precision $\eps$ using the partially randomized product formula requires $\bigO(L_D C_{\gs}^{\frac{1}{p}} \eps^{-1 - \frac{1}{p}})$ evolutions along the $H_l$ and $\bigO(\lambda_R^2 \eps^{-2})$ Pauli rotations.
\end{thm}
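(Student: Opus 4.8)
\emph{Proof plan.} The idea is to separate the two error sources cleanly: the deterministic outer product formula contributes only a \emph{bias} in the estimated energy (the Trotter error), while the randomized inner blocks contribute only an \emph{amplitude damping} of the Hadamard-test signal (a sampling overhead), and then to run robust phase estimation on the combined signal exactly as in \cref{thm:rpe}. Recall that the partially randomized formula is the $p$-th order product formula $\trot_p(\delta)$ built from the decomposition of $H$ into the $L_D$ deterministic terms $H_l$ and the single block $H_R$, with every occurrence of $e^{-i\gamma H_R}$ replaced by its $r$-step RTE expansion, \cref{eq:rte lcu repeated}. By \cref{lem:composite simulation}, a Hadamard test on $s$ repetitions of this object with the guiding state $\ket{\psi}$ has expectation $\EE\rv{Z}(\delta,r,s) = B(r,\delta)^{-s}\bra{\psi} \trot_p(\delta)^s\ket{\psi}$ with $B(r,\delta)^s \le \exp(\bigO(\lambda_R^2\delta^2 s/r))$. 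Writing $\trot_p(\delta)=e^{-i\delta\tilde H}$, the prefactor $B(r,\delta)^{-s}$ is a global positive scalar, so the phases of the signal are exactly those of exact evolution under $\tilde H$; hence the randomized part can only force a constant-factor increase in the number of Hadamard-test repetitions, provided $B(r,\delta)^s = \bigO(1)$.

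With this separation, I would first fix the Trotter step size from the bias requirement. By \cref{eq:ground state error trotter}, the ground-state energy $\tilde E_0$ of $\tilde H$ obeys $\abs{E_0-\tilde E_0}\le C_{\gs}\delta^{p}$ (and for small $\delta$ the overlap of $\ket{\psi}$ with the ground state of $\tilde H$ equals that with the ground state of $H$ up to negligible corrections), so choosing $\delta = \Theta\mleft((\eps/C_{\gs})^{1/p}\mright)$ makes this bias $\bigO(\eps)$. Running robust phase estimation (\cref{thm:rpe}) on the resulting signal to root-mean-square error $\eps$ then uses circuits whose maximal and whose total effective evolution time are both $\bigO(\eps^{-1})$; since each Trotter step advances the effective time by $\Theta(\delta)$, the maximal and total number $s$ of Trotter steps are $\bigO(\eps^{-1}/\delta) = \bigO(C_{\gs}^{1/p}\eps^{-1-1/p})$, and in particular $\delta s = \bigO(\eps^{-1})$.

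The gate counts are then immediate. Each Trotter step $\trot_p(\delta)$ contains $\bigO(L_D)$ evolutions $e^{-i\gamma H_l}$ (with a constant depending only on the order $p$), so the total number of such evolutions is $\bigO(L_D\,C_{\gs}^{1/p}\eps^{-1-1/p})$, which is the first claim. Each Trotter step also contains $\bigO(1)$ blocks $e^{-i\gamma H_R}$ with $\abs{\gamma}=\bigO(\delta)$, each of which RTE compiles into $r$ Pauli rotations, for a total of $\bigO(rs)$ Pauli rotations. To keep the damping harmless, choose $r$ so that $\lambda_R^2\delta^2 s/r = \bigO(1)$, i.e. $r = \Theta(\lambda_R^2\delta^2 s)$; then $B(r,\delta)^s = \bigO(1)$ and the total number of Pauli rotations is $rs = \bigO\mleft(\lambda_R^2(\delta s)^2\mright) = \bigO(\lambda_R^2\eps^{-2})$, the dependence on $\delta$ — and hence on $C_{\gs}$ — cancelling. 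This is the second claim.

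The main obstacle is the first step: proving the exact identity of \cref{lem:composite simulation} together with the bound on $B(r,\delta)$, and genuinely confirming the bias/damping dichotomy — that the outer product formula contributes \emph{only} a phase bias (absorbed by demanding $C_{\gs}\delta^p=\bigO(\eps)$) and the inner RTE contributes \emph{only} a positive multiplicative damping (absorbed by a constant repetition overhead once $B(r,\delta)^s=\bigO(1)$), so that the robust phase estimation analysis of \cref{thm:rpe} applies without change. Once that is in place the remaining steps are the parameter choices above, the only delicate point being the cancellation of $\delta$ in the Pauli-rotation count, which rests on $\delta s=\bigO(\eps^{-1})$.
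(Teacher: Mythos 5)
Your proposal is correct and follows essentially the same route as the paper: it fixes the Trotter step from the ground-state bias bound $C_{\gs}\delta^{p}=\bigO(\eps)$, invokes \cref{lem:composite simulation} for the exact damped signal $B(r,\delta)^{-s}\bra{\psi}\trot_p(\delta)^s\ket{\psi}$, chooses $r=\Theta(\lambda_R^2\delta^2 s)$ to keep the damping $\bigO(1)$, and uses $\delta s=\bigO(\eps^{-1})$ from the phase-estimation analysis to get $\bigO(L_D C_{\gs}^{1/p}\eps^{-1-1/p})$ deterministic evolutions and $rs=\bigO(\lambda_R^2\eps^{-2})$ Pauli rotations. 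The "main obstacle" you flag is precisely the content of \cref{lem:composite simulation}, which the paper proves separately in the appendix by expanding each $e^{-i\delta_{i_p}H_R}$ via RTE and bounding the product of normalization factors.
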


Gate counts with constant factors are given in \cref{sec:gate count}.
A subtlety here is that the Trotter constant $C_{\gs}$ depends on the choice of partitioning. This complicates the task of finding an optimal distribution of terms to $H_R$ or $H_D$.
One can bound $C_{\gs}$ using a commutator bound in the usual way (see \cite{hagan2023composite} for such a bound).
For our numerical estimates we will take a different approach: we estimate the Trotter error for the fully deterministic method, and use the resulting Trotter error constant. In that case, it is clear that to minimize the cost, the optimal decomposition assigns the $L_D$ largest weight terms to $H_D$ for some choice of $L_D$, and the remainder to $H_R$.
Suppose that $H$ has $L$ terms and weight $\lambda$.
Clearly, partial randomization is most useful if there is a small number $L_D \ll L$ of terms with large weight (so including these in $H_D$ means that $\lambda_R \ll \lambda$); see \cref{fig:weight distribution} for an example of a distribution of weights for a quantum chemistry Hamiltonian.

\section{Electronic structure Hamiltonians}\label{sec:electronic structure main text}

We start with a molecular electronic structure Hamiltonian in second quantization:
\begin{align}\label{eq:electronic hamiltonian}
    \sum_{pq,\sigma} h_{pq} a_{p\sigma}^\dagger a_{q\sigma} + \frac12 \sum_{pqrs,\sigma\tau} h_{pqrs} a_{p\sigma}^\dagger a_{r\tau}^\dagger a_{s\tau} a_{q\sigma},
\end{align}
where $a_{p\sigma}^{\dagger}$, $a_{q\sigma}$ are fermionic creation and annihilation operators and $h_{pq}$, $h_{pqrs}$ are integrals arising from a calculation using the underlying orbital basis (see Appendix \ref{sec:electronic structure problem}).
The indices $p, q, r, s$ denote the spatial degrees of freedom, whereas $\sigma, \tau$ are reserved for spin.
The Hamiltonian acts on the space of $2N$ spin-orbitals, and we fix the number of electrons $n$.
And a choice of fermion-to-qubit mapping transforms it into a linear combination of Pauli operators, acting on $2N$ qubits with $L = \bigO(N^4)$ terms gives a representation of $H$ as a linear combination of Pauli operators
\begin{align}\label{eq:pauli electronic structure}
    H = \sum_l H_l = \sum_l h_l P_l.
\end{align}
We work with active spaces of molecular orbitals. An alternative choice is a plane wave basis, which has the advantage that the Hamiltonian has fewer terms and the Trotter error can be more easily controlled. This is particularly useful for materials \cite{kivlichan2020improved,childsTheoryTrotterError2021,su2021nearly,mcardle2022exploiting,low2023complexity}.
For this work we restrict to active spaces based on molecular orbitals since plane wave bases need a much larger number of orbitals to reach similar accuracy.
While in principle there are $\bigO(N^4)$ nonzero coefficients in \cref{eq:electronic hamiltonian}, many are extremely small and can be truncated. For large spatially extended systems, one expects that the number of significant terms mostly scales quadratically in $N$. Generally, one expects polynomial decay of the weight, with a tail of terms which decays exponentially as in \cref{fig:weight distribution}. See \cref{sec:electronic structure problem} for more discussion.

\begin{figure}
    \centering
    \includegraphics[width=\linewidth]{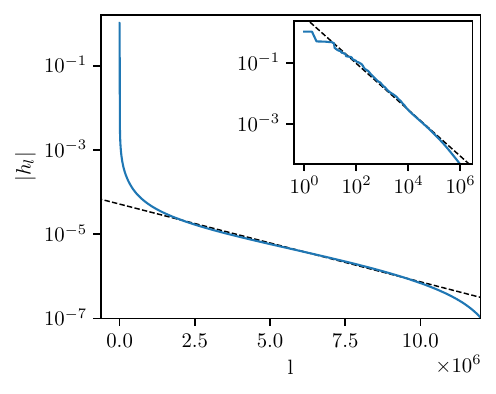}
    \includegraphics[width=\linewidth]{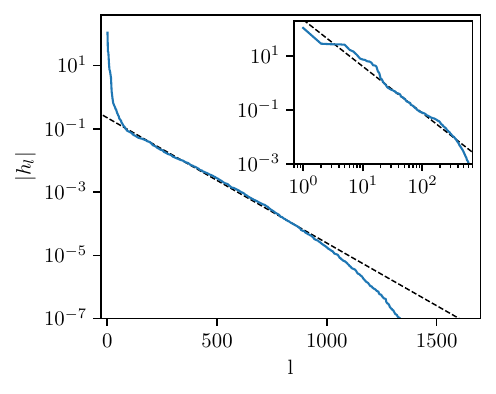}
    \caption{Distribution of the weights of terms of the FeMoco Hamiltonian in the Pauli decomposition (top) and the 1- and 2-electron terms in the double factorization representation (bottom).
        The dashed lines show an exponential fit to the tail (main plot) and a power law for the largest terms (the log-log inset plot).}
    \label{fig:weight distribution}
\end{figure}

As benchmark systems, we use a linear chain of hydrogen atoms, which is a toy model for a system of increasing size, and FeMoco, a standard challenge benchmark problem used for quantum computing resource estimates, as well as a collection of small molecules with varying active space sizes. Detailed descriptions can be found in \cref{sec:electronic structure problem}.

The computational cost of implementing product formulas for $H$ depends on various factors: for deterministic product formulas, it depends on $L$ and the Trotter error; for randomized product formulas it depends on the weight $\lambda$, and for partially randomized methods it additionally depends on a choice of partitioning into deterministic and randomized terms.
In order to assess the performance of phase estimation with partially randomized product formulas, in the following subsections we discuss numerical estimates of the Trotter error, and the decomposition into $H_D$ and $H_R$.

\subsection{Estimating Trotter error}
In order to give a good comparison between deterministic and (partially) randomized product formulas, we start by evaluating the Trotter error, which is crucial for the performance of the deterministic product formula.
We perform a detailed analysis of Trotter error with system size.

Apart from the practical difficulties in evaluating the error bounds based on commutators in \cref{eq:operator norm error trotter}, for the purpose of computing ground state energies bounding the ground state energy error in terms of the operator norm error can be a loose bound. This has been confirmed by numerical chemistry simulations for small molecules \cite{babbush2015chemical}. From a theoretical perspective, the operator norm of our matrices depends on how many orbitals we choose to include, which is a choice of the algorithm and not a fundamental physical feature of the problem; see \cite{burgarth2023strong} for more discussion of this.

We numerically estimate the Trotter error constant for ground state energy errors, and see that it is typically relative small, as shown in \cref{fig:trotter-error-proxy}.
See also \cite{poulin2014trotter,wecker2014gate,babbush2015chemical,reiher2017elucidating,kivlichan2020improved} for previous work on empirical and rigorous Trotter error bounds for phase estimation for quantum chemistry.
This is an important aspect of product formula algorithms for phase estimation, since it means that Trotter methods may in practice use larger step sizes $\delta$ than we can rigorously justify and can be implemented using fewer resources than rigorous upper bounds suggest.

\begin{figure}
    \centering
    \includegraphics[width=\linewidth]
    {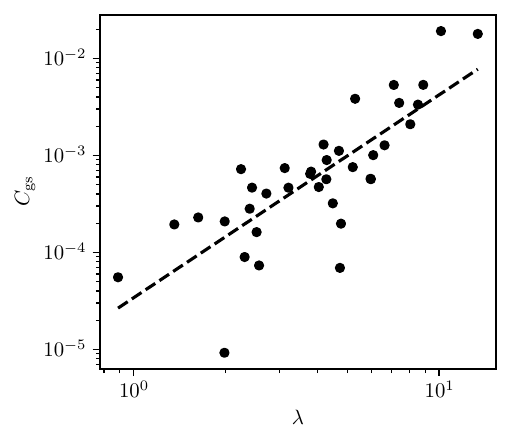}
    \caption{We calculate the Trotter error $\Delta E$ (on the ground state energy) and use this to estimate the prefactor $C_{\gs}$, following \cref{eq:Cgs-def}.  This is done for $p=2$ for a variety of molecules and active spaces, and plotted against $\lambda$.
    }
    \label{fig:trotter-error-proxy}
\end{figure}

As discussed in \cref{sec:trotter}, there are various error measures and bounds one can use. For our purposes, we care about the ground state energy error.
In order to estimate this error, we consider a benchmark set of small molecules with varying active space size and at half filling, so the number of electrons $n$ equals half the number of spin-orbitals $2N$.
We compute the Trotter error constant numerically for the second order method with $p=2$.
As we show in \cref{fig:trotter-error-proxy}, the Trotter error constant correlates well with $\lambda$.
Crucially, even though the Trotter error constant scales with system size, it is typically rather small even for larger systems.
These results corroborate similar findings in \cite{babbush2015chemical, reiher2017elucidating}.

Further numerical results are presented in \cref{sec:trotter appendix}.
We estimate Trotter error scalings for the hydrogen chain, which has a natural scaling limit.
In general we see that using the ground state energy error instead of an operator norm bound on the full unitary leads to about an order of magnitude improvement in the cost estimate.
Additionally, for our chemistry benchmark systems we see no structural improvement by going to higher order Trotter-Suzuki product formulas.

\subsection{Choice of decomposition for partial randomization in the Pauli representation}
For the partially randomized product formulas, one additionally needs to choose a partition of the Hamiltonian into a deterministic and randomized part as in \cref{eq:partial random decomposition}.
We first discuss the case where we consider $H$ in a Pauli representation, where the terms $H_D$ and $H_R$ are all Pauli operators (i.e. $H_l = P_l$ in \cref{eq:partial random decomposition}).
The total cost of the partially randomized product formula circuit is the sum of the deterministic and randomized parts as in \cref{thm:partially randomized main text}. Obviously, we want to choose a decomposition which minimizes this sum.

A subtlety is that the Trotter error of the scheme depends on the choice of decomposition, and which terms are treated randomly; i.e. $C_{\gs} = C_{\gs}(p, \{H_l\}, H_R)$.
A (rigorous) approach to this is to use commutator bounds to estimate the Trotter error and to take this into account in the optimization problem.
Such an approach is pursued in \cite{hagan2023composite} and applies straightforwardly to our method as well.
However, this leads to a complicated optimization problem. Additionally, for the type of problems we consider, the deterministic methods need to be taken with heuristic empirical error scaling in order to be competitive.
As a heuristic, we simply bound the Trotter error of the partially randomized method by our estimate of the Trotter error of the fully deterministic method. This is not a rigorous treatment but appears to be reasonable in practice.
We study the dependence of the Trotter error on the choice of decomposition numerically. The results can be found in \cref{sec:trotter appendix}; the conclusion is that when decreasing $L_D$, the Trotter error does not decrease significantly unless $\lambda_R$ is a significant fraction of $\lambda$.
This means that in practice the heuristic of using the Trotter error of the fully deterministic method should suffice; it is not valid when $\lambda_R$ is close to $\lambda$ but in this regime partially randomized product formulas will anyways not improve much over fully randomized product formulas.

Once one assumes that the Trotter error constant $C_{\gs}$ may be taken constant when finding the optimal decomposition, it is clear that the optimal decomposition should simply include the $L_D$ highest weight terms in $H_D$, and the remainder in $H_R$, for some choice of $L_D$.
One can then easily numerically search for the optimal value of $L_D$ which minimizes the cost in \cref{thm:partially randomized main text}.
In \cref{fig:hchain cost} we compare the gate counts for phase estimation of the hydrogen chain, comparing the deterministic, randomized (qDRIFT) and partially randomized methods.
In terms of cost metric, we consider the number of non-Clifford gates (expressed as number of Toffoli gates) after compilation of the Pauli rotations. We also count the number of 2-qubit gates, see \cref{sec:gate count}.
In both cases, the partially randomized method achieves a reduction of cost, while scaling similar to the randomized method with increasing system size.
We note that the optimal choice of decomposition depends on the cost metric employed (e.g. whether we measure two-qubit gates or non-Clifford gates).

Another interesting use for partial randomization is that one can improve the compilation to elementary gates. The randomized part consists of many rotations over the same angle, which is beneficial when compiling to Toffoli gates \cite{gidney2024magic,kivlichan2020improved}. For the deterministic part $H_D = \sum_l h_l P_l$, one can round the values of the $h_l$ to values more favorable for compilation by transferring a small part of the weight to the randomized part, see \cref{sec:gate count}. This is similar to previously known randomization strategies for compilation \cite{campbell2017shorter,kliuchnikov2023shorter,koczor2024probabilistic}.

Finally, one could ask to what extent the terms in $H_R$ contribute to the value of the ground state energy. In principle, it could be that one can deterministically truncate many terms in the Hamiltonian without perturbing the ground state energy by much.
We study this numerically in \cref{sec:trotter appendix}, and find that at the partition with optimal cost for partial randomization, the error of fully truncating $H_R$ is generally too large.

\section{Different representations of the Hamiltonian}\label{sec:optimize hamiltonian}
The representation in \cref{eq:pauli electronic structure} is not unique.
A significant degree of freedom is the choice of orbital basis.
In \cref{sec:weight reduction} we discuss how this degree of freedom can be optimized to improve the performance of randomized product formulas.
Moreover, for deterministic product formulas it can be beneficial to use different decompositions of the Hamiltonian based on a factorization of the Coulomb tensor. 
In \cref{sec:factorization partial random} we argue that this approach benefits from partial randomization as well.

\subsection{Weight reduction for the Hamiltonian}\label{sec:weight reduction}
The performance of randomized product formulas depends on the value of the weight $\lambda$.
Therefore, minimizing the value of $\lambda$ is important for (partially) randomized product formulas. It is in fact relevant for all methods based on LCU, such as qubitization.
For this reason, it has been investigated extensively \cite{koridonOrbitalTransformationsReduce2021,loaiza2023reducing}.
We employ two techniques to significantly reduce the value of $\lambda$.

The first is that one can choose a different basis for the single-particle Hilbert space (i.e. a unitary transformation of the orbitals) \cite{koridonOrbitalTransformationsReduce2021}.
For choosing a good basis of orbitals, a good starting point are localized orbitals, rather than the canonical orbitals derived from a Hartree-Fock calculation.
While this choice already significantly reduces the value of $\lambda$, one can also use these as a starting point for further minimization \cite{koridonOrbitalTransformationsReduce2021}.
One can directly optimize $\lambda$ over parameterized orbital basis transformations via gradient descent.
Since this a high-dimensional optimization problem it is crucial to start from a good initial guess.
Typically, using localized orbitals rather than the canonical orbitals derived from a Hartree-Fock calculation reduces the value of $\lambda$.
In case orbital localization does not converge or does not yield a lower $\lambda$ value, we find that the orbital rotation into the eigenbasis of the largest Cholesky vector of the Coulomb-tensor decomposition to be an inexpensive alternative.
Besides the choice of orbitals, the Hamiltonian has a particle-number symmetry which can be exploited, and this leads to a second technique to find a different Hamiltonian with the same ground state energy, but with lower weight \cite{loaiza2023block}.

Applying these methods to FeMoco for the Reiher \cite{reiher2017elucidating} and Li \cite{li2019electronic} choice of active space, we see a reduction of the value of $\lambda$ from 1863 to 405 for the Reiher Hamiltonian and a reduction from 1511 to 568 for the Li Hamiltonian.
We note that this also leads to a significant improvement of the performance of the sparse qubitization method of \cite{berry2019qubitization} to FeMoco.
This leads to the results reported in \cref{fig:femoco intro}; see \cref{sec:gate count} for details.

When we discussed the choice of decomposition for the partially randomized method, we assumed a fixed choice of orbitals and hence of the coefficients in the Pauli representation of the Hamiltonian.
A reasonable choice is to take an orbital basis for which $\lambda$ is small.
However, a more complicated optimization is possible: one would want to find an orbital basis where most of the weight is concentrated in a small number of terms.
We leave such optimization to future work.

\subsection{Partial randomization with factorizations of the Hamiltonian}\label{sec:factorization partial random}
For deterministic product formulas, one does not need to restrict to the Pauli representation of the Hamiltonian.
One general approach is based on factorizations of the Hamiltonian, where one can factorize the Coulomb tensor $h_{pqrs}$. This gives a representation of the 2-body terms of the Hamiltonian as
\begin{align}\label{eq:factorized hamiltonian}
    \sum_{l=1}^{L} H_l = \sum_{l=1}^{L} (U^{(l)})^\dagger \underbrace{\Big(\sum_{p=1, \sigma }^{\rho_l} \lambda_p^{(l)} n_{p\sigma} \Big)^2}_{V^{(l)}} U^{(l)}
\end{align}
where for each $l$ we have a different orbital basis, $U^{(l)}$ is the unitary implementing the change of orbital basis. See \cref{sec:electronic structure problem} for details.
The $n_{p\sigma} = a_{p\sigma}^\dagger a_{p\sigma}$ are the number operators. After a Jordan-Wigner transformation, the Hamiltonian $V^{(l)}$ is diagonal in the standard basis.
This can be used for a deterministic product formula \cite{motta2021low}, since one can implement $\exp(-i \delta H_l)$ by
\begin{align}
    \exp(-i \delta H_l) = ( U^{(l)})^\dagger \exp(-i V^{(l)}) U^{(l)}.
\end{align}
Both the orbital basis changes and the diagonal Hamiltonian evolutions can be performed using $\bigO(N^2)$ gates using Givens rotations \cite{kivlichanQuantumSimulationElectronic2018}, so in total a single Trotter step needs $\bigO( L N^2)$ gates (both in terms of number of two-qubit gates and single-qubit rotations).

A decomposition of the form in \cref{eq:factorized hamiltonian} can be obtained through a \emph{double factorization}, which we review in \cref{sec:electronic structure problem}.
Here, the first factorization is into a sum of $L$ terms, the second factorization is the diagonalization of these in different orbital bases.
This can be an improvement over the Pauli representation if the ranks $L$ and $\rho_l$ are small.
Decompositions of the Hamiltonian as in \cref{eq:factorized hamiltonian}, or similar decompositions are also essential to qubitization and other LCU based approaches to Hamiltonian simulation \cite{berry2019qubitization,von2021quantum,lee2021even}.
Typically, in factorized representations, one truncates a tail of terms which are so small that they do significantly affect the ground state energy.
Partial randomization allows for much more flexible truncation schemes, and in that way can be used to improve product formulas based on Hamiltonian factorizations.

First of all, in \cref{eq:factorized hamiltonian}, some of the terms $H_l$ are rather small (but not so small that they can be completely ignored).
As an example, the distribution of the sizes of the $H_l$ for a double factorization of the FeMoco Hamiltonian is shown in \cref{fig:weight distribution}. This shows, similar to the Pauli representation, a dominant set of terms obeying powerlaw decay, and a tail of terms with exponential decay.
See \cite{motta2019efficient} for a discussion of the scaling of $\rho_l$ for spatially extended systems.
Note that the number of terms is much smaller than in the Pauli representation (but the cost of implementing any of the $\exp(-i \delta H_l)$ is larger).
We can apply partial randomization by keeping the $L_D$ largest terms in the deterministic part. The remainder term is now not given as a linear combination of Paulis, but we can easily re-express the remainder terms $\sum_{l > L_D} H_l$ in terms of Pauli operators using some choice of orbital basis.
In other words, we may use the partial randomization to truncate the rank $L$ of the first factorization.

One of the main bottlenecks of the factorization approach to Trotterization is the cost of implementing basis changes on the quantum computer, which scales quadratic in the number of orbitals.
This cost can be reduced if the diagonal Hamiltonian $\sum_{p\sigma} \lambda_p^{(l)} n_{p\sigma}$ has only $\rho_l < N$ nonzero terms. In that case, the basis transformation can be arbitrary on the remaining $N - \rho_l$ orbitals. Such basis change unitaries can be implemented using $\bigO(N \rho_l)$ gates \cite{kivlichanQuantumSimulationElectronic2018}.
It turns out that for large systems, for each $l$, the $\lambda_p$ can be truncated to $\rho_l$ terms, where $\rho_l$ is significantly smaller than $N$. Truncating small $\lambda_p$ leads to constant $\rho_l$ for large $N$; this effect, however, kicks in at rather large $N$ \cite{peng2017highly,motta2019efficient}.
We can again use partial randomization to truncate the second factorization.
Indeed, in this case for the diagonal Hamiltonian $V^{(l)}$ we only keep the terms for a subset of $\rho_l$ orbitals for which $\lambda_p^{(l)}$ is relatively large.
The remainder is assigned to the randomized part $H_R$.
In \cref{fig:gate counts factorized hchain} we show that this leads to a significant improvement for the hydrogen chain. Additionally, it resolves the issue that for the deterministic approach it is ambiguous what are good truncation thresholds which do not perturb the ground state energy too much.
We note that for the hydrogen chain, the overall gate count using the Pauli representation, as shown in \cref{fig:hchain cost} is lower.

\begin{figure}
    \centering
    \includegraphics[width=\linewidth]{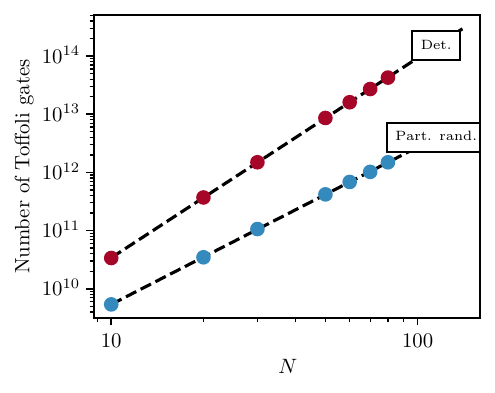}
    \caption{Toffoli gate counts for phase estimation of the hydrogen chain with $N$ atoms using a factorized representation of the Hamiltonian.}
    \label{fig:gate counts factorized hchain}
\end{figure}

As a final comment, we note that for the randomized methods we assume the terms in the Hamiltonian are Pauli operators. However, for the derivations the only relevant property is that $P_l^2 = \id$.
This means that one can also use a Hamiltonian of the form $H = \sum_m h_m \tilde P_m$ where each $\tilde P_m$ is a Pauli with respect to a different choice of orbital basis. For example, one obtains such representations from the double factorization representation of the Hamiltonian.
In that case, $V^{(l)}$ is a sum of terms proportional to $n_{p\sigma} n_{q\tau}$, which under the Jordan-Wigner mapping can be expanded as Pauli operators acting on only two qubits.
Therefore, the Hamiltonian can be expressed as a sum of terms of the form $\tilde P_{m} = (U^{(l)})^\dagger P_m U^{(l)}$, where $U^{(l)}$ is an orbital basis change unitary and $P_m$ is a two-qubit Pauli operator.
The cost of implementing $\exp(-i\varphi \tilde P_m) = (U^{(l)})^\dagger \exp(-i\varphi P_m) U^{(l)}$ is dominated by the cost of implementing the basis change, and since $P_m$ only acts on two qubits, this cost is $\bigO(N)$.
So, in cases where such a representation drastically reduces the weight $\lambda$ of $H$, this may be a beneficial approach for randomized product formulas.
Let $\lambda$ be the weight of $H$ in a Pauli representation, and let $\lambda_{\text{DF}}$ be the weight in a double factorization representation.
Since the cost scales quadratically in the value of $\lambda$, we have an improvement if $\lambda / \lambda_{\text{DF}} = o(\sqrt{N})$.
As an example, \cite{lee2021even} reports for an arrangement for 4 hydrogen atoms with an increasing number of orbitals (as a toy model for the continuum limit), scalings of $\lambda = \bigO(N^{3.1})$ for the Pauli representation and $\lambda_{\text{DF}} = \bigO(N^{2.3})$, so this approach would asymptotically improve the performance of random product formulas.

\section{Conclusion}
We performed a comprehensive analysis of quantum phase estimation using the Hadamard test and time evolution by product formulas for deterministic and (partially) randomized product formulas.
Deterministic product formulas are conceptually simple and their scaling with the approximation error $\eps$ is close to optimal when using higher order methods.
The main downside is their scaling with the number of terms in the Hamiltonian $L$, and a potentially complicated dependence on the Trotter error.
Random product formulas have no dependence on the number terms, but rather on the weight $\lambda$. However, the scaling with $\eps^{-1}$ and $\lambda$ is quadratic, which is suboptimal.
In this work, we analyze phase estimation and Hamiltonian simulation in an integrated way. In particular, this allows us to obtain a sharper analysis of phase estimation for random product formulas.
We also prove that the maximal depth can be reduced from $\eps^{-2}$ to $(1 - \eta)^2 \eps^{-2}$ in regimes with large ground state overlap.
These improvements do not increase the total number of gates, in contrast to other methods (which have better scaling with $\eps$).

Our other main contribution is a modification of the partially randomized scheme of \cite{hagan2023composite} and an analysis of its use in single-ancilla phase estimation. 
This approach combines advantages of deterministic and randomized product formulas. 
We demonstrated that this can lead to an improvement in the required resources for ground state energy estimation.

For a wide variety for benchmark molecular systems we compare the cost of different product formula Hamiltonian simulation methods for phase estimation.
For deterministic product formulas, it is well known that (computable) rigorous upper bounds can heavily overestimate the Trotter error, leading to much shorter Trotter steps (and hence deeper circuits) than necessary.
We numerically analyze the empirical Trotter error, and show that while this gives a large improvement in the required resources, random product formulas can still outperform deterministic product formulas. 
Partial randomization can lead to a further reduction in cost.

On a high level the decomposition strategies into deterministic and randomized parts $H_D$ and $H_R$ are guided by the following aim: Find an approximation $H_D$ to the full Hamiltonian $H$, such that the time evolution under $H_D$ is easier to implement than for $H$, while the remainder is small, $\lambda_R < \lambda$ (which therefore can be implemented cheaply in a randomized way).
This is a change of perspective compared to previous works on optimized Hamiltonian representations.
The possible approximations $H_D$ of the electronic structure Hamiltonian allow for a large degree of freedom in the algorithm design.
We proposed a decomposition strategy based on factorizations of the Coulomb tensor, which lead to improved resource estimates compared to the fully deterministic method.
One can further optimize the representation of $H$ by choosing different orbital bases for the deterministic and randomized Hamiltonians, or use the structure of the Coulomb tensor to optimize the cost. 
We think this is a promising direction for future explorations.

In this work we are counting both non-Clifford gates as well as two-qubit gates. It is unclear whether in practice the non-Clifford gates are necessarily the dominant cost in fault-tolerant quantum computation \cite{litinski2019magic,gidney2024magic}. 
In particular, it appears that magic state distillation may be cheaper for logical error rates around $10^{-9}$ \cite{gidney2024magic}, which means that trade-offs which reduce circuit depth may be extra useful.
Product formulas (using many very small angle single-qubit rotations) are also a promising candidate for approaches which are only partially fault-tolerant \cite{akahoshi2024partially}.
The quantum circuits deriving from product formulas have the advantage of being structurally simple (they consist only of Pauli rotations if the Hamiltonian is a linear combination of Paulis), so it is straightforward to count or optimize other cost measures.

Together with various other developments in the algorithmic theory of product formulas, such as multiproduct formulas and polynomial interpolation techniques \cite{low2019well,vazquez2023well,rendon2024improved,aftab2024multi}, isometries to enlarged basis sets \cite{luo2024efficient} this suggests that both for intermediate-term applications and fully fault-tolerant quantum computation, (partially randomized) product formulas remain a promising approach to ground state energy phase estimation.

\section*{Data availability}
Scripts and data reproducing the figures in this work are available at: \url{https://doi.org/10.5281/zenodo.15387186}.

\section*{Acknowledgements}
We thank Li Liu for helpful discussions.
We are grateful to Markus Reiher for providing access to Euler supercomputer at ETH Z\"urich.
We acknowledge funding from the Research Project ``Molecular Recognition from Quantum Computing.'' Work on ``Molecular Recognition from Quantum Computing'' was supported by Wellcome Leap as part of the Quantum for Bio (Q4Bio) Program.  We also acknowledge financial support from the Novo Nordisk Foundation (Grant No. NNF20OC0059939 `Quantum for Life').

\bibliographystyle{unsrtnat}

\bibliography{library}

\appendix

\onecolumngrid

\newpage

\section{Hadamard test using (partially) randomized time evolution}\label{sec:randomized hadamard}

In this appendix we derive in detail the result of performing a Hadamard test on (partially) randomized product formulas.
We briefly recall the setup and notation.
The Hamiltonian is given as
\begin{align}
    H = \sum_{l} H_l
\end{align}
where for the randomized product formulas we will assume more specifically that
\begin{align}
    H = \sum_l h_l P_l
\end{align}
where the $P_l$ are Pauli operators and $h_l \geq 0$ (using that signs can be absorbed in the Pauli operators). In fact, the only properties used in the derivations are the possibility to implement time evolution along $P_l$, and $P_l^2 = \id$.
We write $V(\varphi) = \exp(-i \varphi P_l)$.
We may rescale the Hamiltonian by a factor
\begin{align}
    \lambda = \sum_l h_l \qquad \text{ so } \qquad H = \lambda \sum_l p_l H_l
\end{align}
where the $p_l$ form a probability distribution $p$.
We may rescale time by a factor $\lambda$, and thereby assume that $H = \sum_l p_l P_l$ where the $p_l$ form a probability distribution $p$.
In the end, when determining the computational cost we have to account for the rescaling by rescaling the relevant parameters by $\lambda$.
The Hamiltonian has energies $E_k$ (which lie in $[-1,1]$ after rescaling with $\lambda$) with eigenprojectors $\Pi_k$.
We apply the Hadamard test with initial state $\ket{\psi}$, which has squared overlap $c_k = \bra{\psi} \Pi_k \ket{\psi}$ with the eigenspace of energy $E_k$.
If $E_k$ is nondegenerate, $\Pi_k = \proj{\psi_k}$ and $c_k = \abs{\braket{\psi | \psi_k}}^2$.

\subsection{Hadamard test using qDRIFT}

The qDRIFT method is to construct a unitary
\begin{align}
    U(\tau, r) = \prod_{i=1}^r \exp(-i \varphi H_{l_r})
\end{align}
where $l_i$ is sampled according to $p$ and $\varphi$ depends on a choice of time step $\tau$.
This gives a random unitary, but it turns out that for appropriately chosen $r$ and $\varphi$ the channel obtained by this process is close in diamond norm to the time evolution unitary $U(t) = \exp(-i Ht)$ \cite{campbell2019random,chen2021concentration}.
This may then be used in combination with phase estimation to compute ground state energies.

In this work we analyze in detail the performance of phase estimation methods based on the Hadamard test using qDRIFT.
The analysis is very similar to that of \cite{wan2022randomized}, which also uses Hadamard tests together with a randomized method for time evolution.
The main difference is that in their approach the randomized time evolution method is based on a Taylor series expansion of the time evolution operator.

The fact that $P_l$ squares to identity implies that for any $\varphi \in \RR$ the operator
\begin{align}
    V_l(\varphi) := \cos(\varphi) \id - i \sin(\varphi) P_l = \exp(-i \varphi P_l)
\end{align}
is unitary.
We now choose $\tau \in [0,1]$ and let $\varphi = \arctan(\tau)$.
Let $\rv{X}(\tau,r)$ and $\rv{Y}(\tau,r)$ be the $\pm 1$-valued random variables resulting from the following process:
\begin{enumerate}
    \item Sample $l_1, \dots, l_r$ from $p$.
    \item Apply a Hadamard test to the unitary
          \begin{align}
              U = V_{l_r}(\varphi) \cdots V_{l_1}(\varphi)
          \end{align}
\end{enumerate}
Let $\rv{Z}(\tau,r) = \rv{X}(\tau,r) + i\rv{Y}(\tau,r)$.
We may now compute the expectation value of $\rv{Z}(\tau,r)$, which we denote by $g(\tau,r)$ to be the following signal
\begin{align}
    g(\tau,r) : & = \EE \rv{Z}(\tau,r)                                                                                                                                                \\
                & = \sum_{l_1,\dots,l_r} p_{l_1} \dots p_{l_r} \bra{\psi} V_{l_r}(\varphi) \cdots V_{l_1}(\varphi) \ket{\psi}                                                         \\
                & = \sum_{l_1,\dots,l_r} p_{l_1} \dots p_{l_r} \bra{\psi} \frac{\id - i\tau P_{l_r}}{\sqrt{1 + \tau^2}}\cdots \frac{\id - i\tau P_{l_1}}{\sqrt{1 + \tau^2}}\ket{\psi} \\
                & = (1 + \tau^2)^{-r/2} \bra{\psi} (\id - i\tau \sum_{l_r} p_{l_r} P_{l_r})\cdots (\id - i\tau \sum_{l_1} p_{l_1} P_{l_1})\ket{\psi}                                  \\
                & = (1 + \tau^2)^{-r/2} \bra{\psi} (\id - i\tau \sum_{l} p_{l} P_{l})^r \ket{\psi}                                                                                    \\
                & = (1 + \tau^2)^{-r/2} \bra{\psi} (\id - i\tau H)^r \ket{\psi}
\end{align}
which encodes the spectrum of the (normalized) Hamiltonian.
We summarize this conclusion in the following lemma.

\begin{lem}\label{lem:hadamard test qdrift}
    Let
    \begin{align}
        H = \sum_l p_l P_l
    \end{align}
    where $P_l^2 = \id$ and the $p_l$ are a probability distribution.
    Then the random variables $\rv{Z}$ from the measurement outcomes of a Hadamard test using qDRIFT with $r$ Pauli rotations and step size $\tau$ has expectation value
    \begin{align}
        \EE \rv{Z}(\tau,r) = g(\tau, r) =  (1 + \tau^2)^{-r/2} \bra{\psi} (\id - i\tau H)^r \ket{\psi}.
    \end{align}
\end{lem}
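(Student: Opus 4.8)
The plan is to compute $\EE\rv{Z}(\tau,r)$ directly, first averaging over the measurement outcomes of the Hadamard test for a fixed sampled circuit and then averaging over the sampled indices $l_1,\dots,l_r$. From the analysis in \cref{sec:qpe} of the circuit in \cref{fig:hadamard}, for a \emph{fixed} unitary $U$ the combined variable $\rv{Z} = \rv{X} + i\rv{Y}$ built from the Hadamard tests at $\theta = 0$ and $\theta = \pi/2$ satisfies $\EE\rv{Z} = \bra{\psi}U\ket{\psi}$, and this is linear in $U$: running a unitary drawn from some distribution and then measuring gives $\EE\rv{Z} = \sum (\text{prob})\,\bra{\psi}U\ket{\psi}$. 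Since qDRIFT samples $l_1,\dots,l_r$ independently from $p$, this yields
\[
\EE\rv{Z}(\tau,r) = \sum_{l_1,\dots,l_r} p_{l_1}\cdots p_{l_r}\,\bra{\psi} V_{l_r}(\varphi)\cdots V_{l_1}(\varphi)\ket{\psi}.
\]

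The second step is the algebraic identity that makes this factorize. Using $P_l^2 = \id$ one has $e^{-i\varphi P_l} = \cos\varphi\,\id - i\sin\varphi\,P_l$, so with $\varphi = \arctan\tau$ (hence $\cos\varphi = 1/\sqrt{1+\tau^2}$, $\sin\varphi = \tau/\sqrt{1+\tau^2}$) we get $V_l(\varphi) = (\id - i\tau P_l)/\sqrt{1+\tau^2}$. Substituting, I pull the global factor $(1+\tau^2)^{-r/2}$ out front; then, because the sum over each $l_i$ involves only the $i$-th factor, the sums distribute across the product, and $\sum_{l_i} p_{l_i}(\id - i\tau P_{l_i}) = \id - i\tau\sum_{l_i} p_{l_i} P_{l_i} = \id - i\tau H$. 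Since every factor collapses to the same operator, the product is $(\id - i\tau H)^r$, giving $\EE\rv{Z}(\tau,r) = (1+\tau^2)^{-r/2}\bra{\psi}(\id - i\tau H)^r\ket{\psi}$.

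I do not expect a genuine obstacle: the only points needing care are (i) justifying that the Hadamard-test expectation is linear in the applied unitary, so the circuit randomness may be averaged term by term, and (ii) checking that the independence of the samples $l_i$ is precisely what lets the sum over $r$-tuples factor into $r$ copies of a single-index sum, with no surviving cross terms because each Pauli rotation carries its own independent label. Everything else is the $\arctan$ trigonometry and bookkeeping of the normalization. For extra rigor one can note the derivation uses only $P_l^2 = \id$ (not that the $P_l$ are Paulis), which is exactly the generality in which the lemma is stated.
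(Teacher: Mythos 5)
Your proposal is correct and follows essentially the same route as the paper's proof: average the Hadamard-test outcome over both the measurement and the sampled indices, rewrite each rotation as $V_l(\varphi) = (\id - i\tau P_l)/\sqrt{1+\tau^2}$ using $P_l^2 = \id$ and $\varphi = \arctan\tau$, and factorize the independent sums so each factor collapses to $\id - i\tau H$. The normalization bookkeeping and the closing remark that only $P_l^2 = \id$ is needed match the paper's treatment as well.
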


This can be expressed in terms of the eigenstate overlaps and energies as
\begin{align}\label{eq:qdrift signal appendix}
    \begin{split}
        g(\tau,r) & = (1 + \tau^2)^{-r/2} \sum_{k} c_k (1 - i\tau E_k)^r                                                                                                                                       \\
                  & = \sum_{k} c_k \underbrace{\left(\frac{1 + E_k^2\tau^2}{1 + \tau^2}\right)^{\frac{r}{2}}}_{\geq \exp(- \frac12 (1 - E_k^2)r\tau^2)} \exp\mleft[-ir \arctan\mleft(\tau E_k \mright)\mright]
    \end{split}
\end{align}
The goal is to extract the value of $E_0$ from this signal, choosing appropriate values of $\tau$ and $r$.
Recall the signal obtained in \cref{eq:signal hadamard test} where we perform time evolution for time $t$.
If we choose $r = \ceil{t^2}$ and $\tau = t/r \approx t^{-1}$ we find that
\begin{align}
    g(\tau,r) & = \sum_{k} d_k e^{-i t F_k}
\end{align}
with
\begin{align}\label{eq:qdrift signal optimal}
    F_k = \tau^{-1} \arctan\mleft(\tau E_k \mright) \approx t E_k, \qquad c_k \exp(- \frac12(1 - E_k^2)) \leq d_k \leq c_k
\end{align}
which approximates $g(t)$, with the difference that the amplitudes are modified by a factor at most $\sqrt{e}$.

\subsection{Randomized Taylor expansion}\label{sec:rte}
In this section we review a refinement of qDRIFT due to Berta, Campbell and Wan~\cite{wan2022randomized}, which we will call the \emph{Randomized Taylor Expansion} (RTE) method.
It proceeds in very similar fashion to qDRIFT, and in fact the above qDRIFT algorithm corresponds exactly to keeping only the first order term in the Taylor expansion.
We write a Taylor series expansion for $\exp(-i\tau)$ from which we get
\begin{align}
    \exp(-i\tau H) & = \sum_{n=0}^{\infty} \frac{(-i\tau )^n}{n!} \left(\sum_l p_l P_l\right)^n                                                               \\
                   & = \sum_{n \even} \frac{(-1)^{n/2}}{n!} \tau^n \left(\id - \frac{i\tau}{n}\sum_l p_l P_l\right) \left(\sum_l p_l P_l\right)^n             \\
                   & = \sum_{n \even} \frac{(-1)^{n/2}}{n!} \tau^n \sum_l p_l \left(\id - \frac{ i\tau}{n+1} P_l\right) \left(\sum_l p_l P_l\right)^n         \\
                   & = \sum_{n \even}\frac{(-1)^{n/2}}{n!} \tau^n \sqrt{1 + \frac{ \tau^2}{(n+1)^2}} \sum_l p_l  V_l(\varphi_n) \left(\sum_l p_l P_l\right)^n
\end{align}
where $\varphi_n = -\arctan( \tau / (n+1))$.
This gives a decomposition
\begin{align}
    \exp(-i\tau H) = \sum_m c_m U_m
\end{align}
where $m$ is a label consisting of an even integer $n$ and $l, l_1, \dots, l_n$
\begin{align}\label{eq:unitary rte}
    \begin{split}
        c_m & = \frac{1}{n!} \tau^n \sqrt{1 + \frac{\tau^2}{(n+1)^2}} p_l p_{l_1} \dots p_{l_n} \geq 0 \\
        U_m & = (-1)^{n/2} V_l(\varphi_n) P_{l_{n}} \cdots P_{l_1}.
    \end{split}
\end{align}
Note that in principle, $n$ can be arbitrary, but one can truncate to finite $n$ at small error, see \cite{wan2022randomized}.
Again, for simulation of time $t$, we can break up in $r$ steps, and for $\tau = t/r$
\begin{align}
    \exp(-i\tau H)^r & = \sum_{m_1, \dots, m_r} c_{m_1} \dots c_{m_r} U_{m_r} \cdots U_{m_1} \\
                     & = \sum b_k W_k
\end{align}
where $k$ is a label consisting of $m_1, \dots, m_r$ and $b_k = c_{m_1} \dots c_{m_r}$, and the unitary $W_k$ is given by $U_{m_r} \cdots U_{m_1}$.
If we now let
\begin{align}\label{eq:prob dist rte}
    B = \sum_{k} b_k \qquad q_k = \frac{b_k}{B}
\end{align}
then sampling according to $q_k$ and applying a Hadamard test to $W_k$ with initial state $\ket{\psi}$ gives random variables $\rv{X}(\tau,r)$ and $\rv{Y}(\tau,r)$ such that for $\rv{Z}(\tau,r) = \rv{X}(\tau,r) + i\rv{Y}(\tau,r)$
\begin{align}
    \EE \rv{Z}(\tau,r) = \frac{1}{B} \bra{\psi} \exp(-i t H) \ket{\psi}.
\end{align}
This means that we can estimate the value of
\begin{align}
    g(\tau r) = g(t) = \bra{\psi} \exp(-i t H) \ket{\psi}
\end{align}
to precision $\gamma$ by estimating the expectation value of $\rv{Z}(\tau,r)$ to precision $\gamma/B$.
In other words, if we want to compute $g(\tau,r)$ using this method we can do so at a sampling overhead of $B^2 / \gamma^2$.
One can bound \cite{campbell2019random}
\begin{align}\label{eq:damping factor rte}
    B = \sum_k b_k = \left( \sum_m c_m\right)^r
    \leq \exp(\tau^2 r)
\end{align}
This means that one can estimate $g(t)$ with constant sampling overhead by taking the same choice of parameters as for qDRIFT with depth $r = \ceil{t^2}$ and step size $\tau = t/r \approx t^{-1}$.

Given a sample $k = (m_1,\dots, m_r)$ the resulting unitary $W_k$ consists of the composition of $r$ unitaries $U_{m_i}$, which itself consists of a single Pauli rotation together with a product of Pauli operators (which makes up a new Pauli operator), so we get a circuit with $r$ Pauli rotations.

\begin{lem}
    Let
    \begin{align}
        H = \sum_l p_l P_l
    \end{align}
    where $P_l^2 = \id$ and the $p_l$ are a probability distribution.
    Then the random variable $\rv{Z}$ resulting from Hadamard tests on RTE with $r$ steps and step size $\tau$ has expectation value
    \begin{align}
        \EE \, B \rv{Z}(\tau,r) = g(\tau r) =  \bra{\psi} \exp(-i\tau r H) \ket{\psi}.
    \end{align}
    Here $B \leq \exp(\tau^2 r)$.
\end{lem}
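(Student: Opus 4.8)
The plan is to carry out the derivation in parallel with the qDRIFT case of \cref{lem:hadamard test qdrift}, since the two arguments share the same structure: compute the Taylor expansion of $\exp(-i\tau H)$, reorganize the even-order terms so that each summand factors as a Pauli rotation times a product of Paulis, and then read off an LCU decomposition. Concretely, I would first write $\exp(-i\tau H) = \sum_n \frac{(-i\tau)^n}{n!}(\sum_l p_l P_l)^n$ and, for each even $n$, split off one factor of $\sum_l p_l P_l$ to pair it with the identity, yielding $(\id - \frac{i\tau}{n+1} P_l)$ under a relabeling of the summation index; the identity $\id - i\beta P_l = \sqrt{1+\beta^2}\, V_l(\arctan\beta)$ (with $\beta = \tau/(n+1)$), which is the same trick used in the qDRIFT computation, then turns this into $\sqrt{1 + \tau^2/(n+1)^2}\, V_l(\varphi_n)$. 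This gives the coefficients $c_m \geq 0$ and unitaries $U_m$ of \cref{eq:unitary rte}, and I would note that these $c_m$ are manifestly nonnegative (products of $p_l$'s, a factorial, a power of $\tau$, and a square root), so after summing they can be normalized to a probability distribution.

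Next I would lift this single-step LCU to $r$ steps. Writing $\exp(-i\tau H)^r = \sum_{m_1,\dots,m_r} c_{m_1}\cdots c_{m_r}\, U_{m_r}\cdots U_{m_1}$, I set $b_k = c_{m_1}\cdots c_{m_r}$ for the multi-index $k=(m_1,\dots,m_r)$, $W_k = U_{m_r}\cdots U_{m_1}$, and $B = \sum_k b_k$, $q_k = b_k/B$ as in \cref{eq:prob dist rte}. Then I would invoke the general expectation-value formula for the Hadamard test applied to a random unitary that is $W_k$ with probability $q_k$: its expected $\rv{Z}$-outcome is $\sum_k q_k \bra{\psi} W_k \ket{\psi} = \frac{1}{B}\sum_k b_k \bra{\psi} W_k \ket{\psi} = \frac{1}{B}\bra{\psi}\exp(-i\tau r H)\ket{\psi}$, which is exactly $\frac{1}{B} g(\tau r)$. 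Multiplying through by $B$ gives $\EE\, B\rv{Z}(\tau,r) = g(\tau r) = \bra{\psi}\exp(-i\tau r H)\ket{\psi}$, the stated identity.

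Finally I would establish the bound $B \leq \exp(\tau^2 r)$. Since $b_k$ factorizes over the $r$ steps, $B = (\sum_m c_m)^r$, so it suffices to show $\sum_m c_m \leq \exp(\tau^2)$, i.e. $1 + O(\tau^2)$ at second order with the right constant. Summing $c_m$ over $l, l_1,\dots,l_n$ uses $\sum_l p_l = 1$ and collapses the probability factors, leaving $\sum_{n\,\even} \frac{\tau^n}{n!}\sqrt{1 + \tau^2/(n+1)^2}$. Bounding the square root crudely by $1 + \tau^2/(n+1)^2 \leq 1 + \tau^2$ (hence $\sqrt{\cdot} \leq e^{\tau^2/2}$, or even more simply by $\cosh$-type estimates), one gets $\sum_m c_m \leq e^{\tau^2/2}\cosh(\tau) \leq \exp(\tau^2)$ for $\tau \in [0,1]$; this is the estimate already cited as \cref{eq:damping factor rte} from \cite{campbell2019random}, so I would simply reference it rather than reprove it. The only mildly delicate point — and the one I expect to spend the most care on — is the reindexing step in the single-step expansion, where pulling one copy of $\sum_l p_l P_l$ out front and absorbing it into the identity factor requires carefully tracking that the remaining $(\sum_l p_l P_l)^n$ still runs over a fresh set of independent indices $l_1,\dots,l_n$; everything after that is bookkeeping and the nonnegativity/normalization observations, which are routine.
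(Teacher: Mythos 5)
Your proposal is correct and follows essentially the same route as the paper: pair each even Taylor order $n$ with order $n+1$, rewrite $\id - \tfrac{i\tau}{n+1}\sum_l p_l P_l = \sum_l p_l\bigl(\id - \tfrac{i\tau}{n+1}P_l\bigr) = \sqrt{1+\tau^2/(n+1)^2}\,\sum_l p_l V_l(\varphi_n)$, take the $r$-fold product to get the LCU with weights $b_k$ and normalization $B=(\sum_m c_m)^r$, and read off the Hadamard-test expectation. The only (harmless) difference is that you sketch a direct proof of $\sum_m c_m \le e^{\tau^2}$ via $\sqrt{1+\tau^2}\,\cosh\tau \le e^{\tau^2}$, whereas the paper simply cites the qDRIFT reference for this bound.
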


We now briefly comment on the difference between RTE and qDRIFT.
First of all, if one chooses $\tau$ and $r$ with $r = \Omega(t^2)$, they are very similar. At every step, the probability of sampling order $n \geq 2$ in RTE scales is $\bigO(\tau^2)$, so the expected number of steps where you sample $n \geq 2$ is $\bigO(r \tau^2) = \bigO(1)$ out of $r = \bigO(t^2)$ steps.
The main difference is that RTE, with a Hadamard test, gives an \emph{unbiased} estimator of the signal $g(t)$. This is convenient for analysis. On the other hand, we have also computed a simple expression for the exact signal resulting from qDRIFT, from which one can extract the same information about the energies.
For qDRIFT, the damping depends on the energy as well; if $E_0$ is close to its minimal value of $-1$ (so the Hamiltonian is close to frustration free) this means that the damping factor is smaller.

\subsection{Partially randomized product formulas}
In \cref{sec:composite} we have discussed a partially randomized scheme, which treats certain terms in the Hamiltonian deterministically, and the remainder randomly, and here we explain and analyze this scheme in more detail.
The error analysis is particularly simple, where the signal from the partially randomized product formula has a bias from the deterministic decomposition, and a decreased amplitude due to the randomization. This intuition is illustrated in \cref{fig:partial random intuition}.
We now recall the set-up.
We write
\begin{align}\label{eq:hamiltonian composite}
    H = \underbrace{\sum_{l = 1}^{L_D} H_l}_{= H_D} + \underbrace{\sum_{m=1}^{M} h_{m} P_{m}}_{= H_R}, \qquad \lambda_R = \sum_{m} \abs{h_m}.
\end{align}
Here, the $P_m$ should square to identity, but the $H_l$ need not.
As before, we let
\begin{align}
    H_R = \lambda_R \sum_m p_m P_m
\end{align}
where the $p_m$ form a probability distribution (and we have absorbed signs into the $P_m$).

We let $\trot_{p}(\delta)$ be a deterministic product formula for $H$, written as a sum of the $H_l$ and treating $H_R$ as a single term.
This means we can write
\begin{align}
    \trot_{p}(\delta) = U_I \cdots U_1
\end{align}
where each $U_i$ is either of the form $\exp(-\delta_i H_l)$ for some real $\delta_i$, or $U_i = \exp(-\delta_i H_R)$.
Here the number of terms is $I \leq (L_D + 1)N_{\stage}$, where $N_{\stage}$ is the number of stages of the product formula. For the standard Trotter formula of order $p$, we have $N_{\stage} = 2 \cdot 5^{k-1}$.
For the second order Suzuki-Trotter formula, we get
\begin{align}
    \trot_{2}(\delta) = e^{- \frac12 i \delta H_1} \cdots e^{- \frac12 i \delta H_{L_D}} e^{-i\delta H_R} e^{- \frac12 i\delta H_{L_D}} \cdots e^{- \frac12 i \delta H_1}.
\end{align}
Let $U_{i_1}, \dots, U_{i_{N}}$ for $N \leq N_{\stage}$ denote the unitaries where we time evolve along $H_R$, for times $\delta_{i_1}, \dots, \delta_{i_{N}}$.
We have $\delta_{i_1} + \dots + \delta_{i_N} = \delta$, and $\tilde{\delta} := \abs{\delta_{i_1}} + \dots + \abs{\delta_{i_{N}}} = \bigO(\delta)$ (higher order methods require evolving backwards in time as well).
Next, we will apply RTE to each of the $U_{i_p}$.
To this end, we choose a number of steps $r_{i_p}$ to simulate $\exp(-\delta_{i_p}  H_R)$.
We may write
\begin{align}\label{eq:probabilities composite}
    \begin{split}
        U_{i_p} & = \exp(-i\delta_{i_p} H_R) = B_{i_p} \sum_k q_{i_p,k} W_{i_p,k} \\
        B_{i_p} & \leq \exp(\delta_{i_p}^2 \lambda_R^2 / r_{i_p})
    \end{split}
\end{align}
as in \cref{sec:rte}, where each $W_{i_p,k}$ is a product of Paulis $P_m$ and $r_{i_p}$ Pauli rotations.
Expanding each $U_{i_p}$ in this fashion, and taking the $s$-th power of the result, gives a linear combination of unitaries
\begin{align}
    \trot_{p}(\delta)^s = B \sum_j q_j W_j
\end{align}
where
\begin{align}\label{eq:normalization composite}
    B = (B_{i_1} \dots B_{i_{N}})^s \leq \exp\mleft(s \sum_{p} \delta_{i_p}^2 \lambda_R^2 / r_{i_p} \mright).
\end{align}
Each $W_j$ consists of $s(r_{i_1} + \dots + r_{i_{N}})$ Pauli rotations from the expansion of the evolution along $H_R$, and at most $N_{\stage} L_D s$ applications of $\exp(-\delta H_l)$ for some values of $\delta$.
Now, given $\delta$ and $s$, in order to keep the sample overhead constant, we choose $r_{i_p} = \ceil{\lambda_R^2 \abs{\delta_{i_p}} \tilde{\delta} s}$, which gives
\begin{align}
    s \sum_{p} \delta_{i_p}^2 \lambda_R^2 / r_{i_p} \leq 1.
\end{align}
The total number of Pauli rotations applied in the randomized part of the resulting circuit is
\begin{align}
    r = s\sum_p r_{i_p} = s \sum_{p} \ceil{\lambda_R^2 \delta_{i_p} \delta s} \leq \lambda_R^2 \tilde{\delta}^2 s^2 + sN_{\stage} = \bigO(\lambda_R^2 \delta^2 s^2).
\end{align}
The term $s N_{\stage}$ comes from having to round the values of the $r_{i_p}$ to integers and is a small subleading contribution.
We summarize this result as follows.

\begin{lem}\label{lem:composite simulation}
    Let $H$ be given as in \cref{eq:hamiltonian composite} and let $\trot_{p}(\delta)$ be a deterministic product formula for $H$, written as a sum of the $H_l$ and treating $H_R$ as a single term.
    Then the random variables $\rv{Z}$ which results from the measurement outcomes of a Hadamard test, using the partially randomized product formula described above, has expectation value
    \begin{align}
        \EE \, B \rv{Z}(\delta,s) =  \bra{\psi} \trot_{p}(\delta)^s \ket{\psi}.
    \end{align}
    Here $B = \bigO(1)$, and the circuit consists of at most $N_{\stage} L_Ds$ time evolution unitaries along one of the $H_l$, and $r = \bigO(\lambda_R^2 \delta^2 s^2)$ time evolution unitaries along one of the $P_m$.
\end{lem}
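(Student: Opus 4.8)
The plan is to assemble a linear-combination-of-unitaries (LCU) representation of $\trot_{p}(\delta)^s$ whose summands are circuits the Hadamard test can execute, and then invoke linearity of the Hadamard-test expectation value; the only quantitative input is the RTE bound of \cref{sec:rte}. First I would write the deterministic product formula as $\trot_{p}(\delta) = U_I \cdots U_1$, where each $U_i$ is either a deterministic evolution $\exp(-i \delta_i H_l)$ or one of the $N \le N_{\stage}$ block evolutions $\exp(-i \delta_{i_p} H_R)$, with $\sum_p \delta_{i_p} = \delta$ and $\tilde{\delta} := \sum_p \abs{\delta_{i_p}} = \bigO(\delta)$. Raising to the $s$-th power produces $sN$ such $H_R$-evolutions, interspersed with the deterministic factors.

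Next I would expand each factor $\exp(-i \delta_{i_p} H_R)$ separately via the RTE construction with $r_{i_p}$ steps, obtaining $\exp(-i \delta_{i_p} H_R) = B_{i_p} \sum_k q_{i_p,k} W_{i_p,k}$ with $(q_{i_p,k})_k$ a probability distribution, $B_{i_p} \le \exp(\delta_{i_p}^2 \lambda_R^2 / r_{i_p})$, and each $W_{i_p,k}$ a product of $r_{i_p}$ Pauli rotations interspersed with Pauli operators. Viewing every deterministic factor as a trivial one-term LCU and multiplying all $Is$ factors of $\trot_{p}(\delta)^s$ together, multiplying out gives $\trot_{p}(\delta)^s = B \sum_j q_j W_j$, where $q_j$ is the product of the relevant $q_{i_p,k}$ (hence again a probability distribution, since each factor is normalized), $B = (B_{i_1}\cdots B_{i_N})^s$ is the product of the per-block prefactors, and each $W_j$ is an explicit circuit built from deterministic evolutions $\exp(-i\delta H_l)$ and Pauli rotations. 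By the Hadamard-test identity $\EE \rv{Z} = \sum_j q_j \bra{\psi} W_j \ket{\psi}$, sampling $j \sim q$ and testing $W_j$ gives $\EE \rv{Z}(\delta,s) = B^{-1} \bra{\psi} \trot_{p}(\delta)^s \ket{\psi}$, which is the asserted identity after multiplying through by $B$.

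It then remains to pick the step counts $r_{i_p}$ so that $B = \bigO(1)$ and the gate counts are as claimed. Choosing $r_{i_p} = \ceil{\lambda_R^2 \abs{\delta_{i_p}} \tilde{\delta} s}$ gives $\delta_{i_p}^2 \lambda_R^2 / r_{i_p} \le \abs{\delta_{i_p}} / (\tilde{\delta} s)$, so by \cref{eq:normalization composite} $\log B = s \sum_p \delta_{i_p}^2 \lambda_R^2 / r_{i_p} \le \tilde{\delta}^{-1} \sum_p \abs{\delta_{i_p}} = 1$, hence $B \le e = \bigO(1)$. For the circuit size, the randomized part of $W_j$ contains $r = s \sum_p r_{i_p} \le \lambda_R^2 \tilde{\delta}^2 s^2 + s N_{\stage}$ Pauli rotations, which is $\bigO(\lambda_R^2 \delta^2 s^2)$ since $\tilde{\delta} = \bigO(\delta)$, while the deterministic part contains at most $N_{\stage} L_D s$ evolutions along the $H_l$; combining these yields the lemma.

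I expect the only real obstacle to be organizational rather than analytic: correctly bookkeeping the product over the $Is$ factors so that the product of the per-factor probability distributions is genuinely normalized and the overall prefactor factorizes exactly as $(B_{i_1}\cdots B_{i_N})^s$, and then checking that the single choice $r_{i_p} = \ceil{\lambda_R^2 \abs{\delta_{i_p}} \tilde{\delta} s}$ simultaneously keeps the sampling overhead $B^2$ constant and makes the integer-rounding contribution $s N_{\stage}$ subleading relative to $\lambda_R^2 \tilde{\delta}^2 s^2$. Everything else reduces to the RTE bound $B_{i_p} \le \exp(\delta_{i_p}^2 \lambda_R^2 / r_{i_p})$ already established in \cref{sec:rte} and to linearity of the Hadamard-test expectation value recalled in \cref{sec:random product}.
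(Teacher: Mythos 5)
Your proposal is correct and follows essentially the same route as the paper's own derivation: the same decomposition of $\trot_p(\delta)^s$ into deterministic factors and $H_R$-evolutions, the same per-block RTE expansion with the choice $r_{i_p} = \ceil{\lambda_R^2 \abs{\delta_{i_p}} \tilde{\delta} s}$ giving $B \leq e$, and the same bookkeeping yielding $r \leq \lambda_R^2 \tilde{\delta}^2 s^2 + s N_{\stage}$ Pauli rotations and $N_{\stage} L_D s$ deterministic evolutions. No gaps.
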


\begin{figure}
    \centering
    \includegraphics[width=0.5\linewidth]{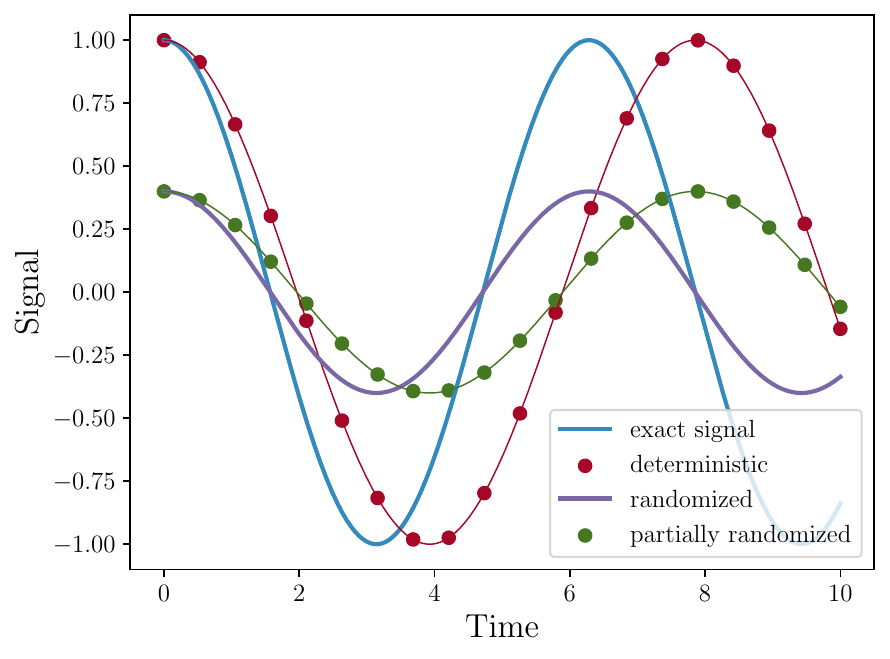}
    \caption{Intuition for the error analysis: deterministic product formulas give a discrete approximation which has a bias in the frequency, randomized product formulas give a damping of the amplitude (and no bias when using RTE); the partially randomized method has both.
    }
    \label{fig:partial random intuition}
\end{figure}

We note that the second order Trotter-Suzuki formula has the benefit in this context that it only evolves in the positive time direction, so we have $\tilde{\delta} = \delta$.
The conclusion of \cref{lem:composite simulation} is that from the Hadamard test with the partially randomized product formula, we get a signal which, up to a damping, implements the deterministic product formula which treats $H_R$ as a single term.
The resulting Trotter error can be bounded using any standard technique (such as commutator bounds).

Different schemes for composite random and deterministic product formulas have been proposed \cite{ouyang2020compilation,hagan2023composite,pocrnic2024composite}, treating a relatively small number of important terms using deterministic Trotterization and a tail of terms with small weight randomly.
The scheme described above is the same as in \cite{hagan2023composite}, where we have replaced qDRIFT by RTE.
Compared to the diamond norm bounds of \cite{hagan2023composite}, the randomized part of the product formula does not introduce any error. Additionally, in our analysis it is clear that if we are only interested in ground state energies, $\trot_p(\delta)$ only needs to be accurate on the ground state, a significantly weaker requirement on the Trotter error.
This allows for a sharper analysis of the performance of partially randomized product formulas for single-ancilla phase estimation.

A final comment is that a general potential disadvantage of (partially) randomized product formulas is that in principle for each Hadamard test one has to sample a new random circuit. In practice, it may be more convenient (e.g. for compilation purposes) to repeat the same circuit multiple times. For randomized product formulas \cite{chen2021concentration} as well as partially randomized product formulas \cite{kiss2023importance} it has been shown that this is in fact possible. These works show that with increasing circuit depth, the expectation value of observables (such as Hadamard test outcomes) rapidly concentrate with respect to the randomness of the sampled circuit.

\subsection{Halving the required time evolution}\label{sec:halving time}
There is a basic trick to reduce the maximum required evolution time when using the Hadamard test for time evolution \cite{reiher2017elucidating}.
This assumes we have a symmetric Trotter formula (meaning that $\trot_{p}(-t) = \trot_{p}(t)^\dagger$), which is the case for the even order Suzuki-Trotter formulas.
We write out the Trotter unitary in terms of the individual Hamiltonian time evolutions
\begin{align}
    \trot_p(\delta) = U_I(\delta_I) \cdots U_1(\delta_1)
\end{align}
where each unitary is time evolution along a local term $H_l$ of the Hamiltonian.
Now, for some arbitary unitary $U$, let $C_{\pm}U$ denote a controlled version
\begin{align}
    C_{\pm}U = \proj{0} \ot U^\dagger + \proj{1} \ot U
\end{align}
so the product formula, we have
\begin{align}
    C_{\pm}U_i(\delta_i) = \proj{0} \ot e^{i\delta_i H_{l_i}} + \proj{1} \ot e^{-i\delta_i H_{l_i}}.
\end{align}
By the symmetry property of the product formula, we have
\begin{align}
    C_{\pm} \trot_p(\delta) = C_{\pm} U_I(\delta_I) \cdots C_{\pm} U_1(\delta_1).
\end{align}
It is now easy to see that applying the Hadamard test to the unitary $(C_{\pm} \trot_p(\delta))^s$ gives expectation value $\bra{\psi} \trot_p(\delta)^{2s} \ket{\psi}$.
This is implemented by the circuit in \cref{fig:time halving trick}.
If the $U_i(\delta_i)$ are Pauli rotations, then it requires only Cliffords and one single-qubit rotation to implement $C_{\pm} U_i(\delta_i)$.
Therefore, this trick reduces the required number of single-qubit rotations by a factor of 2.

This trick does \emph{not} apply for randomized product formulas (which do not have the right symmetry property). However, it can be modified to apply for partially randomized product formulas.
In that case, one proceeds as above, thinking of $H$ as consisting of $H_1, \dots, H_{L_D}$ and $H_R$ as in \cref{eq:hamiltonian composite}.
This gives a circuit with controlled applications of unitaries of the form $U = \exp(\pm i\delta H_R)$.
Finally, we expand
\begin{align}
    C_{\pm} U = (\proj{0} \ot U^\dagger + \proj{1} \ot I)(\proj{0} \ot I + \proj{1} \ot U)
\end{align}
and separately apply RTE to $U$ and $U^\dagger$ (with independent randomness).
Altogether, this halves the gate count for the deterministic part of the partially randomized product formula (but not for the randomized part).

\begin{figure}[t]
    \centering
    \begin{quantikz}
        \lstick{$\ket{0}$} & \gate{H} & \ctrl{1}\gategroup[2,steps=3,style={dashed, color=gray, inner sep=6pt}]{$C_{\pm} \trot_p(\delta)$} & \qw \ \ldots\ & \ctrl{1} & \qw \ \ldots \ & \gate{R(\theta)} & \gate{H} & \meter{}\\
        \lstick{$\ket{\psi}$} & \qw & \gate{e^{\pm i \delta_1 H_{l_1}}} & \qw \ \dots\ & \gate{e^{\pm i \delta_I H_{l_I}}} & \qw \ \ldots \ & \qw & \qw & \qw
    \end{quantikz}
    \vspace{1cm}

    \centering
    \begin{quantikz}
        & \ctrl{1} & \qw \\
        & \gate[3]{e^{\pm i\delta P}} & \qw \\
        && \qw \\
        && \qw
    \end{quantikz}
    =\begin{quantikz}
        & \qw & \ctrl{1} & \qw & \ctrl{1} & \qw & \qw \\
        & \gate[3]{C} & \gate{X} & \gate[1]{e^{-i\delta Z}} & \gate{X} & \gate[3]{C^\dagger} & \qw \\
        & \qw & \qw & \qw & \qw & \qw & \qw \\
        & \qw & \qw & \qw & \qw & \qw & \qw
    \end{quantikz}

    \caption{Performing the  Hadamard test on $C_{\pm} \trot_p(\delta)^s$ allows one to estimate $\bra{\psi} \trot_p(\delta)^{2s} \ket{\psi}$.
        If $U = e^{i\delta P}$ is a Pauli rotation, we can let $C$ be a Clifford which is such that $C P C^\dagger$ acts as a Pauli $Z$ on the first qubit, and use this to implement $C_{\pm}U$ with only one single-qubit rotation.}
    \label{fig:time halving trick}
\end{figure}

\section{Robust phase estimation}\label{sec:rpe}
As a phase estimation algorithm we consider \emph{robust phase estimation} \cite{higgins2009demonstrating,kimmel2015robust,belliardo2020achieving,ni2023low}. This algorithm is particularly simple, and requires shorter depth if the guiding state is close to the ground state.
Its main disadvantage is that it requires the ground state overlap $\eta$ to be at least some constant.

The idea of robust phase estimation is to estimate the time evolution signal at times $t = 2^m$ for $m = 0, 1, \dots, M$ with $M = \ceil{\log \eps^{-1}}$.
For each $m$ the outcome of the Hadamard test $\rv{Z}(2^m)$ has expectation value
\begin{align}
    g(2^m) = \sum_k c_k \exp(-i 2^m E_k).
\end{align}
We repeat $N_m$ times and let $\overline{\rv{Z}}(2^m)$ denote the average.
Given some outcome we compute an angle $\phi_m = \arg(\overline{Z}(2^m))$
Each $m$ corresponds to one bit of precision for the estimate of $E_0$, i.e. $\phi_m$ is an approximation of $2^m E_0$ modulo $2\pi$.
The set of estimates of $E_0$ compatible with $\phi_m$ is $\{2^{-m}(2\pi k + \phi_m) : k =0,\dots,2^m-1\}$.
Given a guess $\theta_{m-1}$ for $E_0$ after $m-1$ rounds, we let $\theta_{m}$ be given by
\begin{align}
    \theta_m = 2^{-m}\left(2\pi k + \phi_m)\right)
\end{align}
for the integer $k = 0, \dots, 2^m - 1$ such that the angles $\theta_{m-1}$ an $\theta_m$ are as close as possible.
For two angles $\theta, \phi$ we let $d(\theta, \phi)$ denote the distance between the angles, so
\begin{align}
    d(\theta, \phi) = \min_{k \in \ZZ} \abs{\theta - \phi + 2k\pi}.
\end{align}
The robust phase estimation algorithm is prescribed in detail in \cref{algo:rpe vanilla}. There, the angles $\phi_j$ are treated as input (and can, for example, be obtained by sampling Hadamard test measurement outcomes).

\begin{algorithm}
    \caption{Robust phase estimation}\label{algo:rpe vanilla}
    \DontPrintSemicolon
    \SetAlgoLined
    \Input{A sequence of angles $\phi_m$, $m = 0, 1, \dots, M$.}
    \Output{A phase estimate $\theta$.}
    \BlankLine
    $\theta_{-1} \leftarrow 0$\;
    \For{$m=0,1,\dots, M$}{
        $S_m = \{2^{-m}\left(2\pi k + \phi_m\right) : k = 0, \dots, 2^{m} - 1\}$ \;
        $\theta_m \leftarrow \argmin\limits_{\theta \in S_m} d(\theta, \theta_{m-1})$, $-\pi < \theta_m \leq \pi $ \;
    }
    $\theta \leftarrow \theta_{M}$
\end{algorithm}

\subsection{Rigorous bounds for robust phase estimation}\label{sec:rigorous bounds rpe}
We will now review the argument for the scaling of robust phase estimation, and argue for what happens when using randomized simulation methods.
The algorithm gives an accurate estimate of a target $E$ if the estimate $\phi_m$ in round $m$ is within distance at most $\frac{\pi}{3}$ of $2^m E$. If this is not the case, we will say that there is an error in round $m$.
The Heisenberg scaling of robust phase estimation is achieved by making sure that the probability of error in early rounds is sufficiently small.
The following results are derived in \cite{higgins2009demonstrating,kimmel2015robust,belliardo2020achieving} and are the key to proving Heisenberg limited scaling as well as robustness of the phase estimation algorithm.

\begin{lem}\label{lem:heisenberg scaling}
    Let $\xi \leq 1$.
    \begin{enumerate}
        \item If in \cref{algo:rpe vanilla}, $d(\phi_m, 2^m E) < \frac{\pi}{3}$ for $m = 0, 1, \dots, m$, then $\theta_m$ in \cref{algo:rpe vanilla} is such that $d(\theta_m, E) \leq 2^{-m} \frac{\pi}{3}$.
              If $d(\phi_m, 2^m E) < \frac{\pi}{3}$ for $m = 0,\dots,M-1$ and $d(\phi_M, 2^M E) \leq \xi$, then $d(\theta, E) \leq 2^{-M}\xi$.
        \item Suppose that $\{\rv{\phi}_m\}_{m=0}^M$ are independent real random variables.
              such that for $m = 1, \dots, M$ we have
              \begin{align}
                  \prob(d(\rv{\phi}_m, 2^m E) \geq \frac{\pi}{3}) \leq \xi^2 4^{- \alpha(M - m)}
              \end{align}
              for some $\alpha > 1$, and
              \begin{align}
                  \EE \, d(\rv{\phi}_M, 2^M E)^2 \leq \xi^2.
              \end{align}
              Then applying \cref{algo:rpe vanilla} to $\{\rv{\phi}_m\}_{m=0}^M$ gives an estimate $\theta$ of $E$ with
              \begin{align}
                  \EE \, d(\theta, E)^2 \leq C\xi^2 4^{-M}
              \end{align}
    \end{enumerate}
\end{lem}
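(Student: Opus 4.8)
\emph{Proof sketch.} I would prove the deterministic statement (1) by induction and then feed it into (2) through a decomposition of the sample space according to the \emph{first} round that fails.

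For (1): induct on $m$ with hypothesis $d(\theta_m,E)\le 2^{-m}\pi/3$. For $m=0$ the set $S_0$ is the single point $\phi_0$, so $\theta_0=\phi_0$ and $d(\theta_0,E)=d(\phi_0,E)<\pi/3$. For the inductive step, note that $S_m$ consists of $2^m$ equally spaced points on the circle (spacing $2\pi\,2^{-m}$), and that among them there is a point $\hat\theta_m$ with $d(\hat\theta_m,E)=2^{-m}d(\phi_m,2^m E)<2^{-m}\pi/3$. By the inductive hypothesis, $d(\hat\theta_m,\theta_{m-1})\le d(\hat\theta_m,E)+d(E,\theta_{m-1})<2^{-m}\pi/3+2^{-(m-1)}\pi/3=\pi\,2^{-m}$, which is half the spacing of $S_m$, so $\hat\theta_m$ is the \emph{unique} nearest point of $S_m$ to $\theta_{m-1}$; hence $\theta_m=\hat\theta_m$ and $d(\theta_m,E)<2^{-m}\pi/3$. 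The refinement for the last round is the same step at $m=M$ with the weaker input $d(\phi_M,2^M E)\le\xi$, which is still $<\pi/3$ because $\xi\le 1<\pi/3$, giving $\theta_M=\hat\theta_M$ and $d(\theta_M,E)=2^{-M}d(\phi_M,2^M E)\le 2^{-M}\xi$.

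For (2): first record a deterministic \emph{drift bound}: since $\theta_{m+1}$ is the point of $S_{m+1}$ (spacing $2\pi\,2^{-(m+1)}$) nearest $\theta_m$, we always have $d(\theta_{m+1},\theta_m)\le\pi\,2^{-(m+1)}$, hence $d(\theta_M,\theta_m)<\pi\,2^{-m}$. Write $\mathcal E_m=\{d(\rv{\phi}_m,2^m E)\ge\pi/3\}$ and partition the sample space into $F=\bigcap_{m=0}^{M-1}\mathcal E_m^{\,c}$ and, for $j=0,\dots,M-1$, $A_j=\mathcal E_j\cap\bigcap_{m<j}\mathcal E_m^{\,c}$ ("round $j$ is the first failure"). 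On $A_j$, rounds $0,\dots,j-1$ succeed, so (1) gives $d(\theta_{j-1},E)\le 2^{-(j-1)}\pi/3$ for $j\ge1$, and the drift bound then gives $d(\theta_M,E)\le d(\theta_{j-1},E)+d(\theta_M,\theta_{j-1})=O(2^{-j})$ \emph{whatever happens in rounds $\ge j$} (for $j=0$ simply $d(\theta_M,E)\le\pi$). This is the crux: conditioning on the \emph{first} failure means the accurate prefix forces that branch to cost only $O(2^{-j})$, not $O(1)$. Hence $\EE[d(\theta_M,E)^2\mathbf 1_{A_j}]=O(4^{-j})\,\prob(\mathcal E_j)=O\big(4^{-j}\,\xi^2 4^{-\alpha(M-j)}\big)$, and because $\alpha>1$ the series $\sum_{j=0}^{M-1}4^{-j-\alpha(M-j)}=4^{-\alpha M}\sum_j 4^{(\alpha-1)j}$ is dominated by its last term and is $O(4^{-M})$.

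It remains to bound $\EE[d(\theta_M,E)^2\mathbf 1_F]$. On $F$, (1) gives $d(\theta_{M-1},E)\le 2^{-(M-1)}\pi/3$; split on whether round $M$ succeeds. If it does, the inductive step of (1) at $m=M$ forces $\theta_M=\hat\theta_M$ with $d(\theta_M,E)=2^{-M}d(\rv{\phi}_M,2^M E)$, whose square has expectation $\le 4^{-M}\xi^2$ by the second-moment hypothesis; if it fails, the drift bound gives $d(\theta_M,E)\le(5\pi/3)\,2^{-M}$, and $\prob(\mathcal E_M)\le\xi^2$ by the tail bound at $m=M$. Either way $\EE[d(\theta_M,E)^2\mathbf 1_F]=O(\xi^2 4^{-M})$, and adding the two contributions gives $\EE\,d(\theta,E)^2\le C\xi^2 4^{-M}$. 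The main obstacle throughout is producing the factor $4^{-M}$ rather than a mere $O(\xi^2)$: a plain union bound over failures, or conditioning on the \emph{last} failure, would only give $O(\xi^2)$, whereas the first-failure decomposition together with $\alpha>1$ is precisely what collapses the geometric series to $O(4^{-M})$. (One should read the hypothesis as also bounding $\prob(\mathcal E_0)$ in the same exponential manner, which the argument requires.)
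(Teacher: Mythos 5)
Your proof is correct and takes essentially the same route as the paper's: induction with the half-spacing argument for part (1), and a decomposition over the \emph{first} failing round combined with the deterministic drift bound $d(\theta_l,\theta_{l+1})\le \pi\,2^{-l}$ and the geometric series collapsed by $\alpha>1$ for part (2). Your extra care in the last round (splitting on whether round $M$ succeeds, using the tail bound at $m=M$) and your remark that the tail hypothesis must also be read to cover $m=0$ match what the paper's own proof implicitly uses.
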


\begin{proof}
    The argument for the first claim is by induction. For $m = 0$, the result holds by assumption. Suppose that $d(\phi_m, 2^m E) < \frac{\pi}{3}$ and $d(\theta_{m-1}, E) \leq 2^{-m + 1} \frac{\pi}{3}$.
    Then the candidate set $S_m$ contains an estimate which is $(2^{-m} \frac{\pi}{3})$-close to $E$ given by $2^{-m}(2\pi k + \phi_m)$. Let $k'$ be the integer such that $\theta_m = 2^{-m}(2\pi k' + \phi_m)$, then we would like to show that $k = k'$. The distance between candidates in $S_m$ is at least $2^{-m +1} \pi$, so it suffices to show that $d(\theta_{m - 1}, 2^{-m}(2\pi k + \phi_m)) < 2^{-m} \pi$.
    We use the triangle inequality, and the induction hypothesis that $d(\theta_{m-1}, E) \leq 2^{-m + 1} \frac{\pi}{3}$ to confirm this:
    \begin{align}
        d(\theta_{m - 1}, 2^{-m}(2\pi k + \phi_m)) \leq d(\theta_{m-1}, E) + d(E, 2^{-m}(2\pi k + \phi_m)) < 2^{-m + 1} \frac{\pi}{3} + 2^{-m} \frac{\pi}{3} = 2^{-m} \pi.
    \end{align}
    In the last round, if $d(\phi_M, 2^M E) \leq \xi$, then $d(\theta, E) \leq 2^{-M}\xi$.
    Next, if for $l \leq m$ we have $d(\phi_l, 2^l E) < \frac{\pi}{3}$ (but possibly not for $l > m$), then the output $\theta = \theta_M$ satisfies $d(\theta_M, E) \leq 2^{-m + 3} \frac{\pi}{3}$.
    This follows from the observation that the algorithm is such that $d(\theta_l, \theta_{l+1}) \leq 2^{-l} \pi$, so
    \begin{align}
        d(\theta_M, E) \leq d(\theta_m, E) + \sum_{l = m}^{M - 1} d(\theta_l, \theta_{l+1}) \leq 2^{-m} \frac{\pi}{3} + \underbrace{\sum_{l = m}^{M - 1} 2^{-l} \pi}_{\leq 2^{-m + 1} \pi} \leq 2^{-m + 3} \frac{\pi}{3}.
    \end{align}
    We now consider the case where the angles $\rv{\phi_m}$ are random variables.
    Let $p_m$ denote the probability that $m = 0, \dots, M-1$ is the smallest value for which we have an error (so $d(\phi_m, 2^m E) \geq \frac{\pi}{3}$), and let $p_M$ denote the probability there are no errors for $m = 0, \dots, M-1$.
    Then we can bound the mean square error
    \begin{align}
        \EE d(\theta_M, E)^2 \leq \sum_{m = 0}^{M-1} p_m \left(\frac{8 \pi}{3 \cdot 2^{m-1}}\right)^2 + p_M 4^{-M}\EE \, d(\rv{\phi}_M, 2^M E)^2.
    \end{align}
    We may bound $p_m$ by the probability of having an error in round $m$, so $p_m \leq \xi^2 4^{- \alpha(M - m)}$, and $p_M \leq 1$ which gives
    \begin{align}
        \EE \, d(\theta_M, E)^2 & \leq \sum_{m = 0}^{M-1} \xi^2 4^{- \alpha(M - m)} \left(\frac{8 \pi}{3 \cdot 2^{m-1}}\right)^2 + \xi^2 4^{-M}          \\
                                & = \left(1 + \left(\frac{16\pi}{3}\right)^2 \sum_{m=0}^{M-1} 4^{- (\alpha - 1)(M-m)} \right) \xi^2 4^{-M}               \\
                                & \leq \left(1 + \frac{1}{4^{\alpha - 1} - 1} \left(\frac{16\pi}{3} \right)^2 \right) \xi^2 4^{-M}      = C \xi^2 4^{-M}
    \end{align}
    for a constant $C$ which only depends on $\alpha$.
\end{proof}

In \cref{sec:rpe empirical} we will give numerical estimates for the relevant constant factors.
The angles $\phi_m$ are obtained from a Hadamard test, simulating time evolution for time $2^m$.
We will assume that we have a guiding state $\ket{\psi}$ with sufficiently high ground state overlap.
We first repeat the following fact from \cite{ni2023low}:

Suppose $Z \in \CC$, $d_k \geq 0$ and $E_k \in \RR$.
If
\begin{align}\label{eq:bound on z}
    \abs{Z - \sum_{k} d_k \exp(i E_k)} \leq \beta, \qquad \frac{\beta + \sum_{k \geq 1} d_k}{d_0} \leq 1
\end{align}
then we can bound
\begin{align}
    \abs{Z - d_0  \exp(i E_0)} \leq \beta + \sum_{k \geq 1} d_k.
\end{align}
which implies that the sine of the angle between $Z$ and $d_0 \exp(iE_0)$ satisfies $\abs{\sin(d(E_0, \arg(Z)))} \leq (\beta + \sum_{k \geq 1} d_k)/ d_0$ and hence
\begin{align}\label{eq:angle rpe}
    d(E_0, \arg(Z)) \leq \arcsin\mleft(\frac{\beta + \sum_{k \geq 1} d_k}{d_0} \mright).
\end{align}


We run robust phase estimation on a signal estimated by a Hadamard test, and estimate the energy $E_0$ based on the outcome $\theta = \theta_M$.
We consider two variants. For the randomized methods, we assume we have a normalized Hamiltonian $H = \sum_l p_l P_l$ (in general that means we have to rescale $\eps$ with $\lambda$).

\begin{itemize}
    \item We estimate $\phi_j$ from applying a Hadamard test to $e^{-iHt}$, i.e. from the exact time evolution, for time $t = 2^m$.
          We use $2N_m$ samples to get an estimate $Z_m$ of $g(2^m) = \bra{\psi} e^{-iH 2^m} \ket{\psi}$, and let $\phi_m = -\arg(Z_m)$.
          We estimate $E_0$ by $\theta_M$.
          The total amount of time evolution we use is
          \begin{align}
              t_{\tot} = \sum_{m=0}^M 2N_m 2^{m}.
          \end{align}
    \item We estimate $\phi_j$ from applying Hadamard tests using a randomized product formula (either qDRIFT or RTE) for time evolution $t = 2^m$ with $r_m = \Omega(2^{2m})$ Pauli rotations, and time step $\tau_m = t / r_m$.
          We use $2N_m$ samples to get an estimate $Z_m$ of of the signal, and let $\phi_m = -\arg(Z_m)$.
          For RTE, we estimate $E_0$ by $\theta_M$, and in the case of qDRIFT, we estimate $E_0$ by $\tau_M^{-1} \tan (\tau_M \theta_M)$.
          In both cases, the total number of Pauli rotations used is
          \begin{align}\label{eq:total rotations randomized}
              r_{\tot} = \sum_{m=0}^M 2N_m r_m.
          \end{align}
\end{itemize}

The argument for the exact time evolution model in the following result follows \cite{higgins2009demonstrating,kimmel2015robust,belliardo2020achieving} for \ref{it:rpe time evolution} and \cite{ni2023low} for \ref{it:rpe time depth}. A difference compared to \cite{ni2023low} is that for \ref{it:rpe random product} we bound the mean square error, instead of bounding a success probability for an error guarantee.
The application to randomized product formulas in \ref{it:rpe random product} and \ref{it:rpe rotation depth} is a straightforward consequence of the analysis.

\begin{thm}\label{thm:rpe}
    Suppose that we have a guiding state $\ket{\psi}$ with $c_0 \geq \eta > 4 - 2\sqrt{3} \approx 0.53$.
    \begin{enumerate}
        \item\label{it:rpe time evolution} In the exact time evolution model, we obtain an estimate with root mean square error $\eps$ using $t_{\max} = \bigO(\frac{1}{\eps})$, and $t_{\tot} = \bigO(\frac{1}{\eps})$.
        \item\label{it:rpe random product} When using RTE or qDRIFT, we obtain an estimate with root mean square error $\eps$ using maximal number of rotations per circuit $r_{\max} = \bigO(\frac{1}{\eps^2})$ and total number of rotations $r_{\tot} = \bigO(\frac{1}{\eps^2})$.
    \end{enumerate}
    In the limit $\eta \to 1$, for $\xi > \arcsin(\frac{1 - \eta}{\eta})$, we can reduce the maximal depth.
    \begin{enumerate}[resume]
        \item\label{it:rpe time depth} In the exact time evolution model, we obtain an estimate with root mean square error $\eps$ using $t_{\max} = \bigO(\frac{\xi}{\eps})$, and $t_{\tot} = \bigO(\frac{1}{\xi\eps})$.
        \item\label{it:rpe rotation depth} When using RTE or qDRIFT, we obtain an estimate with root mean square error $\eps$ using maximal number of rotations per circuit $r_{\max} = \bigO(\frac{\xi^2}{\eps^2})$ and total number of rotations $r_{\tot} = \bigO(\frac{1}{\eps^2})$.
    \end{enumerate}

\end{thm}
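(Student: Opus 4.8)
The plan is to derive all four statements from the analysis of the robust phase estimation recursion in \cref{lem:heisenberg scaling}: one runs \cref{algo:rpe vanilla} on the angle estimates $\phi_m = -\arg(Z_m)$ extracted from Hadamard tests at times $2^m$ and checks its hypotheses. The only input about the time-evolution subroutine is the \emph{exact} form of the Hadamard signal $\EE\rv Z$: \cref{eq:signal hadamard test} in the exact model, \cref{lem:hadamard test qdrift} (equivalently \cref{eq:qdrift signal appendix}) for qDRIFT, and the RTE signal lemma of \cref{sec:rte}. In each case the signal reads $\sum_k d_k e^{-i\psi_k}$ with $d_0$ bounded below and $\sum_{k\geq1}d_k\leq 1-\eta$, so the single-dominant-frequency estimate \cref{eq:angle rpe} of \cite{ni2023low} gives, on the event $\abs{Z_m-g(2^m)}\leq\beta$, the deterministic bound $d(\phi_m,\psi_0)\leq\arcsin\!\left((\beta+1-d_0)/d_0\right)$.

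\textbf{Exact model, statements \ref{it:rpe time evolution} and \ref{it:rpe time depth}.} For \ref{it:rpe time evolution} take $M$ with $2^{-M}=\Theta(\eps)$. Here $d_0=c_0\geq\eta$ and $\psi_0=2^m E_0$, and the strict inequality $\eta>4-2\sqrt3$ gives a fixed $\beta_0>0$ with $\arcsin((\beta_0+1-\eta)/\eta)<\pi/3$. A Hoeffding bound on the $\pm1$-valued Hadamard outcomes gives $\prob(\abs{Z_m-g(2^m)}>\beta_0)\leq 4e^{-N_m\beta_0^2/4}$, so $N_m=\bigO(M-m+1)$ makes $\prob(d(\phi_m,2^mE_0)\geq\pi/3)=\bigO(4^{-2(M-m)})$. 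By part 1 of \cref{lem:heisenberg scaling} the all-good estimate is $\bigO(2^{-M})$-accurate, and since an error in round $m$ shifts the final estimate by only $\bigO(2^{-m})$ (as in the proof of \cref{lem:heisenberg scaling}), the mean square error is $\bigO(4^{-M})=\bigO(\eps^2)$. Cost: $t_{\max}=2^M=\bigO(\eps^{-1})$ and $t_{\tot}=\sum_m 2N_m2^m=\bigO(2^M)=\bigO(\eps^{-1})$, using $\sum_{j\geq0}(j+1)2^{-j}<\infty$. For \ref{it:rpe time depth} one stops at $M'$ with $2^{-M'}=\Theta(\eps/\xi)$ and uses the small trade-off parameter $\xi$: the hypothesis $\xi>\arcsin((1-\eta)/\eta)$ gives $\beta_1>0$ with $\arcsin((\beta_1+1-\eta)/\eta)\leq\xi$, so $N_{M'}=\bigO(\xi^{-2})$ Hadamard repetitions at the top level give $\EE d(\rv\phi_{M'},2^{M'}E_0)^2\leq\xi^2$, and \cref{lem:heisenberg scaling} yields mean square error $\bigO(\xi^2 4^{-M'})=\bigO(\eps^2)$. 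Cost: $t_{\max}=2^{M'}=\bigO(\xi/\eps)$ and $t_{\tot}=\sum_{m<M'}2N_m2^m+2N_{M'}2^{M'}=\bigO(2^{M'})+\bigO(\xi^{-2})\cdot\bigO(\xi/\eps)=\bigO(1/(\xi\eps))$.

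\textbf{Randomized models, statements \ref{it:rpe random product} and \ref{it:rpe rotation depth}.} Run the same recursion with exact evolution at time $2^m$ replaced by a randomized product formula of depth $r_m=\Theta(4^m)$ and step $\tau_m=2^m/r_m$, chosen so $r_m\tau_m^2\leq\rho$ for a small constant $\rho>0$. For RTE the estimator $B_m\overline{\rv Z}_m$ of $g(2^m)=\sum_k c_k e^{-i2^mE_k}$ is unbiased with $B_m\leq e^\rho=\bigO(1)$, so the only change is an $\bigO(1)$ inflation of the Hoeffding variance, and the same $N_m$ recover $E_0$ with the same error guarantee. For qDRIFT, \cref{eq:qdrift signal appendix} gives signal $\sum_k c_k\rho_k^{(m)}e^{-i\psi_k}$ with $\psi_k=r_m\arctan(\tau_m E_k)$ and damping $\rho_k^{(m)}\geq e^{-\rho/2}$, so $d_0=c_0\rho_0^{(m)}\geq e^{-\rho/2}\eta$ still clears $4-2\sqrt3$ once $\rho$ is small enough (using the strictness of $\eta>4-2\sqrt3$); phase estimation recovers $F_0:=\tau_M^{-1}\arctan(\tau_M E_0)$, whence $E_0=\tau_M^{-1}\tan(\tau_M\theta_M)$, the reparametrization adding only $\bigO(\tau_M^2)=\bigO(\eps^2)$ error. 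Cost, with $r_m$ in place of $2^m$: $r_{\max}=r_M=\bigO(\eps^{-2})$ and $r_{\tot}=\sum_m 2N_m r_m=\bigO(4^M)=\bigO(\eps^{-2})$, using $\sum_{j\geq0}(j+1)4^{-j}<\infty$; stopping at $M'$ gives $r_{\max}=r_{M'}=\bigO(\xi^2/\eps^2)$ while the boosted last round contributes $\bigO(\xi^{-2})\cdot\Theta(\xi^2\eps^{-2})$, so $r_{\tot}=\bigO(\eps^{-2})$ independently of $\xi$.

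\textbf{Main obstacle.} The routine parts are the Hoeffding calibration of the $N_m$ and the two geometric sums. The delicate part is the last-round analysis behind the depth reductions \ref{it:rpe time depth} and \ref{it:rpe rotation depth}: showing that a shallow circuit of depth $\bigO(\xi/\eps)$ (resp.\ $\bigO(\xi^2/\eps^2)$ rotations), repeated $\bigO(\xi^{-2})$ times, already pins $E_0$ down to accuracy $\eps$. This is exactly where the hypothesis $\xi>\arcsin((1-\eta)/\eta)$ is needed, so that \cref{eq:angle rpe} leaves a positive slack $\beta_1$, and where one converts an $L^2$ bound on the averaged signal into an $L^2$ bound on its argument using that $\abs{g}$ stays bounded away from $0$ for $\eta$ large. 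For qDRIFT there is the extra wrinkle that the recovered phase is $r\arctan(\tau E_0)$ rather than $2^m E_0$ and that the damping slightly erodes $d_0$; both are absorbed by the $\tan$ reparametrization and by choosing $\rho$ small, which the strict inequality $\eta>4-2\sqrt3$ permits.
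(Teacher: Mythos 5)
Your overall route is the paper's own: run \cref{algo:rpe vanilla} on angles extracted from Hadamard tests, verify the hypotheses of \cref{lem:heisenberg scaling} via the geometric bound \cref{eq:angle rpe} together with Hoeffding, use the exact signal expressions for qDRIFT and RTE in place of diamond-norm bounds, handle qDRIFT's modified frequency by the $\tan$ reparametrization, and obtain the depth reduction from an $L^2$ analysis of the last round. Two points, one of which is a real gap as written.

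First, the gap: for statements \ref{it:rpe time depth} and \ref{it:rpe rotation depth} you boost only the last round to $N_{M'} = \bigO(\xi^{-2})$ and keep $N_m = \bigO(M'-m+1)$ for $m < M'$ (this is what your sums $\sum_{m<M'} 2N_m 2^m = \bigO(2^{M'})$ presuppose). But part 2 of \cref{lem:heisenberg scaling} requires the intermediate-round failure probabilities to be at most $\xi^2 4^{-\alpha(M-m)}$, not merely $4^{-\alpha(M-m)}$. With your $N_m$, an error in round $m = M'-1$ occurs with probability $\Omega(4^{-2})$ and contributes $\Omega(4^{-M'})$ to the mean square error, which exceeds the target $\bigO(\xi^2 4^{-M'}) = \bigO(\eps^2)$ by a factor $\xi^{-2}$, so the claimed conclusion $\EE\, d(\theta,E_0)^2 = \bigO(\xi^2 4^{-M'})$ does not follow. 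The paper fixes this by taking $N_m = \beta^{-2}\bigl(2\log(\xi^{-1}) + \bigO(M-m)\bigr)$ in every round; since $\xi^2\log(\xi^{-1}) = \bigO(1)$, this extra factor does not change the final scalings ($t_{\tot} = \bigO(\xi^{-1}\eps^{-1})$, $r_{\tot} = \bigO(\eps^{-2})$), but your verification of the lemma's hypothesis needs this correction.

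Second, a smaller deviation: for qDRIFT you take $r_m = \Theta(4^m)$, so $\tau_m$ varies with $m$ and the frequency encoded in round $m$ is $\tau_m^{-1}\arctan(\tau_m E_0)$, which drifts by $\bigO(2^{-m})$ between rounds; for small $m$ this is an $\bigO(1)$ systematic offset eating into the $\pi/3$ slack, which your write-up does not address (the $\bigO(\tau_M^2)$ remark only covers the final conversion). The paper's proof avoids this by fixing $\tau = 2^{-M}$, i.e.\ $r_m = 2^{m+M}$, so that $F_0$ is identical in every round, and notes that your choice also works provided $r_m$ is enlarged in the first few (cheap) rounds; either adopt the constant-$\tau$ choice or fold the drift explicitly into the round-$m$ error budget.
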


\begin{proof}
    We estimate $X_m$ and $Y_m$ using $N_m$ Hadamard tests, and thus estimate $Z_m = X_m + iY_m$ using $2N_m$ Hadamard tests.
    For the case where we use a randomized product formula, we take $r_m = 2^{m + M}$.
    The expectation value of $\rv{Z}_m$ is given by
    \begin{align}
        \EE \rv{Z}_m = \sum_{k} d_k e^{-i2^m F_k}
    \end{align}
    where $F_k = E_k$ for the case of exact time evolution and RTE, and $F_k = 2^{M} \arctan(2^{-M} E_k)$ for qDRIFT.
    Note that if $\abs{F_0 - \theta_M} \leq \eps$, then
    \begin{align}
        \abs{E_0 - 2^M \tan(2^{-M} \theta_M)} & \leq 2^M\abs{\tan(2^{-M}\theta_M) - \tan(2^{-M}F_0)} \\
                                              & \leq \eps + \bigO(2^{-2M} \eps)
    \end{align}
    so it suffices to get an accurate estimate for $F_0$.
    For qDRIFT, we assume for convenience that $\abs{E_0} \geq \abs{E_k}$ for $k \geq 1$ (which can always be achieved by shifting $H$ with a constant factor).

    For exact time evolution, $d_k = c_k$, for the randomized methods $e^{-1}c_k \leq d_k \leq c_k$ (while $d_k$ also depends on $m$, this bound holds for all $m$).
    In particular, $d_0 \geq e^{-1}c_0 \geq e^{-1}\eta$.
    We let $0 < \beta = \eta(1 + \sin(\frac{\pi}{3})) - 1$ for the case with exact time evolution and $0 < \beta = e^{-\frac12}(\eta(1 + \sin(\frac{\pi}{3})) - 1)$ for the randomized cases.
    This choice of $\beta$ is such that \cref{eq:bound on z} is satisfied
    \begin{align}
        \arcsin \mleft(\frac{\beta + \sum_{k \geq 1} d_k}{d_0} \mright) \leq \frac{\pi}{3}.
    \end{align}
    Now, by \cref{eq:angle rpe} and Hoeffding's inequality we see that for $\rv{\phi}_m = -\arg(\rv{Z}_m)$
    \begin{align}\label{eq:bound prob error}
        \begin{split}
            \prob\left(d(\rv{\phi}_m, F_0) \geq \frac{\pi}{3}\right) & \leq \prob\left(\abs{\rv{Z}_m - \EE \rv{Z}_m} \geq \beta\right) \leq \prob\left(\abs{\rv{X}_m - \EE \rv{X}_m} \geq \frac{\beta}{\sqrt2}\right) + \prob\left(\abs{\rv{Y}_m - \EE \rv{Y}_m} \geq \frac{\beta}{\sqrt2}\right) \\
                                                                     & \leq 4 \exp\mleft( -\frac{N_m \beta^2}{4} \mright).
        \end{split}
    \end{align}
    In particular, we find that by choosing $N_m = \beta^{-2}(2\log(\xi^{-1}) +\bigO(M - m))$ we can achieve
    \begin{align}\label{eq:correctnes rpe condition}
        \prob(d(\rv{\phi}_m, 2^m F_0) \geq \frac{\pi}{3}) \leq \xi^2 4^{-\alpha (M - m)},
    \end{align}
    which is a necessary condition for the RPE algorithm to succeed (see \cref{lem:heisenberg scaling}).
    We now discuss the first part of the Theorem and take $\xi$ constant.
    By \cref{lem:heisenberg scaling} and \cref{eq:correctnes rpe condition} $\eps^2$ is bounded as $\bigO(4^{-M})$.
    The total required evolution time is given by
    \begin{align}
        t_{\tot} = \sum_{m=0}^M 2N_m 2^m = \bigO(2^M)
    \end{align}
    and similarly $r_{\tot} = \bigO(2^{2M})$.
    This results in $t_{\tot} = \bigO(\eps^{-1})$ and $r_{\tot} = \bigO(\eps^{-2})$ and proves statements \ref{it:rpe time evolution} and \ref{it:rpe random product} of the Theorem.

    For \ref{it:rpe time depth} and \ref{it:rpe rotation depth}, we choose $N_M = \bigO(\beta^{-2} \xi^{-2})$ samples in the last round.
    To bound the error, we bound by
    \begin{align}\label{eq:expected error last round}
        \EE \, d(\rv{\phi}_M, 2^M F_0)^2 \leq \prob\mleft(\abs{\rv{Z}_M - \EE \rv{Z}_M} \geq \beta\mright)\pi^2 + \EE \left[ d(\rv{\phi}_M, 2^M F_0)^2 \bigg\vert \, \abs{\rv{Z}_M - \EE \rv{Z}_M} < \beta \right].
    \end{align}
    The first term is bounded by \cref{eq:bound prob error}, so $\prob\mleft(\abs{\rv{Z}_M - \EE \rv{Z}_M} \geq \beta\mright) = 4\exp(-\Omega(\xi^{-2})) = \bigO(\xi^{-2})$.
    For the second term, we use that for $\delta = \abs{{Z}_M - \EE \rv{Z}_M} < \beta$ we have
    \begin{align}
        d(\phi_M, 2^M F_0) \leq \arcsin\mleft(\frac{\delta + (1 - \eta)}{\eta}\mright) = \bigO(\delta + (1 - \eta))
    \end{align}
    (which for \cref{lem:heisenberg scaling} we would like to be $\bigO(\xi^2)$).
    Here we use \cref{eq:angle rpe}, $\eta^{-1}(\delta + (1 - \eta)) \leq \frac{\pi}{3}$ and $\eta$ is close to 1.
    This can be used to see that the second term in \cref{eq:expected error last round} can be bounded by
    \begin{align}
        \EE \, \arcsin\mleft(\frac{\abs{{Z}_M - \EE \rv{Z}_M} + (1 - \eta)}{\eta}\mright)^2 = \bigO\mleft(\EE \, \abs{\rv{Z}_M - \EE \rv{Z}_M}^2 + (1 - \eta)^2 \mright) = \bigO(\xi^2)
    \end{align}
    if $N_M = \Omega(\xi^{-2})$.
    By \cref{lem:heisenberg scaling} and \cref{eq:correctnes rpe condition}, this implies root mean square error $\eps = \bigO(\xi 2^{-M})$.
    The maximal evolution time is $t_{\max} = 2^M = \bigO(\xi \eps^{-1})$, and the randomized methods use $r_{\max} = r_M = \bigO(\xi^2 \eps^{-2})$, and
    \begin{align}
        t_{\tot} = \sum_{m=0}^M 2N_m 2^m = \sum_{m=0}^{M}\bigO(\beta^{-2} (\log(\xi^{-1}) + (M - m)) 2^m) + \bigO(\xi^{-2} 2^M) = \bigO(\xi^{-2} 2^M) = \bigO(\xi^{-1} \eps^{-1}).
    \end{align}
    With the randomized methods, the total number of rotations is similarly given by
    \begin{align}\label{eq:total rotations random}
        r_{\tot} = \sum_{m=0}^M 2N_m r_m = \bigO(\xi^{-2} 2^{2M}) = \bigO(\eps^{-2}).
    \end{align}
\end{proof}

We briefly comment on the choice of depth $r_m$ for the randomized product formulas, what choice minimizes the total cost, and what is the resulting difference between RTE and qDRIFT.
Recall that the depth chosen determines how much the amplitude of the signal is damped, as in \cref{eq:qdrift signal appendix} and \cref{eq:damping factor rte}. If the relevant component of the signal is damped by a factor $B$, we need to scale the number of samples by $\bigO(B^2)$.
For RTE for time $t$ and $r$ rotations, we get a damping factor $B \leq \exp(t^2 / r)$, which means that we have to take a number of samples scaling with $B^2 \leq \exp(2t^2/r)$ to estimate the signal to constant precision. The total cost scales as $B^2 r$, which is optimized by $r = 2t^2$.
In particular, for $t = 2^m$, we should choose $r_m = 2^{2m+1}$.
For qDRIFT, for our proof analysis above it was convenient to choose $r_m = 2^{M + m}$, since that means that the we have a constant value of $F_k$ over different rounds $m$.
Alternatively, one may take $r_m = 2^{2m}$, in which case in round $m$ we have $2^m F_0 = 2^{2m} \arctan(2^{-m} E_0) = 2^m E_0 + \bigO(2^{-m})$.
Since we only need to estimate $2^m E_0$ to constant precision in the $m$-th round, this does not make a significant difference (one has to be careful for small $m$, but in that case it is anyways cheap to take $r_m$ larger).
By \cref{eq:qdrift signal appendix}, we have $c_0 \geq \exp(- t^2 (1 - E_0^2)/ 2r) \geq \exp(- t^2 / 2r)$.
In this case, optimizing the total resources means one should take $r = t^2$ (so this is a factor of 2 more efficient than RTE).
One caveat is that the damping depends on $E_k$ for qDRIFT, which makes a difference if we do not start with the ground state. This is beneficial if $\abs{E_0} \leq \abs{E_k}$ for all $k \neq 0$ with $c_k > 0$ (the excited states are damped more than the ground state).

\subsection{Performance of robust phase estimation}\label{sec:rpe empirical}
In the previous section we saw that robust phase estimation achieves Heisenberg limited scaling, when choosing an appropriate number of samples for the Hadamard test.
There, we did not attempt to optimize constant factors.
In the special case where we start with an exact eigenstate, \cite{belliardo2020achieving} shows a bound for the root mean square error of $\eps \leq C_{\tot} \pi/ t_{\tot}$ with $C_{\tot} \approx 25$ using $N_m = \ceil{N_M + (M - m)\sloperpe}$ samples with $N_M = 11$ and $\sloperpe\approx 4.11$, based on the analysis in \cref{lem:heisenberg scaling}.
This is in contrast to the optimal performance of QPE in general, which has optimal constant factor $C_{\tot} = 1$ \cite{berry2000optimal}.
However, the rigorous bound of \cite{belliardo2020achieving} is still loose, and for our resource estimates we use an empirically estimated constant.
Numerical simulation shows that using the choice of parameters from \cite{belliardo2020achieving}, we achieve a scaling with a constant factor of $C_{\tot} \approx 5$, see \cref{fig:rpe empirical scaling} (a).
The empirical scaling with the \emph{maximal} time evolution $t_{\max} = 2^M$ is $\eps \approx C_{\max}\pi/t_{\max}$ for $C_{\max} \approx 0.08$.
These results corroborate previous numerical estimates, see e.g. \cite{dutkiewicz2024error}.
The total time evolution is given by
\begin{align}\label{eq:upper bound time evolution}
    t_{\tot} = \sum_{m=0}^M 2N_m 2^m = \sum_{m=0}^M 2(N_M + \sloperpe(M-m))2^m \leq 4(N_M + \sloperpe)2^M = 4(N_M + \sloperpe)t_{\max}.
\end{align}
This estimate only ignores a small subleading term, and gives a result which is consistent with the numerical relation between $C_{\tot}$ and $C_{\max}$ from \cref{fig:rpe empirical scaling}.
These estimates assume a perfectly accurate quantum computer. By design, the robust phase estimation procedure can tolerate a small amount of error. See \cite{dutkiewicz2024error} for a discussion of the error budget in the context of quantum chemistry computations.

\begin{figure}
    \includegraphics[width=.75\linewidth]{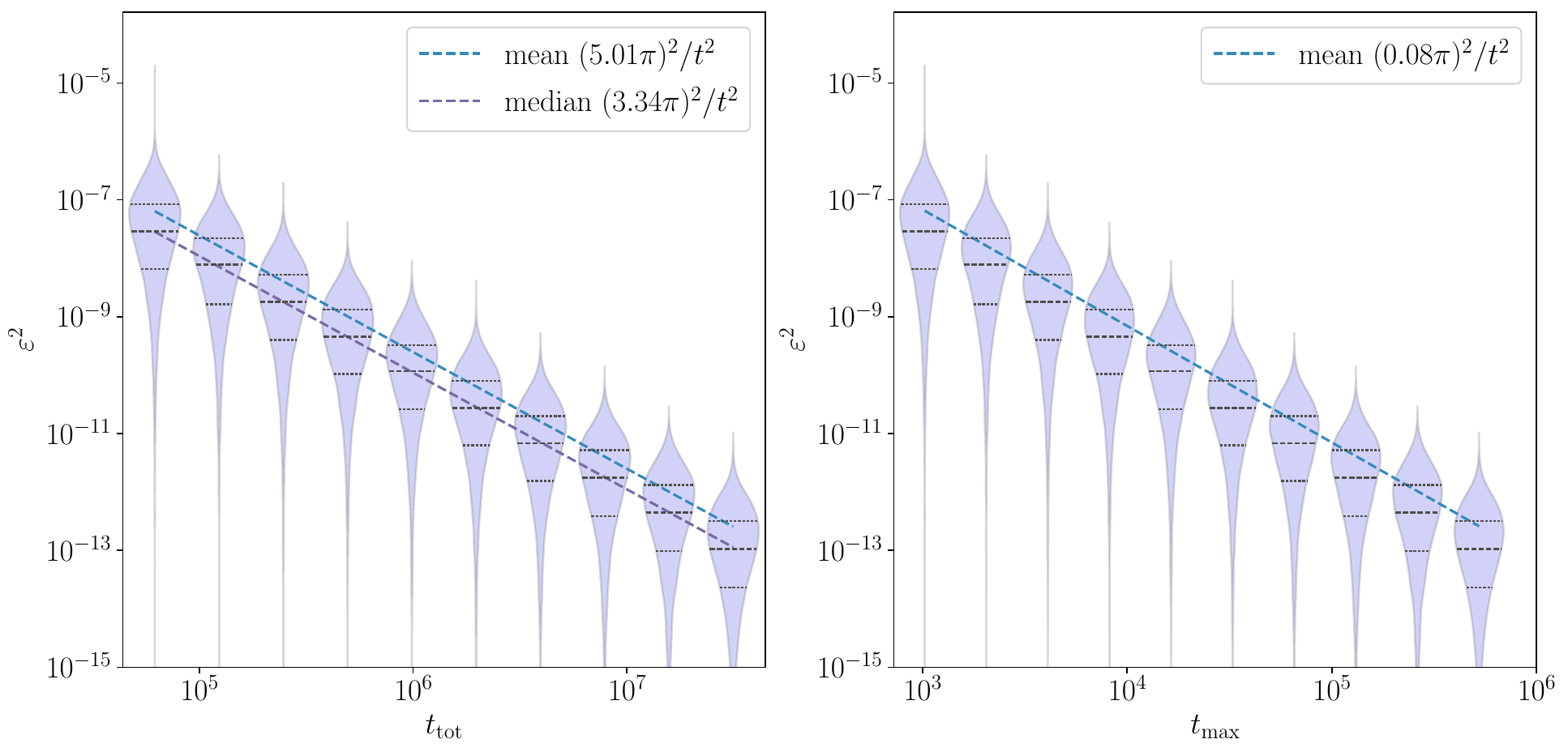}\\

    \includegraphics[width=.75\linewidth]{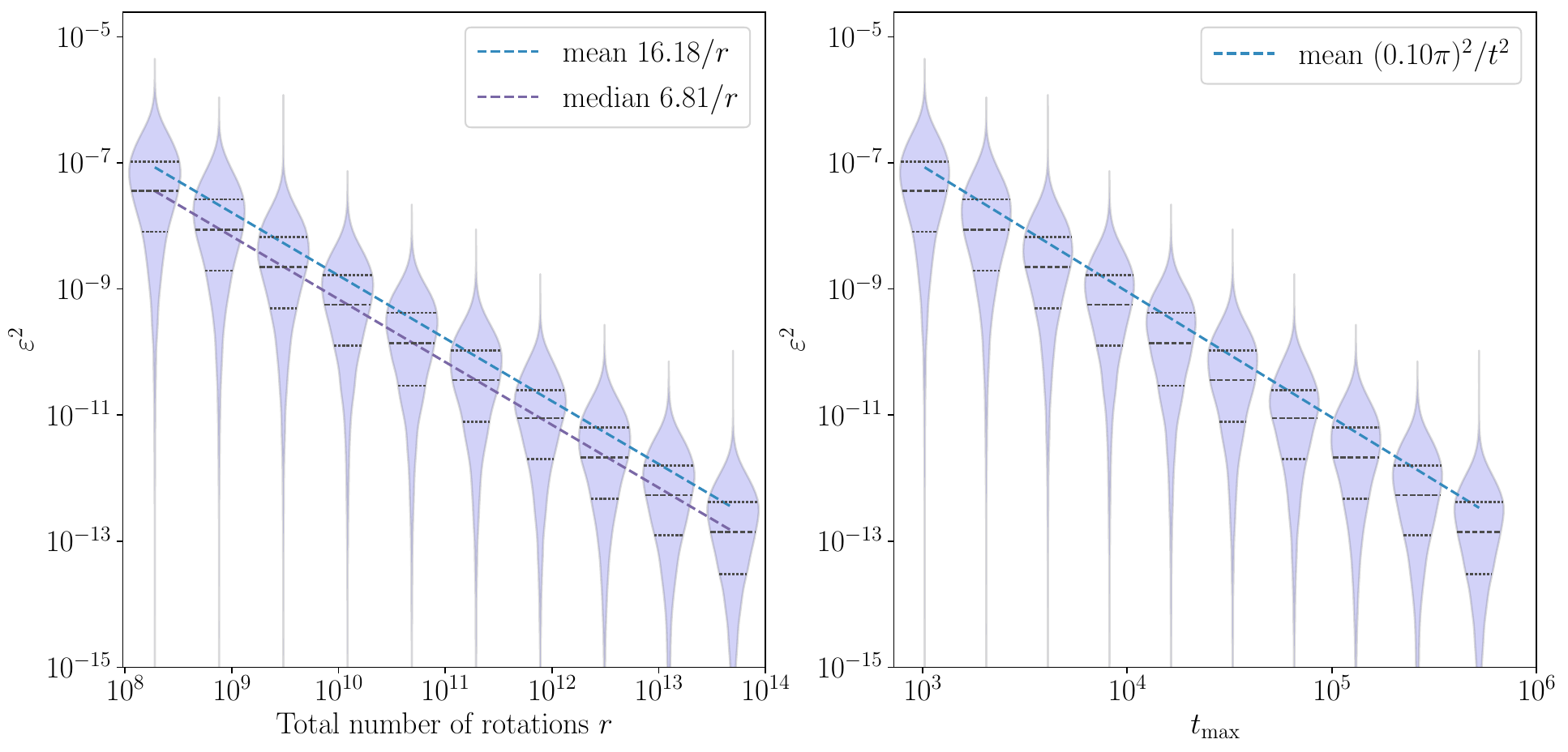}

    \caption{Estimation of the overhead of robust phase estimation. Distribution of the mean square error, $10^4$ data points for each value of $M$. The top panels shows scaling of mean square error with total time evolution and maximal time evolution $t_{\max} = 2^M$, using $N_m = 11 + 4(M - m)$ samples in round $m$.
        The bottom panels show scaling of mean square error with total number of rotations according to \cref{eq:total rotations random} and maximal time evolution $t_{\max} = 2^M$ simulated, using depth $r_m = 2^{2m+1}$ and $N_m = \ceil{e(11 + 4(M - n))}$ samples in round $m$.}
    \label{fig:rpe empirical scaling}
\end{figure}

We now discuss the total cost when using randomized product formulas.
For RTE, using $\tau = 2^{-m-1}$ and $r = 2^{2m+1}$ in the $m$-th round requires a sampling overhead of $e$.
We should now take $N_m$ a factor $e$ larger than in the deterministic case.
For our numerical estimates, we take $N_m = e(11 + 4(M-m))$.
The total number of rotations is given to good approximation by
\begin{align}\label{eq:total cost rpe randomized}
    \begin{split}
        r_{\tot} & = \sum_{m=0}^M 2N_m 2^{2m+1} = \sum_{m=0}^M 2(N_M + \sloperpe(M-m))2^{2m+1} \leq 16\left(\frac{N_M}{3} + \frac{\sloperpe}{9}\right) 2^{2M} \\
                 & = 16\left(\frac{N_M}{3} + \frac{\sloperpe}{9}\right)t_{\max}^2 = 8\left(\frac{N_M}{3} + \frac{\sloperpe}{9}\right)r_{\max} \, .
    \end{split}
\end{align}
In \cref{fig:rpe empirical scaling} (b) we numerically show that this indeed leads to a scaling $\eps^2 \approx C_{\tot}' / r_{\tot}$ and $\eps^2 \approx (C_{\max}' \pi)^2 / r_{\max}$ where $C_{\tot}' \approx 16$ and $C_{\max}' \approx 0.2$ (note that the maximal number of rotations is $2t_{\max}^2$).
As discussed at the end of the previous subsection, for qDRIFT we save an additional factor of 2 for both $r_{\max}$ and $r_{\tot}$, giving a total cost of approximately $r_{\tot} = 8 \eps^{-2}$ for performing phase estimation with root mean square error $\eps$.

Concretely, for the example of FeMoco discussed in the introduction, we have $\lambda = 405$ and $\eps = 0.0016$. We should estimate the ground state energy of the normalized Hamiltonian to precision $\lambda^{-1}\eps$.
We choose $\xi = 0.1$, giving $M = \ceil{\log_2(\xi \lambda \eps^{-1})} = 15$. For the last round, we do not necessarily need to take a power of two, but we can choose $2^{M-1} < K_M \leq 2^M$ and estimate $K_M E_0$. We let $K_M = \ceil{\xi \lambda \eps^{-1}}$, then estimating $K_M E_0$ to precision $\xi$ gives an estimate of $E_0$ to precision $\eps$.
This gives circuits with at most $K_M^2 = 6.4 \times 10^8$ Pauli rotations.

\section{The electronic structure problem}\label{sec:electronic structure problem}
The motivation of this work is to use quantum computing for Hamiltonians which arise in electronic structure problems in the Born-Oppenheimer approximation, meaning that we fix the location of nuclei, and treat the electron wavefunctions as the quantum degrees of freedom.
For completeness, we describe the set-up, relevant to the numerical examples, in some detail here.
We work in second quantization and we assume that we have chosen a set of orthonormal spatial orbitals labeled by an index $p \in [N]$ corresponding to a function $\phi_p \in L^2(\RR^3)$.
Additionally, there is a spin degree of freedom, so we end up with a set of spin-orbitals $\phi_{p,\alpha}$ for $p \in [N]$ and $\alpha = \uparrow, \downarrow$.
For convenience, we will assume that the spatial orbitals are real-valued functions (but this is not essential to the arguments).
The spin-orbitals span a single-particle space $\HH_{1} \cong \CC^{2N}$.
The Hilbert space we work with is the corresponding Fock space
\begin{align}
    \HH = \bigoplus_{n=0}^{2N} \HH_n \qquad \HH_n = \HH_{1}^{\wedge n}.
\end{align}
We denote the fermionic annihilation operator associated to orbital $p$ and spin $\alpha$ by $a_{p\sigma}$.
The size of the problem is determined by $N$, the number of orbitals, and $n$, the number of electrons.

Given a choice of orbitals, the Hamiltonian of the electronic structure problem is given by
\begin{align}\label{eq:electronic hamiltonian appendix}
    H = T + V = \sum_{pq}\sum_{\sigma} h_{pq} a_{p\sigma}^\dagger a_{q\sigma} + \frac12 \sum_{pqrs} \sum_{\sigma\tau} h_{pqrs} a_{p\sigma}^\dagger a_{r\tau}^\dagger a_{s\tau} a_{q\sigma}
\end{align}
where the coefficients $h_{pq}$ and $h_{pqrs}$ result from performing appropriate integrals involving the orbital functions.
Here and in the remainder of this work we use the convention that the spatial orbitals are labeled by $p,q,r,s,\dots$ and spins by $\sigma,\tau,\dots$.
In \cref{eq:electronic hamiltonian appendix}, the first term $T$ represents the kinetic energy and the interaction between the nuclei and the electrons
\begin{align}\label{eq:1-electron integral}
    h_{pq} = \int_{\RR^3} \phi_p(\vec{r})\left( -\frac12 \nabla^2 - \sum_{i = 1}^m \frac{Z_i}{\norm{\vec{r} - \vec{R}_i}}\right) \phi_q(\vec{r}) \der \vec{r}.
\end{align}
Here the $\vec{R}_i$ and $Z_i$ are the (fixed) positions and the nuclear charge of the $m$ nuclei.
The last term corresponds to the electron-electron Coulomb interaction $V$
\begin{align}\label{eq:2-electron integral}
    h_{pqrs} = \int_{\RR^3} \int_{\RR^3} \frac{\phi_p(\vec{r}_1) \phi_q(\vec{r}_1) \phi_r(\vec{r}_2) \phi_s(\vec{r}_2)}{\norm{\vec{r}_1 - \vec{r}_2}} \der \vec{r}_1 \der \vec{r}_2.
\end{align}
The integrals depend on the choice of the basis of orbitals. Any unitary $u \in U(N)$ defines a basis change on the spatial part of the single-particle space, giving rise to a new set of orbitals by $\tilde \phi_p = \sum_q u_{pq} \phi_{q}$.
It defines a particle-number and spin conserving unitary $U$ on the Hilbert space $\HH$ by defining a new set of fermionic creation operators $\tilde a_p^\dagger = \sum_q u_{pq} a_q^\dagger$.
Typically, the set of orbitals used to define \cref{eq:electronic hamiltonian appendix} is derived from a mean-field calculation, giving rise to a set of \emph{canonical orbitals}. Often it is useful to perform a basis change so the orbitals are \emph{localized} functions with exponentially decaying tails.
In that case, from \cref{eq:2-electron integral} we see that the density-density terms with $p=q$ and $r=s$ lead to contributions which decay polynomially in the distance between the spatial locations of the orbitals $p$ and $r$, whereas for $p \neq q$ we have an exponential decay in the distance between the spatial locations of orbitals $p$ and $q$.
In \cref{fig:weight distribution} we indeed see a regime of terms with power law decay, and a regime of terms with exponential decay.
For large spatially extended systems, one finds that many of the $h_{pqrs}$ are extremely small (say, less than $10^{-10}$) and can be safely ignored. In that case, the number of relevant terms scales as $\bigO(N^2)$.
Details on electronic structure formalism and computations can be found in \cite{helgakerMolecularElectronicStructureTheory2014}; see in particular section 9.12 for a detailed discussion of the scaling of the size of integrals $h_{pqrs}$.

The Hamiltonian $H$ is mapped to $2N$ qubits by a fermion-to-qubit mapping.
Various options are possible, but the key feature is that the terms in $H$ get mapped to Pauli operators.
Generally, one defines the Majorana fermionic operators by
\begin{align}
    \gamma_{p\sigma,0} = a_{p\sigma}^\dagger + a_{p\sigma}, \qquad \gamma_{p\sigma,1} = i(a_{p\sigma}^\dagger - a_{p\sigma})
\end{align}
and each $\gamma_{p\sigma,\alpha}$ should be mapped to a Pauli operator in a way that preserves the anticommutation relation $\{\gamma_{p\sigma,\alpha}, \gamma_{q\tau,\beta}\} = 0$ if $p\sigma,\alpha \neq q\tau,\beta$ (for example, using the Jordan-Wigner mapping, or Bravyi-Kitaev mapping).
Ignoring a component which is proportional to the identity operator, the Hamiltonian can be written as
\begin{align}\label{eq:hamiltonian majorana}
    H = \frac{i}{2} \sum_{pq} \sum_{\sigma} \left(h_{pq} + \sum_r h_{pqrr} - \frac12 \sum_r h_{prrq} \right) \gamma_{p\sigma,0} \gamma_{q\sigma,1} - \frac18 \sum_{pqrs} \sum_{\sigma\tau}h_{pqrs} \gamma_{p\sigma,0} \gamma_{q\sigma,1}\gamma_{r\tau,0} \gamma_{s\tau,1}
\end{align}
A minimal representation in which each Majorana operator products appears at most once has been derived in \cite{koridonOrbitalTransformationsReduce2021, mitaraiPerturbationTheoryQuantum2023},
which gives a representation of $H$
\begin{align}\label{eq:pauli rep appendix}
    H = \sum_{l=1}^L h_l P_l
\end{align}
where the $P_l$ are Pauli operators and the $h_l$ are real coefficients.

\subsection{Factorization of electronic structure Hamiltonians} \label{sec: hamiltonian factorization}
The Coulomb interaction (i.e. the two-body term) is the part of the Hamiltonian that is the main contribution to the complexity of quantum simulation of $H$ in \cref{eq:electronic hamiltonian appendix}. For example, simply Trotterizing over the sum as given in the description of the Hamiltonian, the one-body interactions give $\bigO(N^2)$ terms, while the two-body interaction contributes $\bigO(N^4)$ terms.
Moreover, if the Hamiltonian \emph{only} has one-body terms, the Hamiltonian is quadratic in the creation and annihilation operators, which may be simulated efficiently classically.
Employing anticommutation relations the electronic structure Hamiltonian \ref{eq:electronic hamiltonian appendix} can be written as
\begin{align}\label{eq:electronic hamiltonian reordered}
    H = \sum_{pq}\sum_{\sigma} \underbrace{(h_{pq} - \frac12 \sum_{r} h_{prrq})}_{h'_{pq}} a_{p\sigma}^\dagger a_{s\sigma} + \frac12 \sum_{pqrs}\sum_{\sigma\tau} h_{pqrs} a_{p\sigma}^\dagger a_{q\sigma} a_{r\tau}^\dagger a_{s\tau} = T + V.
\end{align}
It is useful to look for simpler representations of the Coulomb interaction $V$ in order to simplify both product formulas and block encodings of the Hamiltonian. A prominent example is \emph{double factorization} \cite{motta2021low,berry2019qubitization,huggins2021efficient,von2021quantum,cohn2021quantum,kim2022fault,rocca2024reducing}.

The idea is to first consider $h$ as a positive $N^2 \times N^2$ real matrix (the positivity follows from \cref{eq:2-electron integral}), and decompose it as
\begin{align}\label{eq:first factorization}
    h_{pqrs} = \sum_{j=1}^{L} (\mathcal{L}_{pq}^{(j)})^T \mathcal{L}_{rs}^{(j)}
\end{align}
where $\mathcal{L}^{(j)}$ is a real vector of dimension $N^2$, with entries labeled by orbital pairs $pq$. This is the `first factorization.'
The Coulomb tensor satisfies the symmetries $h_{pqrs} = h_{pqsr} = h_{qprs} = h_{qpsr} = h_{rspq} = h_{rsqp} = h_{srpq} = h_{srqp}$.
These symmetries imply that we may take the $\mathcal{L}^{(j)}$ to be real and such that $\mathcal{L}^{(j)}_{pq} = \mathcal{L}^{(j)}_{qp}$.
Therefore, we can reinterpret the $\mathcal{L}^{(j)}$ as \emph{symmetric} $N \times N$ matrices.
One may now apply a `second factorization' and diagonalize each $\mathcal{L}^{(j)}$ using an orthogonal transformation.
That is, we find $u^{(j)} \in O(N)$ and diagonal $N \times N$ matrices $\Lambda^{(j)}$ with real numbers $\lambda^{(j)}_k$ on the diagonal, such that
\begin{align}\label{eq:eigenvalues L}
    \mathcal{L}^{(j)} = (u^{(j)})^T \Lambda^{(j)} u^{(j)}.
\end{align}

Inserting this back into the Coulomb interaction term of the Hamiltonian, one finds
\begin{align}
    V & = \frac12 \sum_{j=1}^L \left(\sum_{pq} \sum_{\sigma} \mathcal{L}_{pq}^{(j)} a_{p\sigma}^\dagger a_{q\sigma} \right)^2 = \frac12 \sum_{j=1}^L \left(\sum_{k=1}^{\rho_j} \lambda^{(j)}_k \sum_{pq} \sum_{\sigma} u^{(j)}_{kp} a_{p\sigma}^\dagger a_{q\sigma} u^{(j)}_{kq} \right)^2
\end{align}
The $N\times N$ matrices $u^{(j)}$ define a change of orbital basis, and they give rise to a  corresponding orbital rotation $U^{(j)}$ on the full Hilbert space.
The fermionic annihilation operators in the changed basis $\phi^{(j)}_k = \sum_p u^{(j)}_{kp} \phi_p$ are given by
\begin{align}
    a^{(j)}_{k\sigma} = \sum_p u^{(j)}_{kp} a_{p\sigma}
\end{align}
and the associated number operators are $n^{(j)}_{k\sigma} = (a^{(j)}_{k\sigma})^\dagger a^{(j)}_{k\sigma}$.
Then,
\begin{align}\label{eq:double fact V}
    V & = \frac12 \sum_{j=1}^L \left(\sum_{k=1}^{\rho_j} \sum_{\sigma}\lambda^{(j)}_k n^{(j)}_{k\sigma} \right)^2 = \sum_{j=1}^L (U^{(j)})^T V^{(j)} U^{(j)}, \qquad \text{ for } \quad  V^{(j)} = \frac12 \left(\sum_{k,\sigma} \lambda^{(j)}_k n_{k\sigma} \right)^2.
\end{align}
The main reason that this decomposition is useful is that $h$ is well-approximated by a \emph{low-rank} matrix.
In the first factorization in \cref{eq:first factorization} we may truncate the sum to contain only $L = \tilde \bigO(N)$ terms, instead of the maximal number of $N^2$.
Secondly, numerical results suggest that in the limit of large $N$, for spatially extended systems, one can truncate $\rho_j$, the number of terms in the second factorization, to only $\tilde \bigO(1)$ terms instead of $N$ \cite{peng2017highly,motta2019efficient}, although this only becomes relevant for large systems.
Note that in this construction the choice of the $\mathcal{L}^{(j)}$ is not unique. Common choices are an eigenvalue decomposition or a Cholesky decomposition, but there is significant flexibility.
There are various methods which aim to optimize the factorization in order to reduce the total circuit depth \cite{cohn2021quantum,oumarou2022accelerating,rocca2024reducing}.
We discuss the associated gate counts for Trotterization using this low rank factorization in \cref{sec:gate count}.

\subsection{Benchmark systems}

For FeMoco we use the model Hamiltonian first described by Reiher et al. \cite{reiher2017elucidating}. There is an alternative widely used choice of active space given by \cite{li2019electronic} that is believed to capture the electronic structure more accurately, although even larger active spaces may be needed \cite{morchen_classification_2024}. For our resource estimates we stick to the Hamiltonian from \cite{reiher2017elucidating}, for facilitating comparison with deterministic product formulas (which are only available from \cite{reiher2017elucidating}). For the other methods, the active space from \cite{li2019electronic} shows a qualitatively similar picture.
The data specifying the Hamiltonian coefficients for both the hydrogen chain and for FeMoco is taken from \cite{lee2021even}. For the hydrogen chain we use a STO-6G minimal basis set and set the distance between the hydrogen atoms at 1.4 Bohr. The number of electrons is $n = N_H$, the number of spin-orbitals is $2N_H$.
Additionally, we use a collection of small molecules with varying active space sizes to estimate the Trotter error.
These molecules are alanine \ch{C3H7NO2}, the chromium dimer \ch{Cr2}, an iron-sulfur cluster \ch{Fe2S2}, hydrogen peroxide \ch{H2O2}, imidazole \ch{C3H4N2}, naphtalene \ch{C10H8}, nitrous acid \ch{HNO2} and sulfurous acid \ch{H2SO3} at equilibrium geometries.
The active space sizes vary from four to nine orbitals, and the active orbitals are chosen to be the closest ones to the HOMO-LUMO gap.
Furthermore, for each active space Hamiltonian generated this way, the same $\lambda$ optimization as described below is performed.

\subsection{Optimizing the \texorpdfstring{weight $\lambda$}{weight}}
As discussed in \cref{sec:optimize hamiltonian}, it is useful to optimize the choice of representation to have small weight $\lambda$.
Firstly, it is clear that the coefficients $h_{pq}$ and $h_{pqrs}$ defined by \cref{eq:1-electron integral} and \cref{eq:2-electron integral} depend on the choice of orbital basis.
Typically, one starts from a set of canonical orbitals from a mean-field calculation, which determine the active space for the problem.
However, given the subspace, one is free to choose a different basis, for which is the representation of $H$ has desirable properties, such as a small weight $\lambda$.
In \cite{koridonOrbitalTransformationsReduce2021} the effect of different basis changes on the value of $\lambda$ was explored, with as conclusion that orbital localization (i.e. choosing an orbital basis for which the orbital functions are spatially localized) decreases the value of $\lambda$ and that further reduction is possible by minimizing over the basis change.
We perform such optimization for all benchmark systems, using gradient descent over the basis change unitaries to minimize the weight $\lambda$.
The problem is not convex and needs a good starting point. One option is to use localized orbitals.
From our numerics, we observe that this choice can be a local minimum for $\lambda$.
A good alternative is to take the largest term in a Cholesky decomposition of the Coulomb tensor (cf. \cref{eq:first factorization}) $\mathcal L_{pq}^{(1)}$.
This defines an $N \times N$ symmetric matrix, and we can consider the orbital basis which diagonalizes this matrix, as in \cref{eq:eigenvalues L}. This is cheap to compute, already leads to significantly reduced weight $\lambda$, and forms a good Ansatz to start the optimization.

\begin{figure}[ht]
    \centering
    \includegraphics[width=0.4\linewidth]{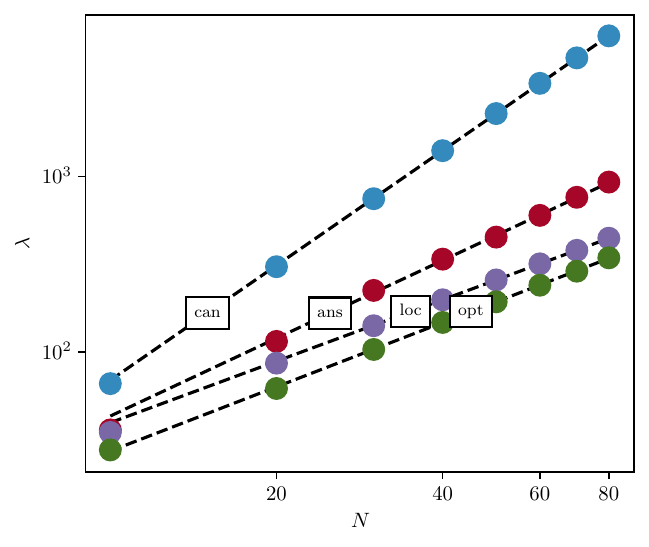}
    \caption{The weight $\lambda$ of the Hamiltonian depends on its decomposition. Here we show the scaling of $\lambda$ for the hydrogen chain with system size $N_{\text{H}}$, for canonical orbitals, localized orbitals, an ansatz based on the eigenvectors of the largest Cholesky vector, and an optimized representation from gradient descent over the space of orbital basis changes.}
    \label{fig:one-norm-H-chain}
\end{figure}

The Hamiltonian has a particle-number symmetry. That is, $H$ commutes with the electron number operator $\hat n = \sum_{p,\sigma} a_{p\sigma}^\dagger a_{p\sigma}$, and we work in a subspace of fixed electron number $n$, which is set by the initial state.
This feature can be exploited to define a different Hamiltonian, which equals $H$ on the subspace of $n$-particles (possibly up to an additive constant factor) but with lower weight.
This has been used in various ways, in particular to optimize the cost of LCU implementations of $H$ \cite{loaiza2023reducing,rocca2024reducing,caesura2025faster}.
For this work we use a variant of the method of Ref.~\cite{loaiza2023block}.
Let $\hat{n} = \sum_{p\sigma} n_{p\sigma}$ be the number operator, and let $n$ be the number of electrons for the ground state of $H$.
Then we can also apply phase estimation to $H + F(\hat{n} - n) + (\hat{n} - n)F$ for any self-adjoint operator $F$, since this has the same spectrum when restricted to the $n$-particle subspace.
In particular, one can take $F$ to be a one-body operator, and use this to reduce the weight of the two-body terms in $H$.
We perform a joint numerical optimization, minimizing $\lambda$, over the choice of the symmetry shift and the choice of orbital basis, using the L-BFGS-B quasi-Newton optimization method.
For the optimization, we restrict to an Ansatz where we decompose the Coulomb operator as in \cref{eq:double fact V} and shift
\begin{align}
    V^{(j)} \to \frac12 \left(\sum_{k\sigma} \lambda_k^{(j)} n_{k\sigma} + f_j(\hat{n} - n)\right)^2
\end{align}
and optimize the parameters $f_j$.
We show the resulting scaling of $\lambda$ with $N$ for the hydrogen chain in \cref{fig:one-norm-H-chain}. Localized and optimized orbitals bases give a weight scaling as $\lambda = \bigO(N^{1.2})$.

\section{Trotter error for electronic structure Hamiltonians}\label{sec:trotter appendix}

In this section we report in detail on our numerical analysis of Trotter error.
We briefly recall our motivation for doing so.
There are many ways to analytically bound the Trotter error. The tightest bounds are based on nested commutators.
Nevertheless, even if these have the correct scaling, they are still loose by some constant factor \cite{childsTheoryTrotterError2021}.
Additionally, if the Hamiltonian has $L$ terms, the commutator bound for the $p$-th order in principle may require as many as $\bigO(L^{p+1})$ operations, which is not feasible to compute exactly in practice, even for $p=2$, for molecular electronic structure Hamiltonians with many orbitals.

For these reasons, we take a numerical approach also taken in \cite{babbush2015chemical,reiher2017elucidating} and numerically compute the Trotter error for small systems, to then extrapolate to larger systems.
Recall that if $H$ is the Hamiltonian with ground state $E_0$, $\tilde{H}$ the effective Hamiltonian of the Trotterized time evolution with ground state $\tilde{E}_0$, $U(\delta)$ is the exact time evolution for time $\delta$ and $\trot_p(\delta)$ is a Trotter step with time $\delta$, then we have
\begin{align}
    \abs{E_0 - \tilde{E_0}} \leq C_{\gs} \delta^p \quad \text{ and } \qquad \norm{U(\delta) - \trot_p(\delta)}_{\op} \leq C_{\op} \delta^{p+1}
\end{align}
for an order $p$ product formula. Here, we let $C_{\gs}$ and $C_{\op}$ be the smallest constants for which the bound holds.
Crucially, these values depend on the Hamiltonian, as well as the system size.
The cost of the product formula scales with $C_{\gs}^{1/p}$.

We estimate $C_{\gs}$ and $C_{\op}$ by numerically computing $U(\delta)$ and $\trot_p(\delta)$ for a range of values of $\delta$, and fit a power law to the resulting ground state energy and operator norm errors.
We find that for a wide range of $\delta$ we get an excellent fit (even though in principle for large $\delta$ there will be higher order corrections which are dominant), and use this to determine $C_{\gs}$ and $C_{\op}$, leading to \cref{fig:trotter-error-proxy} in the main text.
For the numerical examples, we use a symmetry-preserving Bravyi-Kitaev mapping \cite{bravyi2017tapering}. The Trotter error depends on the ordering of the terms; we choose a lexicographic ordering. See \cite{tranter2018comparison,tranter2019ordering} for a numerical study of the effect of the choice of fermion-to-qubit mapping and ordering on the Trotter error.

In \cref{fig:gs op comparison} we compare $C_{\gs}$ and $C_{\op}$ for the small molecule benchmark set and the hydrogen chain for $p = 2$.
We observe both correlate well (especially $C_{\op}$) with $\lambda$.
Additionally, we also check the correlation with the commutator
\begin{align}
    \alpha = \sum_{l_1,l_2} \norm{[H_{l_1}, H_{l_2}]}_{\op} = \sum_{l_1, l_2} \abs{h_{l_1} h_{l_2}} \, \norm{[P_{l_1}, P_{l_2}]}
    \label{eq:alpha-def}
\end{align}
and find a comparable correlation.

\begin{figure}[ht]
    \centering
    \includegraphics[width=0.45\linewidth]{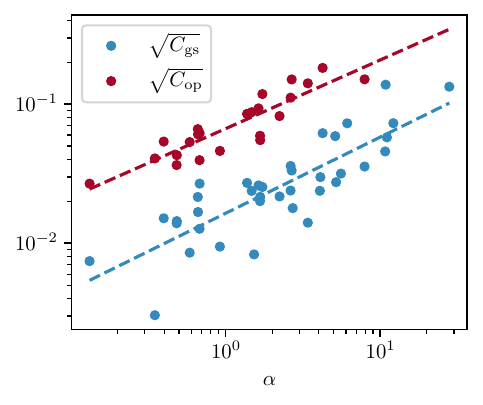} \hspace{0.5cm}
    \includegraphics[width=0.45\linewidth]{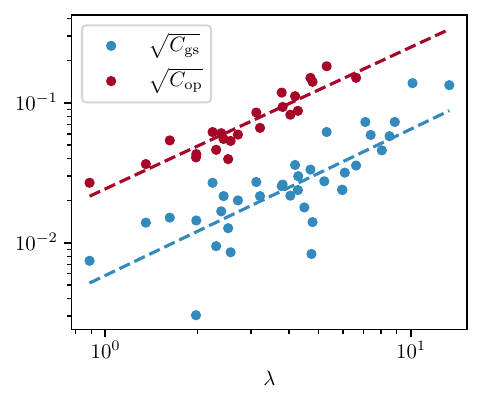}
    \caption{The left plot shows the Trotter constants for the operator norm $C_{\op}$ and the ground state energy $C_{\gs}$, for the benchmark set of small molecules and its correlation with $\alpha$ (defined in \cref{eq:alpha-def}.
        The right plot shows the same data, compared to the weights $\lambda$.}
    \label{fig:gs op comparison}
\end{figure}

One can also use higher-order product formulas, which leads to a better scaling in $t$, the time to be simulated.
However, at fixed finite time $t$ it is not clear whether this gives an improvement, since increasing the order also increases the number of stages $N_{\stage}$ of the product formula, and $N_{\stage}$ increases exponentially with the order $p$ for the Trotter-Suzuki product formulas.
In \cref{fig:order comparison} we compare the second and fourth order product formulas for the same benchmark systems.
Here we plot the factor that determines the difference in cost, which is $N_{\stage} C_{\gs}^{1/p}(p, \{H_l\}) \eps^{-1/p}$. The number of stages is $N_{\stage} = 2, 10$ for $p=2,4$ and we take $\eps = 0.0016$.
We do not observe a consistent improvement from using the fourth order method.
Of course, for sufficiently small accuracy $\eps$, the higher order method will eventually be superior.
However, for this work we conclude that it suffices to use second order (which has the added benefit that it does need to evolve backwards in time, which is beneficial for the partially randomized product formula).

\begin{figure}[ht]
    \centering
    \includegraphics[width=0.45\linewidth]{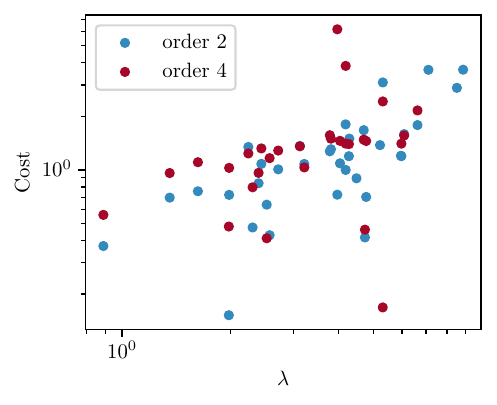}
    \caption{This figure shows the effect of the choice of order $p$ of the product formula. We plot $\text{Cost} = P C_{\gs}^{1/p} \eps^{-1/p}$ for $p=2$ and $p=4$ Trotter-Suzuki product formulas and $\eps = 0.0016$ for the benchmark set of small molecules and for the hydrogen chain.}
    \label{fig:order comparison}
\end{figure}

We also briefly comment on the first order method: for the ground state energy it has \emph{second order scaling}. This is an easy consequence of the fact that in perturbation theory, the first order correction $\bra{\psi_0} [H_{l_1}, H_{l_2}] \ket{\psi}$ vanishes \cite{yi2022spectral}. However, since the second order method only has two stages, the additional cost is not very high.

\begin{figure}[ht]
    \centering
    \includegraphics[width=0.45\linewidth]{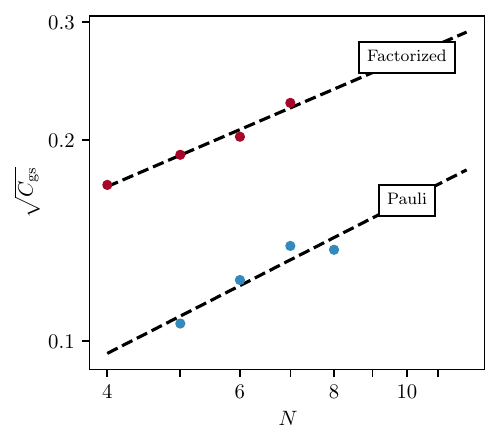} \hspace{0.5cm}
    \caption{For order $p=2$ the Trotter ground state error constant $\sqrt{C_{\gs}}$ (defined in \cref{eq:Cgs-def}) for the hydrogen chain with $N$ atoms, both for the Pauli representation and the factorized representation.}
    \label{fig:trotter error hchain}
\end{figure}

\subsection{Trotter error for partial randomization}
For partially randomized product formulas, given
\begin{align}
    H = \sum_l h_l P_l
\end{align}
one groups together a subset of terms into $H_R$.
The Trotterization error for the partially randomized method is the Trotterization error of dividing the Hamiltonian into terms $h_l P_l$ and keeping $H_R$ as a single term.
For the performance of the partially random method it is clearly of interest how the Trotter error depends on the choice of $H_R$.
There are two intuitions of what could happen to the Trotter error. Firstly, if many terms are in $H_R$, we are discretizing less, and the Trotter error should become smaller. In particular, if all terms are in $H_R$, there is no Trotter error whatsoever.
On the other hand, the Trotter error decreases if one takes very small steps. Grouping together many small terms into one larger term $H_R$ means performing a relatively large step in the direction of $H_R$, which could \emph{increase} the Trotter error.

In \cref{fig:trotter error partially random} we show numerically that for $\lambda_R$ relatively small (the regime which is of most interest for partial randomization), the Trotter error remains essentially unchanged. For intermediate values of $\lambda_R$, the Trotter error in fact increases, but only moderately so. When $\lambda_R$ tends to $\lambda$, the Trotter error decreases significantly.

These findings justify our choice to optimize the partially randomized methods using the Trotter error estimate for the fully deterministic method.
Indeed, the regime where $\lambda_R$ is a large fraction of $\lambda$ will not lead to significant speed-up over the fully randomized method (so it does not help much that the Trotter error is smaller), and when $\lambda_R$ is reasonably small, the Trotter error of the fully deterministic product formula is a very good approximation.

\begin{figure}
    \centering
    \includegraphics[width=0.5\linewidth]{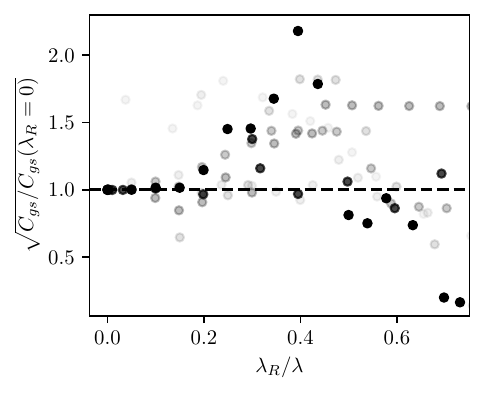}
    \caption{Trotter error as a function of the fraction of the randomized part for different active spaces of alanine and \ch{Fe2S2}. The dashed line indicates the fully deterministic Trotter error, while the darkness of the data points corresponds to the size of the active space, with darker points corresponding to larger active space size.
The error is written in terms of $C_{\gs}$, defined in \cref{eq:Cgs-def}.
    }
    \label{fig:trotter error partially random}
\end{figure}

\subsection{Truncation error}
In the partially randomized method the small terms are implemented in a randomized fashion.
An alternative way to deal with small terms is to simply neglect them altogether, provided the resulting truncation error is smaller than the desired precision.
In \cref{fig:truncation error} we show the dependence of truncation errors on the fraction of remaining terms for the same set of model systems as used in the Trotter error analysis in \cref{sec:trotter appendix}.
For the Pauli decomposition 
we computed the optimal partition into randomized and deterministic parts as described in \cref{sec:gate counts partrand}.
We then computed the error in the ground state energy when keeping only the terms in the deterministic part.
Note that we do not consider errors due to Trotterization here.
\cref{fig:truncation error} shows that the deterministic terms typically make up only a fraction of 0.05 or less of the total number of terms.
Simply neglecting the other terms results in an error that is one to two orders of magnitude higher than the target precision.

In \cref{fig:truncation error scan} we compute exact truncation errors over the whole range of terms for various active space Hamiltonians of alanine.
Here we see that the minimal number of terms one could reach by neglecting small terms (such that the truncation error is not more than 0.0016) is about an order of magnitude higher than the number of deterministic terms in the partially randomized method.
However, still surprisingly many terms can be truncated (if the exact truncation error is known), here to a fraction of about 0.2. 
Note that in practice some of the error budget also needs to be assigned to the Trotterization and phase estimation error.

Consequently, the partially randomized method achieves a reduction of the deterministically implemented terms that would not be possible by a simple truncation approach.
In general, a downside of trying to (deterministically) truncate as many terms as possible from the Hamiltonian, is that it is hard to know a priori to what point one can truncate without incurring significant error, compromising the accuracy guarantees of the energy approximation procedure.

Yet another alternative to deal with small terms in the Hamiltonian is to treat them perturbatively.
Quantum algorithms for second-order perturbation theory corrections have been proposed previously \cite{mitaraiPerturbationTheoryQuantum2023, guntherMoreQuantumChemistry2024}.
However, such an approach requires many samples, and for each sample the ground state of the truncated Hamiltonian needs to be prepared, and the inverse of the truncated Hamiltonian needs to be implemented.
This potentially negates any cost-advantage from truncating the Hamiltonian, which was the outcome of \cite{mitaraiPerturbationTheoryQuantum2023}.
A full analysis of a perturbation theory approach in our context is beyond the scope of this work.

\begin{figure}
    \centering
    \includegraphics[width=0.5\linewidth]{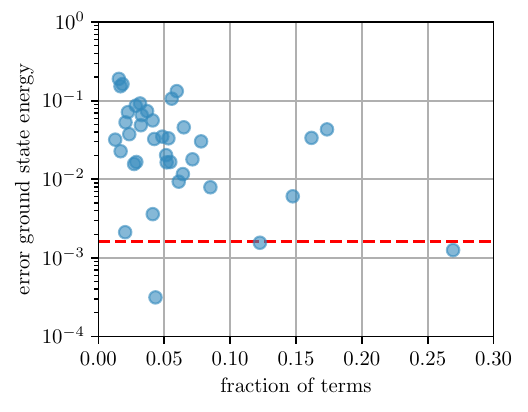}
    \caption{For the Hamiltonians from the benchmark set, we plot the error in the ground state energy as a function of the fraction of remaining terms after truncating the ones with smallest weight. We use the truncation which is optimal for the cost of the partially random product formula, so we compare the energies of $H$ and $H_D$.
    The red dashed line indicates chemical accuracy, $0.0016$.}
    \label{fig:truncation error}
\end{figure}

\begin{figure}
    \centering
    \includegraphics[width=0.5\linewidth]{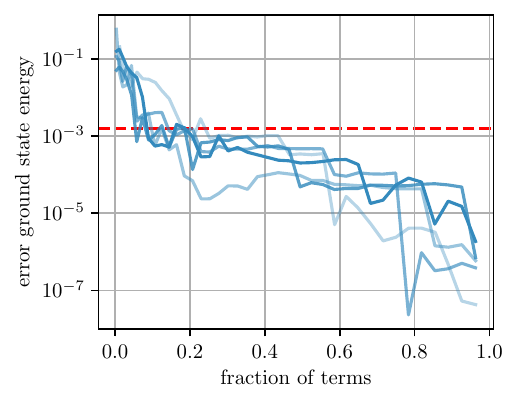}
    \caption{Error in the ground state energy as a function of the fraction of remaining terms after truncating the ones with smallest weight.
    The five blue lines correspond to the active space Hamiltonians of alanine with 4,5,6,7 and 8 orbitals (darker lines for larger active spaces).
    The red dashed line indicates chemical accuracy, $0.0016$.}
    \label{fig:truncation error scan}
\end{figure}

\section{Detailed cost estimates and compilation}\label{sec:gate count}

In this part of the Appendix we are giving a detailed account of how the gate counts are estimated for ground state energy estimation using the deterministic, randomized, or partially randomized simulation methods.
For each method we are estimating the number of gate $G$, where we can either measure the number of (non-Clifford) Toffoli gates, or the number of two-qubit gates.
We are given a Hamiltonian $H$, and wish to estimate the ground state energy to target precision $\eps$ (root mean square error).
In these estimates we assume access to an exact eigenstate of $H$, but as discussed this assumption can easily be relaxed.
The total cost depends on various parameters, some of which depend on $H$ and some of which can be optimized.
We give an overview of the relevant parameters in \cref{table:parameters qpe}.

\begin{table}
    \centering
    \begin{tabular}{|c|l|}
        \hline
        Parameter                      & Notation                                                                             \\ \hline
        \multicolumn{2}{|c|}{Phase estimation parameters}                                                                     \\ \hline
        $\eps$                         & Target precision for the energy.                                                     \\
        \hline
        $M$                            & Number of rounds in phase estimation.                                                \\
        $N_m$                          & Number of samples in round $m$, $N_m = N_M + D(M - m)$.                              \\
        \hline
        \multicolumn{2}{|c|}{Deterministic product formulas}                                                                  \\ \hline
        $L$                            & Number of terms in the Hamiltonian.                                                  \\
        $\eps_{\qpe}, \eps_{\trotter}$ & Dividing error budget between phase estimation and Trotter error.                    \\
        $\delta$                       & Trotter step size.                                                                   \\
        $p$                            & Order product formula.                                                               \\
        $N_{\stage}$                   & Number of stages product formula.                                                    \\
        $C_{\gs}$                      & Trotter error constant for the ground state energy.                                  \\
        $G_{\det}$                     & Number of elementary gates to perform one elementary evolution $\exp(-i\delta H_l)$. \\
        \hline
        \multicolumn{2}{|c|}{Randomized product formulas}                                                                     \\ \hline

        $\lambda$                      & Weight of the Hamiltonian.                                                           \\
        $\tau$                         & Step size                                                                            \\
        $\varphi$                      & Rotation angle $\varphi = \arctan(\tau)$.                                            \\
        $G_{\rand}$                    & Number of elementary gates to time evolve a single Pauli operator $P_l$.             \\
        $B$                            & Damping factor from the randomized product formula.                                  \\
        \hline
        \multicolumn{2}{|c|}{Partially randomized product formulas}                                                           \\ \hline

        $L_D$                          & Number of terms treated deterministically.                                           \\
        $\lambda_R$                    & Weight of the terms treated randomly.                                                \\
        $\kappa$                       & Scaling factor for the number of rotations for randomized product formula.           \\
        \hline
    \end{tabular}
    \caption{Relevant parameters for gate counts for of phase estimation of ground state energies using deterministic and (partially) randomized product formulas.}
    \label{table:parameters qpe}
\end{table}

\subsection{Gate counts for Trotterization}
The Hamiltonian has a decomposition $H = \sum_{l = 1}^L H_l$.
We use a product formula of order $p$.
For most of the resource estimates we use second order ($p = 2$).
As explained in \cref{sec:trotter}, we perform phase estimation on the Trotter unitary $\trot_{p}(\delta) = e^{-i\delta \tilde H} \approx e^{-i\delta H}$ for this task, computing the ground  energy $\tilde E_0$ of the effective Hamiltonian $\tilde H$.
The error between the ground states of $H$ and $\tilde H$ behaves as $\eps_{\trotter} = \abs{E_0 - \tilde E_0} \leq C_{\gs} \delta^{p}$.
Additionally, when applying phase estimation, we obtain $\tilde E_0$ with mean square error $\eps_{\qpe}$.
This error is unbiased and independent of the Trotter error, which means that in order get an estimate of precision $\eps$ for $E_0$ we should have
\begin{align}
    \eps^2 = \eps_{\qpe}^2 + \eps_{\trotter}^2.
    \label{eq:incoherent error trotter}
\end{align}
To reach precision $\eps_{\qpe}$ for $\tilde E_0$, we need to estimate the phase of $\trot_{p}(\delta)$ to precision $\eps_{\qpe} \delta$. We estimated in \cref{sec:rpe} that robust phase estimation needs on average $5\pi/(\eps_{\qpe}\delta)$ implementations of $\trot_{p}(\delta)$ to do so. 
The number of rounds in the phase estimation is
\begin{align}\label{eq:max rounds}
    M = \left\lceil\log_2\left(\frac{0.08\pi}{\eps_{\qpe}\delta}\right)\right\rceil.
\end{align}

Let $G_{\det}$ be the average gate cost to implement a single evolution along the $H_l$, then we see that the total gate cost is given by
\begin{align}
    G = 5\pi N_{\stage}L G_{\det}\frac{C_{\gs}^{1/p}}{\eps_{\qpe}\eps_{\trotter}^{1/p}}.
\end{align}
Minimizing the cost under the error budget in \cref{eq:incoherent error trotter} results in $\eps_{\qpe} = \eps \sqrt{\frac{p}{p+1}}$, and trotter step size and optimal total cost
\begin{align}
    \delta & = \left(\frac{\eps}{C_{\gs}\sqrt{p+1}}\right)^{1/p}\, ,    \label{eq:trotter step size}      \\
    G      & = 5\pi N_{\stage}L G_{\det}\sqrt{\frac{(p+1)^{1+1/p}}{p}}\frac{C_{\gs}^{1/p}}{\eps^{1+1/p}}.
    \label{eq:trotter cost}
\end{align}
We can reduce the cost by factor 2 using the method in \cref{sec:halving time}.
At this point, the cost is agnostic of the gate set used.
For the Toffoli count we additionally need to approximate with a gate set of Cliffords and Toffolis, and consider the resulting synthesis error.
Here we follow the analysis of \cite{reiher2017elucidating,kivlichan2020improved}.
We decompose circuits into Cliffords plus arbitrary single-qubit rotations, which can be synthesized to error $\eps'$ at a cost of
\begin{align}
    G_{\text{rot}} = 1.14 \log_2\left(\frac{1}{\eps'}\right) + 9.2
\end{align}
T gates \cite{bocharov2015efficient}.
We convert from Toffoli to T gates using 1 Toffoli gate for every 2 T gates \cite{gidney2019efficient}.
Let $n_{\text{rot}}$ be the number of single qubit rotations per trotter stage, each approximated to error $\eps'$.
We then want to bound the synthesis error $\eps_{\synth}$, i.e. the difference in the ground state energies of the exact effective Hamiltonian and the effective Hamiltonian with imperfectly synthesized single-qubit rotations, denoted by $H'$.
One can bound \cite{reiher2017elucidating}
\begin{align}
    \norm{e^{-i\delta \tilde H} - e^{-i\delta H'}} \leq N_{\stage} n_{\rot}\eps',
\end{align}
so to first order in $\delta$ the ground state error is bounded by
\begin{align}
    \abs{\tilde E_0 - E_0'} \leq \norm{\tilde H - H'} \leq N_{\stage} n_{\rot}\eps'\delta^{-1}.
\end{align}
Consequently, to keep the synthesis error below $\eps_{\synth}$, we choose
\begin{align}
    \eps' = \frac{\delta \eps_{\synth}}{N_{\stage}n_{\rot}} = \frac{\eps_{\synth}}{N_{\stage}n_{\rot}}\left(\frac{\eps}{C_{\gs}}\sqrt{\frac{p}{p+1}}\right)^{1/p}.
\end{align}
We do not optimize the total error budget with respect to $\eps_{\synth}$, instead, when computing the Toffoli count, we allocate $\eps = 1.5\,$mH and $\eps_{\synth} = 0.1\,$mH.
Alternatively, if we measure two-qubit gates, we ignore the synthesis error, we let $\eps=1.6\,$mH and set $G_{\det}$ to be the number of 2-qubit gates.

So far, we have assumed a general Hamiltonian $H = \sum_l H_l$.
We now specialize to specific representations.

\subsubsection{Pauli Hamiltonian}
We start with the case where the Hamiltonian has a representation as a linear combination of Pauli operators,
where $H_l= h_l P_l$, as in \cref{eq:pauli rep appendix}.
The time evolution of a single term is then a Pauli rotation, $V_l(\delta h_l) = \exp(-i\delta h_l P_l)$.
This can be implemented using a single-qubit rotation and $2(\abs{P_l}-1)$ \cnot gates to implement.
Here we use $\abs{P_l}$ to denote the number of qubits on which $P_l$ acts nontrivially.

The number of Pauli rotations per stage simply equals the number of terms $L$.
The worst-case scaling of the number of terms with the number of orbitals is $L = \bigO(N^4)$, but in practice many terms are sufficiently small to be truncated (as discussed in \cref{sec:electronic structure problem}) and $H$ is much more sparse.
This depends on the choice of single orbital basis as well.
We find that minimizing the weight $\lambda$ through orbital rotations and symmetry shifts typically also leads to more sparse representations of $H$.
For our estimates, we truncate the smallest coefficients of $H$, such that the total weight $\sum_{l > L'} \abs{h_l}$ of truncated terms is at most a small threshold $10^{-3}$.

When counting two-qubit gates we can contract the innermost \cnot gates together with the single qubit rotation into a single 2-qubit gate, so the number of two-qubit gates per Pauli rotation is $2\abs{P_l}-3$.
Therefore the number of 2-qubit gates per operation is $G_{\det} = 2\abs{\bar{P}}-3$, where $\abs{\bar{P}}$ denotes the average Pauli weight.
Reductions in two-qubit gate cost are achieved by reducing $L$ or by reducing the average Pauli weight.
Again, minimizing $\lambda$ also reduces $L$.
The average Pauli weight depends on the choice of fermion-to-qubit mapping.
On $N$ spin orbitals the Bravyi-Kitaev mapping gives an average Pauli weight of $\bigO(\log(N))$, compared to $\bigO(N)$ for the Jordan-Wigner and parity mappings.
We use a symmetry-conserving Bravyi-Kitaev mapping, which has the additional advantage of reducing the number of qubits by 2 by taking into account spin and particle number conservation \cite{bravyi2017tapering}.
Yet another degree of freedom is the choice of term ordering the ordering in a Trotter stage, which can lead to significant cancellations of 2-qubit gates \cite{hastingsImprovingQuantumAlgorithms2014}.
We use the lexicographic ordering, which has a relatively large number of cancellations \cite{tranterComparisonBravyiKitaev2018}.

\subsubsection{Factorized Hamiltonian}\label{sec:gate count factorized}
In this case, we have a Hamiltonian of the form
\begin{align}
    H = T + \sum_{l=1}^L (U^{(l)} )^\dagger V^{(l)} U^{(l)}, \qquad V^{(l)} = \sum_{\sigma} \sum_{p = 1}^{\rho_l} \lambda_p^{(l)} n_{p\sigma}.
\end{align}
We first discuss the two-body terms.
In order to implement one stage in a Trotter step, we need to implement $L$ basis changes $U^{(l)} (U^{(l-1)})^\dagger$ (which can be combined into a single orbital basis change) and evolutions of the form $\exp(-i\delta V^{(l)})$.
This can be done \cite{motta2021low,kivlichanQuantumSimulationElectronic2018} using fermionic swap networks. The orbital basis rotation $U^{(l)}$ for both spin values can be implemented with $2(\binom{N}{2} - \binom{N-\rho_l}{2}) = \bigO(\rho_l N)$ Givens rotations (which are two-qubit gates, and can be implemented using two arbitrary single-qubit gates), and $\exp(-i\delta V^{(l)})$ can be implemented using $\binom{2\rho_l}{2}$ two-qubit gates (requiring one arbitrary single-qubit gate).
This gives a total number of single-qubit rotations
\begin{align}
    \sum_{l=1}^L 4\left(\binom{N}{2} -  \binom{N-\rho_l}{2}\right) + \binom{2\rho_l}{2} = \sum_{l=1}^N 4N(\rho_l - 1) + \rho_l.
\end{align}
The evolution along $T$ is a single orbital basis change (requiring $2 \binom{N}{2}$ Givens rotations).
For this approach to be useful, one has to truncate small $\lambda_p^{(l)}$. For \cref{fig:gate counts factorized hchain}, where we compute the cost of this approach for the hydrogen chain, we truncate such that for each $l$, if we truncate $\lambda_p^{(l)}$ for $p > p_l$, we have $\sum_{p > p_l} \abs{\lambda_p^{(l)}} \leq \eps'$, where we choose $\eps' = 10^{-3}$.

\subsection{Gate counts for randomized product formulas}\label{sec:gate count randomized}
For randomized product formulas, the Hamiltonian is assumed to be given in a Pauli representation. The cost is determined by the weight $\lambda$ and the target precision $\eps$.
We are given a Pauli Hamiltonian with some value of $\lambda$, and target precision $\eps$.
As discussed in \cref{sec:rpe}, we have to perform time simulation for $t = 2^m$ and take $N_m$ samples from a Hadamard test, with $m = 1, \dots, M$ for $M = \ceil{\log_2(\lambda \eps^{1})}$, which gives a number of Pauli rotations as in \cref{eq:total rotations randomized} as
\begin{align}
    r = \frac{16.3\lambda^2}{\eps^2}
\end{align}
Note that for qDRIFT the number of rotations may be reduced by a factor of two (see the discussion at the end of \cref{sec:rigorous bounds rpe}).
Counting two-qubit gates is similar to the previous section.
When counting Toffoli gates it turns out that it is helpful that all (or most) rotations are over the same angle, leading to a significantly smaller overhead in converting the number of rotations to the number of Toffoli gates.

The below approach combines ideas from \cite{gidney2018halving,kivlichan2020improved,sanders2020compilation,wan2022randomized} in the analysis of the Toffoli gate requirements.
To begin with, for qDRIFT, in every circuit \emph{all} rotations are over the same angle.
For RTE, the angle $\varphi_n$ depends on the order $n$ sampled of the term in the Taylor series.
For small $\tau$, the probability of sampling a term which is not of the lowest order scales with $\bigO(\tau^2)$.
When performing time simulation for time $t$ with $\tau = \Omega(t^{-1})$ and number of steps $r = \bigO(t^2)$ this means that we only expect $\bigO(1)$ Pauli rotations with an angle different than $\varphi_0$.
This means that we expect long sequences of equiangular Pauli rotations which dominate the cost.

Additionally, any individual Pauli operator $P_l$ with nonzero $h_l$ in the Hamiltonian commutes with \emph{most} of the other terms; it only anticommutes with $\bigO(N^3)$ out of $\bigO(N^4)$ terms.
For this reason, one expects relatively long sequences of commuting Pauli rotations in both qDRIFT and RTE, as observed in \cite{wan2022randomized} (if the terms were selected uniformly, one would have average length $\bigO(\sqrt{N})$).
Given such a sequence, $e^{i \varphi P_{l_1}} \cdots e^{i \varphi P_{l_K}}$ of length $K < 2N$, we may apply a Clifford to map
\begin{align}
    \{P_{l_1}, \dots, P_{l_K}\} \to \{Z_1, \dots, Z_K\}
\end{align}
to single-qubit $Z$-rotations on the first $K$ qubits.
We may now use Hamming weight phasing \cite{gidney2018halving,kivlichan2020improved,campbell2021early} to implement these rotations, as suggested for RTE in \cite{wan2022randomized}.

To reduce the cost even further, we can use a single catalyst phase state to implement the rotations \cite{kitaev1995quantum,sanders2020compilation,lee2021even}.
We may choose the step size in the randomized product formula to our liking; in particular we may choose such that the rotations are of angle $2^{-J} \pi$ for integer $J$.
Since we also want to choose the angle such that $\tau = \bigO(\eps \lambda^{-1})$, we take $J = \ceil{\log_2 (\lambda \pi^{-1}\eps^{-1})}$.
Let
\begin{align}
    \ket{\phi} = \frac{1}{2^{J/2}} (\ket{0} + e^{2 \pi i 2^{-J}} \ket{1})^{\ot J} = \frac{1}{2^{J/2}} \sum_{j=0}^{2^J - 1} e^{2\pi i j 2^{-J}} \ket{j}
\end{align}
be a phase state on $J$ qubits.
At the start of the computation, we initialize the ancilla phase estimation qubit, together with $J$ qubits, in the state
\begin{align}
    \frac{1}{\sqrt{2}} \left(\ket{0} \ket{+}^{\ot J} + \ket{1} \ket{\phi} \right) = (\id \ot \mathrm{QFT}) \frac{1}{\sqrt2}(\ket{0} \ket{0} + \ket{1} \ket{1})
\end{align}
where we label the basis for the $J$ qubits by $\{0,1,\dots,2^J -1\}$.
It (once) requires T gates or Toffoli gates to prepare this state (a constant cost which is negligible compared to the other contributions).
The phase state $\ket{\phi}$ is such that if we add another register with $W \leq J$, and implement addition $\ket{w} \ket{j} \mapsto \ket{w}\ket{j + w}$, then
\begin{align}
    \ket{w} \ket{\phi} \mapsto e^{-2\pi i w 2^{-J}} \ket{w} \ket{\phi}
\end{align}
We can use $J$ Toffoli gates, together with $J$ ancilla qubits to perform the addition.
The state $\ket{+}^{\ot J}$ is invariant under addition, so we get that
\begin{align}
    \alpha \ket{0} \ket{w} \ket{+}^{\ot J} + \beta \ket{1} \ket{w} \ket{\phi} & \mapsto \alpha \ket{0} \ket{w} \ket{+}^{\ot J} + \beta e^{-2\pi i w 2^{-J}} \ket{1} \ket{w} \ket{\phi}
\end{align}
To implement $\prod_{k=1}^K e^{i \pi 2^{-J} Z_k}$, on the first $K$ qubits we first compute the Hamming weight on a register of $ W = \ceil{\log_2(K + 1)}$ qubits.
We then apply the above trick to apply (controlled on the QPE ancilla) $e^{i w \pi 2^{-J}}$ to Hamming weight $w$, and we uncompute the Hamming weight.
Note that if $\ket{k}$ has Hamming weight $w$, then
\begin{align}
    \left(\prod_{k=1}^K e^{i \pi 2^{-J} Z_k}\right) \ket{k} = e^{- i \pi(2w - K)2^{-J}} \ket{k}
\end{align}
so we have implemented the correct transformation up to a global phase $e^{-i\pi K 2^{-J}}$ which we can account for in classical postprocessing of the Hadamard test outcomes.

Computing and uncomputing the Hamming weight requires at most $K - 1$ Toffoli gates applying the phase to the Hamming weight requires $J - 1$ Toffoli gates.
This means we use $1 + (J - 2)/K$ Toffoli gates per controlled Pauli rotation.
The adders require respectively $(K-1)$ and $(J-1)$ ancilla qubits, giving in total $K + 2J - 2$ ancilla qubits.

For example, for FeMoco, we may take $K = 10$ (choosing $K$ not too large, in order to make sure that with high probability we get sequences of commuting terms of length $K$) and $J = 17$, giving an average of $2.5$ Toffoli gates per controlled Pauli rotation, and using $42$ ancilla qubits.

\subsection{Gate counts for partially randomized product formulas}\label{sec:gate counts partrand}

We assume a given decomposition of $H$ in $L_D$ deterministic terms, and $H_R$ with weight $\lambda_R$
\begin{align}
    H = \sum_{l=1}^{L_D} H_l + H_R.
\end{align}
The Trotter error concerns the decomposition into $H$ as a sum of the $H_l$ and treating $H_R$ as a single term.
For our estimates we use the second order Trotter-Suzuki product formula (so $p = 2$ with $N_{\stage} = 2$).
It is straightforward to extend to higher orders.
Let $G_{\det}$ and $G_{\rand}$ be the (average) number of Toffolis used per unitary $\exp(-i\delta H_l)$ in the deterministic part, and per Pauli rotation for the randomized part of the Hamiltonian simulation.
These can be different.

We choose the parameters $\eps_{\qpe}$ and $\eps_{\trotter}$, the time step $\delta$ for the deterministic product formula, and the number of rounds $M$ for phase estimation as in \cref{eq:incoherent error trotter}, \cref{eq:max rounds} and \cref{eq:trotter step size}.
For the compilation of the randomized part, it is convenient to choose time step $\tau$ such that we can use Pauli rotations over angles $\pi 2^{-J}$.
For $m = 1,\dots, M$ we use $2N_m$ times the Hadamard tests with the partially randomized product formula to estimate (see \cref{lem:composite simulation})
\begin{align}
    B^{-1} \bra{\psi} \trot_{2}(\delta)^{2^m} \ket{\psi}.
\end{align}
As explained at the end of Section \ref{sec:rigorous bounds rpe}, for RTE alone it is optimal to choose a total of $r = 2t^2 = 2\lambda_R^2 \delta^2 2^{2m}$ Pauli rotations.
For the partially randomized method it is advantageous to allow for $r$ to vary, so we introduce a factor $\kappa\geq 0$ and set $r = \kappa\lambda_R^2 \delta^2 2^{2m}$, which gives a damping factor of $B \leq e^{1/\kappa}$.
The number of samples $N_m$ scales with the damping factor $B^2$, as discussed in \cref{sec:rpe}.
We use the observation in \cref{sec:halving time} to reduce the evolution time of the deterministic part by a factor of 2, so the deterministic product formula needs $2^{m-1}$ rather than $2^m$ applications of $\trot_2(\delta)$.
This gives a total number of gates
\begin{align}
    G & = \sum_{m=0}^M 2N_m \left(G_{\det} N_{\stage} L_D 2^{m-1} + G_{\rand} \kappa\lambda_R^2 \delta^2 2^{2m}\right).
\end{align}
Note that if we take $L_D = 0$, this reduces (up to possibly different rounding) to the cost of phase estimation to precision $\eps_{\qpe}$ using a randomized product formula, independent of the choice of $\delta$.
If we choose $N_m =  N_M + \sloperpe (M - m) = B^2(11 + 4(M - m)) = e^{2/\kappa} (11 + 4(M - m))$, the total number gates is bounded as in \cref{eq:upper bound time evolution} and \cref{eq:total cost rpe randomized} by
\begin{align}\label{eq:total gates partially random}
    \begin{split}
        G & \leq 2 G_{\det} N_{\stage} L_D (N_M + \sloperpe) 2^M + 8 G_{\rand} \left(\frac{N_M}{3} + \frac{\sloperpe}{9}\right) \kappa \lambda_R^2 \delta^2 2^{2M}                \\
          & = 30 G_{\det} N_{\stage} L_D e^{2/\kappa} \frac{0.1 \pi}{\delta \eps_{\qpe}} + \frac{296}{9} G_{\rand} \kappa e^{2/\kappa} \frac{(0.1 \pi\lambda_R)^2}{\eps_{\qpe}^2}
    \end{split}
\end{align}
The optimal choice of $\kappa$ solves a quadratic equation.
Again, we require the total error to be bounded by
\begin{align}
    \eps^2 \leq \eps_{\qpe}^2 + \eps_{\trotter}^2 = \eps_{\qpe}^2 + C_{\gs}^2 \delta^4
\end{align}
such that we substitute $\eps_{\qpe} = \sqrt{\eps^2 - C_{\gs}^2 \delta^4}$ in \cref{eq:total gates partially random}.
Finally, we find the minimal cost by optimizing for $L_D$ (which determines $\lambda_R$).
In the following we discuss two strategies for decomposing the Hamiltonian into deterministic and randomized parts, and the associated gate counts.

\subsubsection{Pauli decomposition}
A natural scheme for the partially randomized method is to start from the Hamiltonian in the Pauli sum decomposition after the fermion-to-qubit mapping, $H=\sum_{l=1}^L h_l P_l$.
We decompose into $H_D$ and $H_R$ by assigning the $L_D$ terms with largest weight $h_l$ to $H_D$, which makes it straightforward to find the optimal choice of $L_D$ based on \cref{eq:total gates partially random}.

If we count two-qubit gates, as discussed for deterministic product formulas, the terms in the deterministic part can be ordered lexicographically in order to achieve cancellations of \cnot gates.

If we count Toffoli gates we can again apply the modified Hamming weight phasing to the randomized part, as presented in \cref{sec:gate count randomized}.
We can additionally leverage this to reduce the gate synthesis cost of the deterministic part.
The idea is that while in principle the deterministic part consists of terms $h_l P_l$ with arbitrary real coefficients $h_l$, we may round $h_l$ to a desired value and assign the remainder to $H_R$.
This slightly increases the value of $\lambda_R$, but may significantly reduce the synthesis cost of time evolving along $P_l$ in the deterministic part.
This approach is similar to previous methods using randomization for compiling arbitrary-angle rotations \cite{campbell2017shorter,kliuchnikov2023shorter,koczor2024probabilistic}.
We obtain further reductions by expanding the rounded $h_l$ bitwise, and reorder terms so we can apply Hamming weight phasing.

We now explain this idea in detail.
Consider the binary expansion of the rotation angle of the $l$-th term
\begin{align}
    \frac{h_l\delta}{\pi} = \sum_{J \geq 1} x_{l_i} 2^{-J} \quad \text{with} \quad x_{l_J} \in \{0,1\}.
\end{align}
Grouping bits of the same significance and truncating to $n_{\max}$ bits of precision gives
\begin{align}
    \delta H = \sum_{J=1}^{n_{bit}} \underbrace{\sum_{l=1}^{L_J}\pi 2^{-J}P_{l}}_{\delta H_J} + \delta H_{\text{rem}},
\end{align}
where $H_{\text{rem}}$ contains the remaining higher bits of precision.
For the deterministic product formula, we now apply all the terms in $H_J$ consecutively.
This consists of $L_J$ Pauli rotations with the same angle, so we can use Hamming weight phasing.
Moreover, for the deterministic part we are free to choose the order of the Pauli rotations, and since any individual term anticommutes with $\bigO(N^3)$ out of the total $\bigO(N^4)$ terms, one can group into long sequences of commuting terms.
While decomposing into the $H_J$ increases the number of terms, they are cheaper to implement, reducing the total cost.

We now comment on two subtleties which influence this approach.
Firstly, one needs to group the terms in each $H_J$ into commuting groups.
Such groupings have been studied extensively for the sake of reducing measurement overheads when measuring energies \cite{yen2020measuring,yen2023deterministic}.
While one can typically find quite large groupings, this can be computationally intensive. For our compilation purposes, if we have groups of $K$ commuting Pauli operators, implementing a rotation over angle $\pi 2^{-J}$ requires $1 + (J-2)/K$ Toffoli gates per Pauli. This means that groups of moderate size suffice to have a low overhead, which can be constructed using simple heuristics.
As a lower bound, which we use for our numerical estimates, when considering \cref{eq:hamiltonian majorana}, it is not hard to see that it is possible to group into groups of 12 commuting Pauli operators.
The second subtlety is that the procedure of decomposing each term into several bits and reordering the terms will in principle affect the Trotter error.
The effect of ordering on Trotter error has been studied numerically for molecular electronic structure Hamiltonians \cite{tranter2019ordering}.
We expect that decomposing by bits of precision will not increase the Trotter error (since it decomposes into shorter time evolutions, so this should be beneficial).
For simplicity we assume that the Trotter error is the same as for the fully deterministic product formula and leave a detailed numerical exploration to future work.
For FeMoco, we find a total cost of $1.5 \times 10^{12}$ Toffoli gates using this strategy, as shown in \cref{fig:femoco intro}.
Note that this approach increases the number of ancilla qubits, since we need to prepare phase states for different bits of precision $J$.

\subsubsection{Scaling for the hydrogen chain}
As a benchmark system we use the hydrogen chain. The result for $\eps = 0.0016$ is shown in the main text in \cref{fig:hchain cost}.
We can also vary over $\eps$ as well, we can fit an expression of the form $C N^a \eps^{-b}$ to the resulting cost.
In \cref{fig:asymptotics hchain} we show scalings with respect to $N$ and $\eps$ (keeping the other variable fixed).
We find a fit for the Toffoli cost of $\bigO(N^2 \eps^{-1.7})$.

\begin{figure}[ht]
    \centering
    \includegraphics[width=0.45\linewidth]{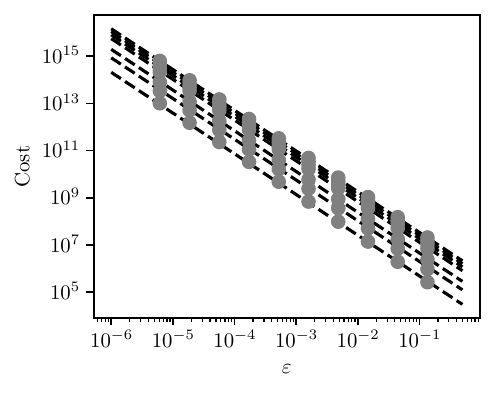} \hspace{0.5cm}
    \includegraphics[width=0.45\linewidth]{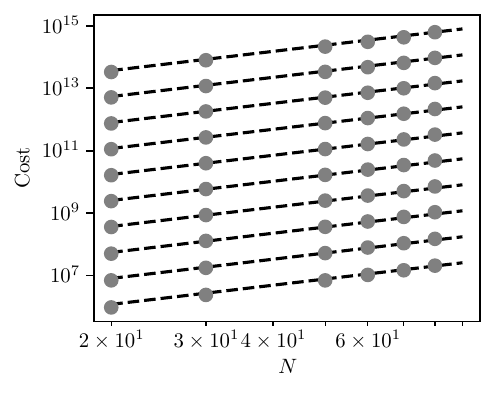}
    \caption{Toffoli cost for partially randomized product formulas with respect to varying $N$ and $\eps$, with a fit to a scaling $\bigO(N^2 \eps^{-1.7})$.}
    \label{fig:asymptotics hchain}
\end{figure}

In \cref{fig:hchain 2qubit} we show the number of \cnot gates required for phase estimation using deterministic, randomized and partially randomized product formulas. Note that this has a different scaling than the Toffoli count, since the number of two-qubit gates required per Pauli rotation scales with $N$. As for the Toffoli gate count, we observe a significant improvement using partial randomization.

\begin{figure}[ht]
    \centering
    \includegraphics[width=0.45\linewidth]{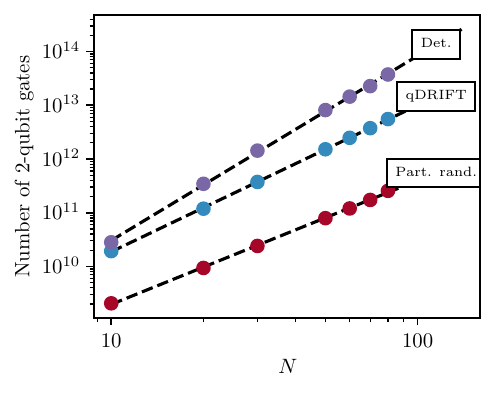}
    \caption{Two-qubit gate count in terms of \cnot required for the hydrogen chain for precision $\eps = 0.0016$.}
    \label{fig:hchain 2qubit}
\end{figure}

\subsubsection{Double factorized decomposition}
If we truncate to $L_D$ terms in the factorization, where the $l$-th term has rank at most $\rho_l$, then for the deterministic part we have cost as described in \cref{sec:gate count factorized}.
The cost of the randomized part remains the same.
The performance of this approach for the hydrogen chain is shown in \cref{fig:gate counts factorized hchain}.

\end{document}